\documentclass[peerreview,onecolumn]{IEEEtran}


\usepackage{stackengine}
  \usepackage{tikz}
\usetikzlibrary{arrows,%
                shapes,positioning}

\usetikzlibrary{arrows,%
                petri,%
                topaths}%
\usepackage{tkz-berge}

\usepackage{array,tabularx, float}

\usepackage{babel}
\usepackage{verbatim}
\usepackage{psfrag}
\usepackage{graphicx}
\usepackage{subfigure}
\usepackage{amsmath,amssymb,lineno,enumerate,amsthm,amsfonts}

\newcommand{\FF}{\mathbb F}

\def\cC{\mathcal C}

\def\cF{\mathcal F}

\def\cI{\mathcal I}

\def\cJ{\mathcal J}

\def\cL{\mathcal L}
\def\cM{\mathcal M}
\def\cN{\mathcal N}

\def\cS{\mathcal S}

\def\cW{\mathcal W}
\def\cX{\mathcal X}
\def\cY{\mathcal Y}
\def\cZ{\mathcal Z}

\def\min{{\rm min}}

\def\dim{\mbox{\rm dim}}




\newcommand{\ga}{\alpha}
\newcommand{\gb}{\beta}

\newcommand{\gd}{\delta}
\newcommand{\gep}{\varepsilon}

\newcommand{\gl}{\lambda}
\newcommand{\gk}{\kappa}



\newcommand{\rk}{{\rm{rank}}}


\newcommand{\fq}{{\mathbb F}_q}
\newcommand{\rank}{{\mathrm{rk}}}
\newcommand{\ho}{{\mathrm{Hom}}}
\newcommand\qbin[3]{\left[\begin{matrix} #1 \\ #2 \end{matrix} \right]_{#3}}

\newcommand{\be}{{\bf e}}

\newtheorem{theorem}{Theorem}[section]
\newtheorem{proposition}[theorem]{Proposition}
\newtheorem{lemma}[theorem]{Lemma}
\newtheorem{corollary}[theorem]{Corollary}

\newtheorem{remark}[theorem]{Remark}

{\theoremstyle{definition}
\newtheorem{definition}[theorem]{Definition}
\newtheorem{example}[theorem]{Example}

\newtheorem*{proposition*}{Proposition}
\newtheorem*{corollary*}{Corollary}
\newtheorem*{lemma*}{Lemma}

\usepackage{color}

\def\INSe#1{{\color{black}#1}}
\def\INSm#1{{\color{black}#1}}
\ifCLASSINFOpdf
\else
\fi
\hyphenation{op-tical net-works semi-conduc-tor}

%
%
\begin{document}
%
\title{Error Correction for Index Coding With Coded Side Information}
%
%
%

\author{Eimear~Byrne,
        and~Marco~Calderini. 
\thanks{School of Mathematical Sciences, 
University College Dublin, Ireland.}
\thanks{e-mail: ebyrne@ucd.ie}
\thanks{Research supported by ESF COST Action IC1104}
\thanks{Department of Mathematics,
 University of Trento, Italy.}
\thanks{email: marco.calderini@unitn.it}
\thanks{Research supported by ESF COST Action IC1104}
\thanks{Manuscript received MONTH, YEAR.}
\thanks{Copyright (c) 2013 IEEE. Personal use of this material is permitted. However, permission to use this material for any other purposes must be obtained from the IEEE by sending a request to pubs-permissions@ieee.org.}
}
%
%

\markboth{Journal of \LaTeX\ Class Files,~Vol.~??, No.~?, Month~YEAR}%
{Shell \MakeLowercase{\textit{et al.}}: Bare Demo of IEEEtran.cls for Journals}
%



\maketitle

\begin{abstract}
  \INSe{Index coding is a source coding problem in which a broadcaster seeks to meet the different demands of several users, each of whom is assumed to have some prior information on the data held by the sender. A well-known application is to satellite communications, as described in one of the earliest papers on the subject \cite{BK}. It is readily seen that if the sender has knowledge of its clients' requests and their {\em side-information} sets, then the number of packet transmissions required to satisfy all users' demands can be greatly reduced if the data is encoded before sending. The collection of side-information indices as well as the indices of the requested data is described as an {\em instance} $\cI$ of the index coding with side-information (ICSI) problem. The encoding function is called the index code of $\cI$, and the number of transmissions employed by the code is referred to as its {\em length}. The main ICSI problem is to determine the optimal length of an index code for and instance $\cI$. As this number is hard to compute, bounds approximating it are sought, as are algorithms to compute efficient index codes. These questions have been addressed by several authors \cite{alon,BBJK,BBJK11,BKL,SDL,tehrani2012bipartite}, often taking a graph-theoretic approach. Two interesting generalizations of the problem that have appeared in the literature are the subject of this work. The first of these is the case of index coding with {\em coded side information} \cite{dai,shum}, in which linear combinations of the source data are both requested by and held as users' side-information. This generalization has applications, for example, to relay channels and necessitates algebraic rather than combinatorial methods. The second is the introduction of error-correction in the problem, in which the broadcast channel is subject to noise \cite{Son1}.
  In this paper we characterize the optimal length of a scalar or vector linear index code with coded side information (ICCSI) over a finite field in terms of a generalized {\em min-rank} and give bounds on this number based on constructions of random codes for an arbitrary instance. We furthermore consider the length of an optimal $\delta$-error correcting code for an instance of the ICCSI problem and obtain bounds analogous to those described in \cite{Son1}, both for the Hamming metric and for rank-metric errors. We describe decoding algorithms for both categories of errors based on those given in \cite{Son1,sil}}.

\end{abstract}

\begin{IEEEkeywords}
Index coding, min-rank, error correction, minimum distance, network coding, coded side information. 
\end{IEEEkeywords}
%
%
\IEEEpeerreviewmaketitle
%
%

\section{Introduction}
\label{intro}

\INSe{The problem of index coding with side information (ICSI) was introduced by Birk and Kol in \cite{BK} under the term {\em informed source coding on demand}. In \cite{BBJK} the authors explicitly refer to the problem as {\em index coding}. This topic is motivated by applications in broadcast communications such as audio and video on-demand, content delivery, and wireless networking. It relates to a problem of source coding with {\em side information}, in which receivers have partial information about the data to be sent prior to its broadcast. The problem for the sender is to exploit knowledge of the users' side information to encode data optimally, that is to reduce the overall length of the encoding, or equivalently, the number of transmitted packets. The ICSI problem has since become a subject of several studies and generalizations \cite{alon,BBJK,BBJK11,RSG,Son1,Son,SDL}.} 

The scenario of the ICSI problem is the following. A server (sender) has to broadcast some data to a set of clients (receivers or users), with possibly different messages requested by different clients. \INSe{Before the transmission starts, each receiver already has some data in its possession, its cached packets, called its side-information. These packets may be from a previous broadcast, perhaps sent during lighter data traffic periods, or acquired by some other communication. The receivers let the sender know which messages they have, and which they require. The broadcaster can use this information, along with encoding, to reduce the overall number of packet transmissions required to satisfy all the demands of its clients. If the sender has been successful in this endeavour, then the broadcasted data can be utilized by each user, along with its cached packets, in order to decode its own specific demand.} 

The main index coding problem is to determine the minimum number of packet transmissions required by the sender in order to satisfy all users' requests, if encoding of data is permitted. Given an instance of the ICSI problem, Bar-Yossef {\emph{ et al}} \cite{BBJK} proved that finding the best {\em scalar linear} binary index code is equivalent to finding the {\em min-rank} of a graph, which is known to be an NP-hard problem \cite{Pee}. \INSe{The twin problem is to determine an explicit optimal encoding function for an instance. Any encoding function for an instance necessarily gives an upper bound on the optimal length of an index code. There have been a number of papers addressing this aspect of the problem, in fact finding sub-optimal but feasible solutions, using linear programming methods to obtain partitions of the users into solvable subsets. Such solutions involve obtaining {\em clique covers, partial-clique covers, multicast partitions} and some variants of these \cite{BKL,BC,
		SDL,shum,tehrani2012bipartite}. Other than these LP approaches, low-rank matrix completion methods may also be applied. This was considered for index coding over the real numbers in \cite{HeR}.}

\INSe{The importance of the index coding problem can also be seen in its equivalences and connections to other problems, such as {\em network coding}, {\em coded-caching} and {\em interference alignment} \cite{EeRL,RSG,caching1,MCJ}. These equivalences mean that results in index coding have impact in such other areas, and vice versa.} 

\INSe{In \cite{shum,dai} the authors give a generalization of the index coding problem in which both demanded packets and locally cached packets may be linear combinations of some set of data packets. We refer to this as the {\em index coding with coded side information} problem (ICCSI). This represents a significant departure from the ICSI problem in that an ICCSI instance no longer has an obvious association to a graph, digraph or hypergraph, as in the ICSI case. However, as we show here, it turns out that many of the results for index coding have natural extensions in the ICCSI problem.}

\INSm{One motivation for the ICCSI generalization is related to the coded-caching problem. The method in \cite{caching1} uses uncoded cache placement, but the authors give an example to show that coded cache placement performs better in general. In \cite{caching2}, it is shown that in a small cache size regime, when the number of users is not less than the number of files, a scheme based on coded cache placement is optimal. Moreover in \cite{caching3} the authors show that the only way to improve the scheme given in \cite{caching1} is by coded cache placement.}

\INSe{Another motivation is toward applications for wireless networks with relay helper nodes and cloud storage systems (see \cite{dai} and the references therein).}
\INSm{
Consider the example in Table \ref{tab:ex}. We have a scenario with one sender and four receivers $U_1$, $U_2$, $U_3$ and $U_4$. The source node has four packets $X_1, X_2, X_3$ and $X_4$ and for $i=1,...,4$ user $U_i$ wants packet $X_i$. The transmitted packet is subject to independent erasures. It is assumed that there are feedback channels from the users, informing the transmitting node which packets are successfully received. 
At the beginning, in time slot 1, 2, 3 and 4 the source node transmits packets $X_1$, $X_2$, $X_3$ and $X_4$, respectively. After time slot 4 we have the following setting: $U_1$ has packet $X_2$, $U_2$ has packet $X_1$, $U_3$ has packet $X_4$ and  $U_4$ has packet $X_4$. Now from the classical ICSI problem we have that receivers $U_1$ and $U_2$ form a clique, in the associated graph, and then we can satisfy their request sending $X_1+X_2$. Similarly for $U_3$ and $U_4$ we can use $X_3+X_4$. So, the source node in time slot 5 and 6 transmits the coded packet $X_1 + X_2$ and $X_3+X_4$, intending that users receive the respective packet.
However, $U_1$ and $U_2$ receive the coded packet $X_3+X_4$ and $U_3$ and $U_4$ receive $X_1+X_2$.
 At this point if only the uncoded packets in their caches are used, we still need to send two packets. If all packets in their caches are used, the source only needs to transmit one coded packet $X_1 + X_2 + X_3+X_4$ in time slot 7. If all four users can receive this last transmission successfully, then all users can decode the required packets by linearly combining with the packets received earlier.
}
\begin{table}[h!]
\centering
\begin{tabular}{| c |c|c|c|c|c|}
\hline
\stackanchor{Time}{slot} & Packet sent & \stackanchor{Received}{by $U_1$?}& \stackanchor{Received}{by $U_2$?}&\stackanchor{Received}{by $U_3$?}&\stackanchor{Received}{by $U_4$?}\\
\hline
1&$X_1$&no&yes&no&no\\
\hline
2&$X_2$&yes&no&no&no\\
\hline
3&$X_3$&no&no&no&yes\\
\hline
4&$X_4$&no& no&yes&no\\
\hline
5&$X_1+X_2$&no&no &yes&yes\\
\hline
6&$X_4+X_3$&yes&yes &no&no\\
\hline
7&$X_1+X_2+X_3+X_4$&yes&yes &yes&yes\\
\hline

\end{tabular}\label{tab:ex}
\caption{Illustration of utilizing coded packets as side information.}
\end{table}

\begin{figure}[h!]
\centering

\begin{tikzpicture}[->,>=stealth',shorten >=0.2pt,auto,node distance=1.7cm,
  thick,main node/.style={font=\sffamily\bfseries\small},scale=0.61]

  \node[main node] (1)  [label=left:\small$X_1+X_2+X_3+X_4$]{\includegraphics[scale=0.12]{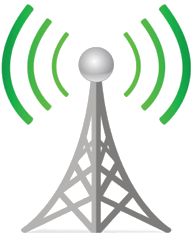}};
  \node[main node] (2) [below left of=1]  [label=below left:\mbox{\small has $X_2$, $X_3+X_4$}] [label=left:\small\mbox{\small\color{blue} wants $X_1$}]{\includegraphics[scale=0.4]{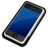}$U_1$};
  
  \node[main node] (3) [below right of=1]  [label=below right:\small\mbox{ \small has $X_1$, $X_3+X_4$}] [label=right:\small\mbox{\small \color{blue} wants $X_2$}]{\includegraphics[scale=0.4]{PhoneIcon.png}$U_2$};
   \node[main node] (4) [above right of=1]  [label=above right:\small\mbox{\small has $X_4$, $X_1+X_2$}] [label=right:\small\mbox{\small \color{blue} wants $X_3$}]{\includegraphics[scale=0.4]{PhoneIcon.png}$U_3$};
      \node[main node] (5) [above left of=1]  [label=above left:\small\mbox{\small has $X_3$, $X_1+X_2$}] [label=left:\small\mbox{\small \color{blue} wants $X_4$}]{\includegraphics[scale=0.4]{PhoneIcon.png}$U_4$};

\end{tikzpicture}\label{fig:ex}
\caption{State of network after time slot 6.}
\end{figure}

\INSe{A second generalization of the ICSI problem was given in \cite{Son1}, where the authors consider error correction. That is, the broadcast channel may be subject to noise during a transmission. Classical coding theory plays a role in several of the results and a number of bounds are given on the optimal length of an {\em error correcting index code} (ECIC) that corrects some $\gd$ Hamming errors. A decoding algorithm based on syndrome decoding is also described. We remark that error-correction for network coding has only been addressed for multicast, where rank-metric and subspace codes are proposed. There are numerous papers on this subject after the seminal works \cite{KK08,SKK08}. Many of these are based on Gabidulin codes \cite{G85}.}

\INSe{\subsection{Our Contribution}}

\INSe{In this paper we develop the theory of index coding further along the lines of these latter mentioned generalizations. That is, we consider the ICCSI problem both in the error-free case and with respect to error correction. We assume that the source data is composed of $n$ blocks of length $t$ over $\fq$ (the finite field of $q$ elements),
and that encoding involves taking $\fq$-linear combinations of the $n$ data blocks.
In particular, we consider both linear and scalar-linear index codes.
We describe a generalized min-rank for the ICCSI problem, which we show gives the optimal length for $\fq$-linear encodings. This quantity is actually shown to be the minimum rank weight of the coset of an $\fq$-linear matrix code determined by an ICCSI instance.  
We characterize necessary and sufficient conditions for a matrix $L$ to {\em realize} an instance of the ICCSI problem and use this to obtain upper bounds on the length of an optimal $\fq$-linear index code with coded side information. The first of these may be viewed as a generalization of the bound obtained by the existence of a partial clique in the {\em side-information graph} of a classical index coding problem. It requires $q$ to be large although it does not rely on the use of a {\em maximum distance separable} (MDS) code. The second of these bounds offers a refinement and relaxation of the constraint on $q$ and is not explicit. Both are based on the probability that an arbitrary matrix realizes a code for an instance.}

Following the work of \cite{Son1}, we consider error correction for the ICCSI problem, both for the Hamming and rank metric and address the question of the main index coding problem for error correcting index codes. \INSe{We establish criteria for error correction for an ICCSI instance and give bounds on the optimal length of a $\delta$-error correcting ECIC, both for the Hamming metric and the rank metric. These results are extensions of the $\gk,\ga$ and sphere-packing and Singleton bounds as described in \cite{Son1}. Some of these also yield further upper bounds on the optimal length of an ICCSI code for the error-free case.}

\INSe{Finally, we outline decoding strategies for linear ECICs for both the rank and Hamming distance. In the first case we extend the syndrome decoding method to correct Hamming errors for index codes given in \cite{Son1} to the ICCSI case. In the second, we show that the simple, low-complexity strategy for additive matrix channels given in \cite{sil} can be applied to correct rank-metric errors, that is to handle error matrices of rank upper bounded by some $\gd$.}

\section{Preliminaries}\label{sec:pre}

We establish notation to be used throughout the paper.
For any positive integer $n$, we let $[n]:=\{1,\dots,n\}$. We write $\fq$ to denote the finite field of order $q$ and use $\fq^{n\times t}$ to denote the vector space of all $n\times t$ matrices over $\fq$. 
Given a matrix $X \in \fq^{n\times t}$ we write ${ X}_i$ and ${ X}^j$ to denote the $i$th row and $j$th column of $X$, respectively. More generally, for subsets ${\mathcal S}\subset [n]$ and ${\mathcal {\INSe{T}}}\subset [t]$ we write ${X}_{\mathcal S}$ and 
${ X}^{\mathcal T}$ to denote the $|{\mathcal S}| \times t$ and $n \times |{\mathcal T}|$ submatrices of $X$ comprised of the rows of $X$ indexed by ${\mathcal S}$ and the columns of $X$ indexed by ${\mathcal T}$ respectively.  We write $\langle X \rangle$ to denote the row space of $X$.

In this work we will consider two distance functions, namely the Hamming metric and the rank metric, over the $\fq$-vector space
$\fq^{n\times t}$.

\INSe{Choosing a basis of the finite field of $q^t$ elements, it is easy to see that $\FF_{q^t}$ and $\fq^{n \times t}$ are isomorphic as $\fq$-vector spaces. Then, given the usual definition of the Hamming distance between a pair of elements $x,y \in \FF_{q^t}^n$:
	$$d_H(x,y):=|\{ i : x_i \neq y_i \} |,$$
	we define the Hamming distance between a pair of matrices $X,Y \in \fq^{n \times t}$ as the number of coordinates in $[n]$ such that
	$X_i \neq Y_i$, so the number of differing rows of $X$ and $Y$.}

For two matrices $A,B \in \fq^{n\times t}$, the rank distance between $A$ and $B$ is the rank of the matrix $A-B$ over $\fq$:
$$d_{\rm{rk}}(A,B)= \rank(A-B). $$
We write $d(A,B)$ to denote either distance function between $A$ and $B$ and     
we write $w(A)$ to denote $d(A,0)$. Given $A$ and a set $\cS$, $d(A,\cS) = \min \{ d(\INSe{A},S) : S \in \cS\}$.
\INSe{In some cases we will specify explicitly which distance function should be understood, otherwise the reader should interpret $d$ or $w$ as denoting either metric.}

\INSe{Recall that for any pair of subspaces $U$ and $V$, their sum is the subspace $U+V = \{u + v : u \in U, v \in V\}$ and we write $U \oplus V$ to denote the direct sum
$ U \oplus V = \{(u,v) : u \in U, v \in V\}$. Moreover $U+V$ and $U \oplus V$ are isomorphic if and only if $U \cap V$ is the trivial space. For arbitrary $x$ in the ambient space, the coset $x+U:=\{ x+u : u \in U \}$.}
\INSe{We use the standard notation $U < V$ to denote that $U$ is a subspace of $V$.}

\section{Index coding with coded side information}\label{sec:iccsi}

In \cite{shum} the authors generalized the index coding problem so that coded packets of a data matrix $X$ may be broadcast or part of a user's cache. 
As mentioned before, this finds applications, in broadcast channels with helper relay nodes. 

\INSe{Before we present the model with coded side information, let us recall the scenario for uncoded side information (see \cite{Son1,Son}).
In that case, the data is a vector $X \in \fq^n$ possessed by a single sender. There are $m$ users or receivers, each of which has an index set $\cX_i \subset [n]$, called its side-information. This indicates that the $i$th user possesses the entries of $X$ indexed by $\cX_i$. The surjection $f:[m] \longrightarrow [n]$ assigns users to indices, indicating that User $i$ wants $X_{f(i)}$ and it is also assumed that $f(i) \notin \cX_i$. The sender is assumed to be informed of the values $f(i)$ and $\cX_i$ of each user.}

We now describe an instance of index coding with coded-side information.
There is a data matrix $X \in \FF_q^{n \times t}$ and a set of $m$ receivers or users. $X$ is thus a list of $n$ blocks of length $t$ over $\fq$.
For each $i \in[m]$, the $i$th user seeks some linear combination of the rows of $X$, say $R_i X$ for some $R_i \in \FF_q^n$. 
\INSe{We'll refer to $R_i$ as the request vector and to $R_iX$ as the request packet of User $i$.}
A user's cache \INSe{denotes locally stored data, which it can freely access. In our model it} is represented by a pair of matrices 
$$V^{(i)} \in \FF_q^{d_i \times n} \text{ and }\Lambda^{(i)}\in \FF_q^{d_i \times t}$$
related by the equation
$$\Lambda^{(i)} = V^{(i)}X.$$ 
While the matrix $X$ may be unknown to User $i$, it is assumed that any vector in the row spaces of $V^{(i)}$ and $\Lambda^{(i)}$ can be generated at the $i$th receiver.
We denote these respective row spaces by $\cX^{(i)}:=\langle V^{(i)} \rangle  $ and ${\mathcal L}^{(i)}:=\langle \Lambda^{(i)} \rangle$ for each $i$.
The side information of the $i$th user is $(\cX^{(i)},{\mathcal L}^{(i)})$.
Similarly, the sender \INSm{$S$} has the pair of row spaces $(\cX^{(S)}, {\mathcal L}^{(S)})$ for matrices 
$$V^{(S)} \in \FF_q^{d_S \times n} \text{ and } \Lambda^{(S)} = V^{(S)}X \in \FF_q^{d_S \times t}$$ 
and does not necessarily possess the matrix $X$ itself.

The $i$th user requests a coded packet $R_iX \in {\mathcal L}^{(S)}$ with $R_i \in \cX^{(S)} \backslash \cX^{(i)}$. 
We denote by $R$ the $m \times n$ matrix over $\fq$ with each $i$th row equal to $R_i$. The matrix $R$ thus represents the requests of all $m$ users.
We denote by 
$$ \cX: = \{ A \in \fq^{m \times n} : A_i \in \cX^{(i)}, i \in [m]\},$$
so that $\cX=\oplus_{i \in [m]} \cX^{(i)}$ 
is the direct sum of the $\cX^{(i)} $ 
as a vector space over $\fq$. 

\noindent\INSe{We define $\tilde{\cX}  := \{Z \in \fq^{m \times n} : Z_i \in \cX^{(S)}\}$, which may be viewed as the direct sum of $m$ copies of $ \cX^{(S)}$}. 

\begin{remark}
	The reader will observe that the classical ICSI problem is indeed a special case of the index coding problem with coded side information (cf. \cite{Son1,Son}). Setting $V^{(S)}$ to be the $n \times n$ identity matrix, $R_i = {\bf e}_{f(i)} \in \fq^n$ 
	and $V^{(i)}$ to be the $d_i \times n$ matrix with rows $V^{(i)}_j = {\bf e}_{i_j}$ for each $i_j \in \cX_i$, yields $\cX^{(i)} = \langle {\bf e}_{j} : j \in \cX_i \rangle$. \INSe{Then User $i$ has the rows of $X$ indexed by $\cX_i$ and requests $X_{f(i)}$.} 
\end{remark}

\begin{remark}
\INSm{The case where the sender does not necessarily possess the matrix $X$ itself can be applied to the {\em broadcast relay channel}, as described in \cite{shum}. The authors consider a channel as in Fig. \ref{fig:relay}, and assume that the relay is close to the users and far away from the source, and in particular that all relay-user links are erasure-free.
\INSe{Each node is assumed to have some storage capacity and stores previously received data in its cache.}	
The packets in the cache of the relay node \INSe{are obtained as previous broadcasts, hence it may contain both} coded and uncoded packets. The relay node, playing the role of the sender, transmits packets obtained by linearly combining the packets in its cache, depending on the requests and coded side information of all users. It seeks to mimimize the total number of broadcasts such that all users' demands are met.}
\end{remark}

\begin{figure}
\centering
\includegraphics[scale=0.15]{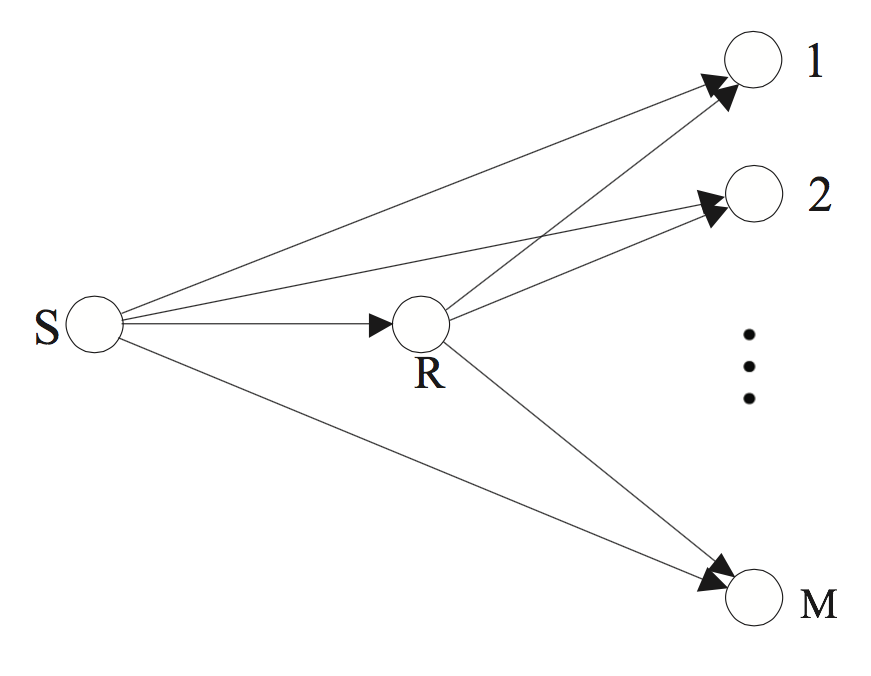}
\caption{Broadcast Relay Channel}\label{fig:relay}
\end{figure}

\begin{definition}

An instance of the Index Coding with Coded Side Information (ICCSI) problem is a list  $\cI=(t,m,n,\cX,\cX^{(S)},R)$
for some positive integers $t,m,n$, subspaces $\cX^{(S)}$ and $\cX^{(i)}$ of $\fq^{n}$ of dimensions $d_S,d_i$ for $i \in [m]$
such that $\cX = \oplus_{i \in [m]} \cX^{(i)}$ and a matrix $R$ in $\tilde{\cX}$.
\end{definition} 

For the remainder, we let $t,m,n,\cX,\cX^{(S)},\tilde{\cX} ,R$ be as described above and we fix $\cI=(t,m,n,\cX,\cX^{(S)},R)$ to denote an instance of the ICCSI problem for these parameters. \INSe{We now define what is meant by an index code for an instance $\cI$: it is essentially a map that encodes any data matrix $X$ in such a way that each user, given its side-information and received transmission, can uniquely determine its requested packet $R_iX\in \fq^t$.}

\begin{definition}
Let $N$ be a positive integer. We say that the map	
$$
E:\fq^{n\times t}\to\fq^{N \times t},
$$
is an $\fq$-code for $\cI$ of length $N$ if for each $i$th receiver, $i \in [m]$ there exists a decoding map

$$
D_i:\fq^{N\times t}\times\cX^{(i)}\to\fq^t,
$$
satisfying
$$
\forall X\in\fq^{n \times t} \,:\, D_i(E({X}),A)=R_i X,
$$
for some vector $A \in \cX^{(i)}$, in which case we say that $E$ is an $\cI$-IC. 
$E$ is called an $\fq$-linear $\cI$-IC if $E(X)=LV^{(S)}X$ for some $L\in\fq^{N \times d_S}$, in which case we say that $L$ represents
the code $E$, \INSe{or that the matrix $L$ realizes $E$}. \INSe{If $t=1$, we say that $L$ represents a scalar linear index code. If $t>1$ we say that the code is vector linear. We write $\cL$ to denote the space $\langle L V^{(S)} \rangle.$ 
	}
\end{definition}

\INSe{An encoding is sought such that the length $N$ of the $\cI$-IC is as small as possible. We shall be principally concerned with $\fq$-linear codes for an instance $\cI$.}
\INSe{We assume that the side information matrices $V^{(i)}$ of all users are known to the sender, along with the demand vectors $R_i$. As we'll see in the next section, this knowledge is sufficient to determine an encoding matrix $L$ for an $\fq$-linear $\cI$-IC. These assumptions are in keeping with those outlined in \cite{BK} for the original {\em informed source coding on demand problem} and are based on the existence of a slow error-free reverse channel allowing communication from users to the sender. We also assume that $LV^{(S)}$ is known to the receivers before the broadcast of the encoded matrix $LV^{(S)}X$. This knowledge, along with the transmission $LV^{(S)}X$ and its own cache data will be used by each $i$th user in order to compute its demand $R_iX$. These assumptions mean that the gains of encoding an ICCSI instance are greater as $t$ increases.} 

\INSe{\subsection{Necessary and Sufficient Conditions for Realization of an $\fq$-Linear $\cI$-IC} }

\INSe{In the following we give necessary and sufficient conditions for a matrix $L$ to represent a linear code of the instance $\cI$ (in fact the sufficiency of the statement of Lemma \ref{lemdecode} has already been noted in \cite{shum}). }

\begin{lemma}\label{lemdecode}
   Let $L\in \fq^{N \times d_S}$. Then $L$ represents an $\fq$-linear $\cI$-IC index code of length $N$ if and only if for each $i \in [m]$,
   $R_i \in \cL + \cX^{(i)}. $
\end{lemma}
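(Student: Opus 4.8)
The plan is to prove both directions by unwinding the definitions of an $\fq$-linear $\cI$-IC and of the decoding maps $D_i$. Recall that with $E(X) = LV^{(S)}X$, the transmission seen by User $i$ is $LV^{(S)}X \in \fq^{N\times t}$, the user additionally has access to any vector in $\cX^{(i)} = \langle V^{(i)}\rangle$, hence to the matrix $V^{(i)}X$, and the user must be able to recover $R_iX$. Since $\cL = \langle LV^{(S)}\rangle$, the rows of $LV^{(S)}X$ span the same row space of ``observed'' combinations of the rows of $X$ as does $\cL$ acting on $X$; more precisely, the set of all matrices $MX$ that User $i$ can compute from its received data and cache is exactly $\{MX : M \in \cL + \cX^{(i)}\}$.

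For the ``if'' direction, suppose $R_i \in \cL + \cX^{(i)}$ for each $i \in [m]$. Then I would write $R_i = P_i L V^{(S)} + Q_i V^{(i)}$ for some matrices $P_i \in \fq^{1\times N}$ and $Q_i \in \fq^{1 \times d_i}$. Define $D_i$ on $(Y, A) \in \fq^{N\times t}\times \cX^{(i)}$ by first extracting from $A$ (which the user can choose as $A = Q_i V^{(i)} X \in \cX^{(i)}$, a valid element since $Q_iV^{(i)} \in \cX^{(i)}$) and setting $D_i(Y, A) := P_i Y + A$. Then $D_i(E(X), Q_iV^{(i)}X) = P_i L V^{(S)} X + Q_i V^{(i)} X = R_i X$ for all $X$, so $E$ is an $\cI$-IC. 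One should note the mild abuse that $A$ is a row vector of length $t$ while $\cX^{(i)} \subset \fq^n$ is a space of length-$n$ vectors; the intended reading, consistent with the definition of $D_i$, is that the user has the row space $\{vX : v \in \cX^{(i)}\}$ available, and I would phrase the argument in those terms.

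For the ``only if'' direction, suppose $E(X) = LV^{(S)}X$ is an $\cI$-IC with decoding maps $D_i$. Fix $i$ and suppose toward a contradiction that $R_i \notin \cL + \cX^{(i)}$. The key observation is that two data matrices $X, X'$ with $LV^{(S)}X = LV^{(S)}X'$ and $V^{(i)}X = V^{(i)}X'$ must be assigned the same value by $D_i$, hence must satisfy $R_iX = R_iX'$. Equivalently, writing $W = X - X'$, whenever $MW = 0$ for all $M$ in a generating set of $\cL + \cX^{(i)}$, we must have $R_iW = 0$; that is, $\ker_{\text{col}}(\cL + \cX^{(i)}) \subseteq \ker_{\text{col}}(R_i)$ where these are the spaces of matrices $W\in\fq^{n\times t}$ annihilated on the left. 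Over a field this forces $R_i$ to lie in the row space $\cL + \cX^{(i)}$ (a row vector annihilating a subspace of $\fq^{n\times t}$ of the form $\{W : BW = 0\}$ must be in the row space of $B$, by taking $t$ large enough / using that $t \geq 1$ and choosing $W$ to be columns that realize arbitrary elements of $\ker B$). This contradicts the assumption, completing the proof.

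The main obstacle I anticipate is the ``only if'' direction, specifically making the annihilator/duality argument airtight when $t = 1$ versus $t > 1$: one must be careful that the decoding constraint ``$D_i$ depends only on $(LV^{(S)}X, V^{(i)}X)$'' genuinely yields $R_i \in \cL + \cX^{(i)}$ and not merely some weaker containment. The clean way is to argue at the level of row spaces of $n\times t$ matrices: a matrix $N \in \fq^{1\times n}$ satisfies $NX$ being a function of $(BX)$ for all $X \in \fq^{n\times t}$ if and only if $N$ is in the row space of $B$ — this holds already for $t=1$ and extends columnwise — and apply it with $B$ a generator matrix of $\cL + \cX^{(i)}$ and $N = R_i$. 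I would also double-check the implicit claim that the user can compute exactly $\{vX : v \in \cX^{(i)}\}$ from its cache and nothing more relevant, which is immediate from $\Lambda^{(i)} = V^{(i)}X$ and $\cX^{(i)} = \langle V^{(i)}\rangle$.
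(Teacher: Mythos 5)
Your proposal is correct and follows essentially the same route as the paper: the ``if'' direction is the identical computation $R_iX = AV^{(i)}X + BLV^{(S)}X = A\Lambda^{(i)} + BY$, and the ``only if'' direction is the same contrapositive, with your annihilator identity $(\ker B)^\perp = \langle B\rangle$ playing exactly the role of the paper's rank computation showing the system $R_iX=U,\ V^{(i)}X=\Lambda^{(i)},\ LV^{(S)}X=Y$ is consistent for every $U$. The two phrasings of the converse are equivalent linear algebra, so no substantive difference.
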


\begin{proof}
   Let $i\in [m]$ and let $R_i \in \cX^{(S)}$. Suppose that $Y = LV^{(S)}X$ has been transmitted.
   If $R_i \in \cL + \cX^{(i)} $
   then there exist $A \in \fq^{d_i},B \in \fq^N$ such that $R_i = AV^{(i)} + B LV^{(S)}$. 
   Then for any $X \in \fq^{n \times t}$ we have
   $$R_iX = AV^{(i)}X +B LV^{(S)}X = A\Lambda^{(i)} + BY. $$
   Therefore, \INSe{Receiver $i$, knowing $V^{(i)}X$ and $\Lambda^{(i)}$, can compute $A$ and $B$ and hence} acquires $R_iX$.
   
   Conversely, suppose that $R_i \notin \cL + \cX^{(i)}. $
   Then for each $U \in \fq^t$, we have 
   \begin{eqnarray*}
   	\rk\left( \left[ \begin{array}{ll}
   		R_i & U \\
   		V^{(i)} & \Lambda^{(i)}\\
   		LV^{(S)} & Y
   	\end{array}    
   	\right]\right) 
   	& = & 1 + \rk\left( \left[ \begin{array}{ll}
   		V^{(i)} & \Lambda^{(i)}\\
   		LV^{(S)} & Y
   	\end{array}    
   	\right]\right)\\
   	& = & 1 + \rk\left( \left[ \begin{array}{ll}
   		V^{(i)}\\
   		LV^{(S)} 
   	\end{array}    
   	\right]\right)
   	=
   	\rk\left(\left[ \begin{array}{l}
   		R_i \\
   		V^{(i)}\\
   		LV^{(S)}
   	\end{array}    
   	\right]\right).
   \end{eqnarray*} 
   In particular, the linear system
    $$R_iX = U , V^{(i)}X =\Lambda^{(i)}, LV^{(S)}X = Y $$
    is consistent for each $\INSe{U} \in {\fq}^t$. 
    It follows that
    \begin{gather}\label{eq:pr}
       Pr(R_iX = \INSe{U} | V^{(i)}X =\Lambda^{(i)},  LV^{(S)}X = Y)=\frac {1}{q^t},
    \end{gather}                  
    so the side information of $R_i$ conveys no information about $R_iX$ to \INSe{the $i$th receiver}.
\end{proof}
%

\INSe{Lemma \ref{lemdecode} simply says that the demands of all users can be simultaneously satisfied if and only if for each $i$ the smallest vector space containing both $\cL$ and $\cX^{(i)}$ also contains $R_i$; in other words extending the side-information spaces $\cX^{(i)}$ by the same space $\cL$ in each case
	  contains the $i$th request vector. This is achieved, for example, if $\cL + \cX^{(i)}$ is the space $\cX^{(S)}$ for each $i$, although 
	  this is clearly not necessary.}  
	
	\INSe{An equivalent formulation of the statement of Lemma \ref{lemdecode} is to say that $L$ represents a linear index code for $\cI$ if and only if $\cL$ meets each coset $R_i +(\cX^{(i)}\cap \cX^{(S)}) = \{R_i + A : A \in \cX^{(i)}\cap \cX^{(S)}  \}$.
	We will use this view to obtain an upper bound on the optimal length on a linear index code in Theorem \ref{thbd2}. }

Given an $ \ell \times n$ matrix $A \in \FF_q^{\ell \times n}$, we write $A^\perp$ to denote the null space of $A$ in $\fq^n$. 
Furthermore, for each $i \in [m]$ we define the sets:
\begin{eqnarray*}
	\cY^{(i)} &:=&\{Z \in \fq^{n \times t} : V^{(i)}Z = 0\}=({V^{(i)}}^\perp)^t, \\
    \cZ^{(i)}  &:=&\{Z \in \fq^{n \times t} : V^{(i)}Z = 0, R_i Z \neq 0 \}.
\end{eqnarray*}
\INSe{To help put these sets in context, if $V^{(i)}$ has rows composed of standard basis vectors, say with leading ones indexed by the set $\cS^i \subset [n]$ (which means the side-information of User $i$ is uncoded) then $\cY^{(i)}$ consists of those matrices whose columns indexed by $\cS^i$ are all-zero. Then $\cY^{(i)}$ can be identified with the set $[n]\backslash \cX_i$, the complement of the side information of user $i$ and $\cZ^{(i)}$ can be identified with $[n]\backslash \cX_i \cup \{f(i) \}$. }
\begin{remark}
\INSm{In the classical ICSI problem, two \INSe{data matrices} $X$ and $X'$ are \INSe{called} \emph{confusable at receiver $i$} (cf. \cite{AK}) if they yield the same side information for $i$, i.e. $X_j=X'_j$ for all $j\in\cX_i$, and \INSe{if moreover} the packets $X_{f(i)}$ and  $X'_{f(i)}$ are different (here $\cX_i$ represents the side information of the receiver $i$ and $f(i)$ the request packet). In the ICCSI problem, two \INSe{vectors $X,X'$ are called confusable at receiver $i$ if $V^{(i)}X=V^{(i)}X'$ and $R_iX\ne R_iX'$, i.e. if they yield the same side information for the $i$th user but the requested data packets are different. Therefore,} $X$ and $X'$ are confusable at receiver $i$ if and only if $X-X'$ lies in the set $\cZ^{(i)}$.} 
	
	\INSe{The essential content of next result, which follows from Lemma \ref{lemdecode}, is that
	$L$ represents a linear code of $\cI$ if and only if any confusable pair $X,X'$ result in different encodings. 
	Therefore, another way of stating Corollary \ref{cordec} is:\\
	\begin{center} 
	$L$ represents an $\fq$-linear $\cI$-IC if and only if $LV^{(S)}X \neq LV^{(S)}X'$ for any confusable pair $X,X' \in \fq^{n \times t}$. \\
	\end{center}}
	
\end{remark}

\INSe{Then $L$ realizes and $\fq$-linear $\cI$-IC if and only if no matrix of $\cZ^{(i)}$ vanishes after multiplication by $LV^{(S)}$, so $\cZ^{(i)}$ may be used to characterize all linear codes of $\cI$. Of course $LV^{(S)}Z$ is non-zero if and only if it has positive weight. This result will be generalized further in Theorem \ref{th:err} to give a criterion for error-correction.}

\begin{corollary}\label{cordec}
   Let $L\in \fq^{N \times d_S}$. Then $L$ represents an $\fq$-linear $\cI$-IC of length $N$
   if and only if $\INSe{LV^{(S)}Z \neq 0}$ for each $i\in [m]$, and $Z \in \cZ^{(i)}$.	
\end{corollary}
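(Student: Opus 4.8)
The plan is to derive Corollary~\ref{cordec} directly from Lemma~\ref{lemdecode}, translating the condition $R_i \in \cL + \cX^{(i)}$ for every $i$ into the statement that $LV^{(S)}Z \neq 0$ for every $Z \in \cZ^{(i)}$ and every $i$. The natural pivot is a duality/annihilator argument over $\fq$: for a fixed $i$, the condition $R_i \in \cL + \cX^{(i)}$ is equivalent, by taking orthogonal complements in $\fq^n$, to the inclusion $(\cL + \cX^{(i)})^\perp \subseteq R_i^\perp$, i.e. $\cL^\perp \cap \cX^{(i)\perp} \subseteq R_i^\perp$. In words: every vector $w \in \fq^n$ that is simultaneously orthogonal to $\cL$ and to $\cX^{(i)}$ must also be orthogonal to $R_i$. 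I would then observe that, since $\cX^{(i)} = \langle V^{(i)} \rangle$ and $\cL = \langle LV^{(S)} \rangle$, the space $\cL^\perp \cap \cX^{(i)\perp}$ is precisely the set of column vectors $z \in \fq^n$ with $V^{(i)}z = 0$ and $LV^{(S)}z = 0$.

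The second step is to pass from the vector statement to the matrix ($t$ columns) statement. I would argue the equivalence columnwise: a matrix $Z \in \fq^{n\times t}$ satisfies $V^{(i)}Z = 0$ iff each column $Z^j$ lies in $\cX^{(i)\perp}$, and $LV^{(S)}Z = 0$ iff each column lies in $\cL^\perp$; similarly $R_iZ \neq 0$ iff some column $Z^j$ fails to be orthogonal to $R_i$. Hence the existence of some $Z \in \cZ^{(i)}$ with $LV^{(S)}Z = 0$ is equivalent to the existence of a single column vector $z \in \fq^n$ with $V^{(i)}z=0$, $LV^{(S)}z = 0$, and $R_iz \neq 0$ — that is, equivalent to $\cL^\perp \cap \cX^{(i)\perp} \not\subseteq R_i^\perp$. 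Combining with the first step: for each fixed $i$, $R_i \in \cL + \cX^{(i)}$ if and only if no $Z \in \cZ^{(i)}$ satisfies $LV^{(S)}Z = 0$, i.e. $LV^{(S)}Z \neq 0$ for all $Z \in \cZ^{(i)}$. Quantifying over all $i \in [m]$ and invoking Lemma~\ref{lemdecode} then gives exactly the claimed characterization.

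I expect the main subtlety — rather than a genuine obstacle — to be bookkeeping the direction of the orthogonality correspondences correctly: $\cX^{(i)}$ and $\cL$ live as row spaces, their perps live in the column space $\fq^n$, and $\cZ^{(i)}$ as defined in the excerpt already packages the condition $V^{(i)}Z = 0$, $R_iZ \neq 0$, so one must be careful that "$LV^{(S)}Z \neq 0$ for all $Z \in \cZ^{(i)}$'' is the negation of "there exists a confusable-type $Z$ killed by $LV^{(S)}$'' and matches the contrapositive of the Lemma cleanly. An alternative, slightly slicker route avoids perps entirely: one can note that $\cZ^{(i)} = \cY^{(i)} \setminus \{Z : R_iZ = 0\}$, that $\cY^{(i)} = ({V^{(i)\perp}})^t$, and that $R_i \notin \cL + \cX^{(i)}$ is exactly the condition that the linear system "$V^{(i)}z = 0$, $LV^{(S)}z = 0$'' has a solution outside $R_i^\perp$, which is immediate from elementary linear algebra (a vector $v$ lies in a subspace $W$ iff $v^\top$ annihilates $W^\perp$). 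Either way the proof is short; I would present the perp version as the cleanest, restating the chain of equivalences for one index $i$ and then remarking that the result follows by applying it for all $i$ and quoting Lemma~\ref{lemdecode}.
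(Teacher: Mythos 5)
Your proof is correct, but it takes a genuinely different route from the paper's. The paper proves the nontrivial direction by recycling the rank computation from the proof of Lemma~\ref{lemdecode}: assuming $R_i \notin \cL + \cX^{(i)}$, it observes that the linear system $R_iZ = U$, $V^{(i)}Z = 0$, $LV^{(S)}Z = W$ is consistent for \emph{every} $U$, takes a solution $Z_1$ for $U=0$, and subtracts it from an arbitrary $Z_0 \in \cZ^{(i)}$ to manufacture an element of $\cZ^{(i)}$ killed by $LV^{(S)}$; the converse is the one-line computation $R_iZ = BLV^{(S)}Z$. You instead dualize: $R_i \in \cL + \cX^{(i)}$ iff $\cL^\perp \cap \cX^{(i)\perp} \subseteq R_i^\perp$, identify $\cL^\perp \cap \cX^{(i)\perp}$ with $\{z : V^{(i)}z = 0,\ LV^{(S)}z = 0\}$, and then reduce the matrix statement to the vector statement columnwise. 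Both arguments are elementary, but yours gives the two directions simultaneously and avoids the paper's slightly awkward dependence on an auxiliary $Z_0$ and on the consistency argument embedded in an earlier proof; it is also the cleaner way to see why the statement is an equivalence rather than two separate implications. Notably, the paper itself endorses your viewpoint elsewhere: in the remark following the $\ga$-bound it observes that the condition $L{V^{(S)}}^\perp \cap {V^{(i)}}^\perp \subset {R_i}^\perp$ is ``simply the equivalent statement to that of Lemma~\ref{lemdecode}, found by dualization,'' which is exactly the pivot of your argument. The only points worth making explicit in a written-up version are the nondegeneracy fact $(W^\perp)^\perp = W$ over $\fq$ and the construction, in the columnwise step, of a matrix $Z \in \cZ^{(i)}$ from a single witness vector $z$ (e.g.\ place $z$ in one column and zeros elsewhere); both are routine.
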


\begin{proof}
	\INSe{Fix some $i \in [m]$ and} let $Z_0 \in \cZ^{(i)}$, let $LV^{(S)}Z_0 = W$. Suppose that $R_i \notin \cL + \cX^{(i)}. $ 
	Then as in the proof of Lemma \ref{lemdecode}, the linear system
	\begin{equation}\label{eqcons}
	    R_iZ = U , V^{(i)}Z = 0, LV^{(S)}Z = W 
	\end{equation}
	is consistent for every choice of $U \in \fq^{n}$. In particular, (\ref{eqcons}) has a solution $Z_1$ for $U=0$. 
	Then $Z = Z_0-Z_1 \in \cZ^{(i)}$ and $ LV^{(S)}Z = 0$. \INSe{We have shown that if $L$ does not represent a linear code for $\cI$ then for some $i$, there exists $Z \in \cZ^{(i)}$ such that $ LV^{(S)}Z = 0$. Applying the contrapositive, this yields that} if $\rk(LV^{(S)}Z) \geq 1$ \INSe{(i.e. if $LV^{(S)}Z) \neq 0$)}
	for each $i\in [m]$ and $Z \in \cZ^{(i)}$, then $L$ represents a linear index code for the instance $\cI$.
	
	Conversely, if there exist $A \in \fq^{d_i},B \in \fq^N$ such that $R_i = AV^{(i)} + B LV^{(S)}$
	then $$R_iZ = AV^{(i)} Z+ B LV^{(S)} Z = B LV^{(S)}Z \neq 0 ,$$
	for any $Z \in \cZ^{(i)}$.
\end{proof}

\INSe{\subsection{The Optimal Length of an $\fq$-Linear $\cI$-IC}}

We extend the definition of the min-rank of an instance of the ICSI problem, as given in \cite{Son}, to the ICCSI problem. \INSe{We will show that this characterizes the shortest possible length of an $\fq$-linear $\cI$-IC.}

\begin{definition}
 We define the min-rank of the instance $\cI$ of the ICCSI problem over $\fq$ to be
\begin{eqnarray*}
\gk(\cI) & = &\min\{\rk(A + R) :A \in \fq^{m \times n}, A_i \in \cX^{(i)} \cap \cX^{(S)} \subset \fq^n,\; \forall i \in [m] \}.           
\end{eqnarray*}
\end{definition}
\noindent Observe that the quantity $\gk(\cI)$ is $d_{\text{rk}}(R,\cX \cap \tilde{\cX} )$, which is the rank-distance of 
$R \in \fq^{m \times n}$ to the $\fq$-linear code $\cX \cap \tilde{\cX}$, or equivalently the minimum rank-weight of the coset 
$R + (\cX \cap \tilde{\cX}) \subset \fq^{m\times n}$.

We now show that given the instance $\cI$, the minimum length of an $\fq$-linear $\cI$-IC is given by its min-rank. 

\begin{lemma}\label{lemlength}
The length of an optimal $\fq$-linear $\cI$-IC is $\gk(\cI)$.
\end{lemma}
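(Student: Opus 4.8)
The plan is to prove the two inequalities separately: that there exists an $\fq$-linear $\cI$-IC of length $\gk(\cI)$, and that no $\fq$-linear $\cI$-IC can have length smaller than $\gk(\cI)$. The key tool for both directions is Lemma \ref{lemdecode}, which says that $L \in \fq^{N \times d_S}$ represents an $\fq$-linear $\cI$-IC if and only if $R_i \in \cL + \cX^{(i)}$ for each $i \in [m]$, together with the reformulation of $\gk(\cI)$ as $d_{\mathrm{rk}}(R, \cX \cap \tilde\cX)$, i.e. the minimum rank-weight of the coset $R + (\cX \cap \tilde\cX)$.

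\textbf{Upper bound.} First I would take a matrix $A \in \fq^{m \times n}$ with $A_i \in \cX^{(i)} \cap \cX^{(S)}$ for each $i$ achieving $\rk(A + R) = \gk(\cI) =: N$. Set $M := A + R \in \fq^{m\times n}$, which has rank $N$, and factor $M = BL'$ where $L' \in \fq^{N \times n}$ has rank $N$ (its rows span the row space of $M$). The subtlety is that the code must have the form $LV^{(S)}$ for some $L \in \fq^{N \times d_S}$, so I need the rows of $L'$ to lie in $\cX^{(S)} = \langle V^{(S)} \rangle$. This holds because each row $M_i = A_i + R_i$ lies in $\cX^{(S)}$ (since $A_i \in \cX^{(S)}$ and $R_i \in \cX^{(S)}$ as $R \in \tilde\cX$), hence the row space of $M$ — which is the row space of $L'$ — is contained in $\cX^{(S)}$. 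Therefore $L' = L V^{(S)}$ for some $L \in \fq^{N\times d_S}$, and $\cL = \langle L V^{(S)}\rangle = \langle M \rangle$. Now for each $i$: $R_i = M_i - A_i$ with $M_i \in \langle M \rangle = \cL$ and $A_i \in \cX^{(i)}$, so $R_i \in \cL + \cX^{(i)}$; by Lemma \ref{lemdecode}, $L$ represents an $\fq$-linear $\cI$-IC of length $N = \gk(\cI)$.

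\textbf{Lower bound.} Conversely, suppose $L \in \fq^{N \times d_S}$ represents an $\fq$-linear $\cI$-IC. By Lemma \ref{lemdecode}, for each $i$ there exist $B_i \in \fq^N$ and $C_i \in \fq^{d_i}$ with $R_i = B_i L V^{(S)} + C_i V^{(i)}$. Set $A_i := -C_i V^{(i)} \in \cX^{(i)}$; then $A_i + R_i = B_i L V^{(S)} \in \cL$, so each row of $A + R$ lies in $\cL = \langle L V^{(S)} \rangle$, which has dimension at most $N$, giving $\rk(A + R) \le N$. The one gap to close is that the definition of $\gk(\cI)$ requires $A_i \in \cX^{(i)} \cap \cX^{(S)}$, not merely $A_i \in \cX^{(i)}$; but $A_i = B_i L V^{(S)} - R_i$ is a difference of two vectors in $\cX^{(S)}$ (using $R_i \in \cX^{(S)}$), so $A_i \in \cX^{(S)}$ as well, hence $A_i \in \cX^{(i)} \cap \cX^{(S)}$. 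Therefore $A$ is admissible in the minimization defining $\gk(\cI)$, so $\gk(\cI) \le \rk(A+R) \le N$. Combining the two bounds gives that the optimal length equals $\gk(\cI)$.

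\textbf{Main obstacle.} The only real subtlety — and the step I would be most careful about — is the bookkeeping that ensures the constructed $A$ satisfies both membership conditions $A_i \in \cX^{(i)}$ \emph{and} $A_i \in \cX^{(S)}$, and that in the upper bound direction the rank-$N$ factor $L'$ genuinely factors through $V^{(S)}$; both hinge on the hypothesis $R \in \tilde\cX$, i.e. $R_i \in \cX^{(S)}$ for all $i$. Everything else is routine linear algebra once Lemma \ref{lemdecode} and the coset interpretation of $\gk(\cI)$ are in hand.
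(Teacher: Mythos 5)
Your proof is correct and follows essentially the same route as the paper's: both directions reduce to Lemma \ref{lemdecode} together with a rank factorization of $A+R$ through $V^{(S)}$, exploiting that all rows of $A$ and $R$ lie in $\cX^{(S)}$. The only differences are cosmetic — you perform the two factorization steps in the opposite order and make explicit the verification that $A_i \in \cX^{(S)}$ (so that $A$ is admissible in the minimization defining $\gk(\cI)$), a point the paper's proof leaves implicit.
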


\begin{proof}
	\INSe{Let $L \in \fq^{N \times d_S}$ have rank $N$. From Lemma \ref{lemdecode}, $L$ represents a linear code of length $N$ if and only if 
		for each $i \in [m]$ there exist $A_i \in \cX^{(i)} \cap \cX^{(S)} \subset \fq^n, B_i \in \fq^N$ such that $$R_i = B_i L V^{(S)}-A_i,$$ (i.e. if and only if $R_i \in\cL+ \cX^{(i)} $ for each $i$). Equivalently this holds if and only if there exist matrices $A \in \cX \cap \tilde{\cX}$, $B \in \fq^{m \times N}$ such that $R = BLV^{(S)}-A$, in which case we have $BLV^{(S)}=R+A$ in the coset $R+(\cX \cap \tilde{\cX})$. In particular, we have shown that every matrix $L \in \fq^{N \times d_S}$ represents
		an $\fq$-linear code for $\cI$ only if $$BLV^{(S)} \in R+(\cX \cap \tilde{\cX})$$ for some $B \in \fq^{m \times N}$, so every such $L$ has rank at least $\gk(\cI)$. }
    
    Now let $A \in \fq^{m\times n}$ with $A_i \in \cX^{(i)} \cap \cX^{(S)}$ for each $i \in [m]$.
    Suppose that $A+R$ has rank $N$.  
    Since $A,R \in  \tilde{\cX}$, there exists $Z\in \fq^{m \times d_S}$ of rank $N$ satisfying $A+R = Z V^{(S)}$.
    Furthermore, there exist $B \in \fq^{m\times N}$ and $L \in \fq^{N\times d_S}$ such that $Z=BL$.
    Then $$R=A-BLV^{(S)}$$ so $L$ represents a  linear code of length $N$ 
    for the instance $\cI$.
    The length $N$ is \INSe{minimized} for $N = \gk(\cI)$, \INSe{so there exists some $L$ of rank $\gk(\cI)$ representing a linear code for $\cI$.
    	} 
\end{proof}

\INSe{Lemma \ref{lemlength} gives a naive algorithm for computation of a matrix $L$ for an optimal linear $\cI$-IC: put each element of $R+(\cX \cap \tilde{\cX})$ into row-echelon form and choose one of minimal rank $N=\gk(\cI)$. The non-zero rows of this matrix yields the required $N \times d_S$ matrix $L$. We do not suggest this as a practical approach, since it requires ${\cal O}(d_S^3 q^\ell)$ operations, with $\ell = \dim \cX \cap \tilde{\cX}$. We mention this here to give a concrete illustration of the realization problem.}
As already observed in \cite{Son1}, the min-rank $\gk(\cI)$ of the instance $\cI$ generalizes the notion of the min-rank of the so-called
side-information graph of the classical index coding problem, which is NP-hard to compute. A discussion on the various approaches to obtaining bounds on the optimal length of an index code can be read in \cite{SDL}, where the authors assert that graph-theoretic methods for constructing index coding schemes yield bounds on the optimal length of an index code, which are often out-performed by the min-rank. 
\INSe{In fact all of these so called graph-theoretic methods, which use linear programming methods to obtain (possibly sub-optimal) solutions to the linear index coding problem can be extended to the ICCSI case. These results have been outlined in a separate forthcoming paper \cite{BC}.}

\INSe{\subsection{Upper Bounds on the Optimal Length of an $\fq$-Linear $\cI$-IC}}

\INSe{We now give upper bounds on $\gk(\cI)$, applying probabilistic arguments. The main results are Corollary \ref{corzip} and Theorem \ref{thbd2}, both of which show that with certain constraints on $N$, there exists an $\fq$-linear $\cI$-IC of length $N$. While both results essentially give lower bounds on the probability that a random $N \times d_S$ matrix $L$ represents an $\fq$-linear $\cI$-IC, the key point is that these probabilities are positive, so that existence is guaranteed. It is from this observation that upper bounds on $\gk(\cI)$ are achieved.}

We will use the following theorem proved by Zippel \cite{zippel} (see also \cite{demillo,schwartz}).
We state it here for finite fields.

\begin{theorem}\label{lm:ko}
	Let $m,s$ be positive integers with $q>m$ and let $P(x_1,...,x_s)$ be a non-zero multivariate polynomial in $\fq[x_1,...,x_s]$ for which the largest exponent of any variable $x_i$ is at most $m$. If $(a_1,...a_s)$ is chosen uniformly at random in $\fq^s$ then the probability that $P(a_1,...,a_s)$ equals zero is at most $1-(1-m/q)^s$.
\end{theorem}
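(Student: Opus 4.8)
The plan is to prove the Schwartz--Zippel--DeMillo--Lipton bound in the form stated by induction on the number of variables $s$, using a union bound at the inductive step. I will not attempt the sharper degree-based version; since the hypothesis only controls the per-variable degree (at most $m$ in each $x_i$), the natural quantity to track is the probability of a non-root, and the cleanest route is to bound $\Pr[P(a_1,\dots,a_s)\neq 0]$ from below by $(1-m/q)^s$.

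First I would handle the base case $s=1$. Here $P(x_1)$ is a non-zero univariate polynomial of degree at most $m$, so it has at most $m$ roots in $\fq$; since $q>m$, the probability that a uniformly random $a_1$ is a root is at most $m/q$, giving $\Pr[P(a_1)\neq 0]\ge 1-m/q$, as required.

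For the inductive step, assume the statement for $s-1$ variables. Write $P$ as a polynomial in $x_s$ with coefficients in $\fq[x_1,\dots,x_{s-1}]$:
\[
P(x_1,\dots,x_s)=\sum_{j=0}^{d} c_j(x_1,\dots,x_{s-1})\,x_s^{\,j},
\]
where $d\le m$ is the largest index with $c_d\neq 0$ (as a polynomial), and note each $c_j$ still has every variable appearing to degree at most $m$. Condition on the choice of $(a_1,\dots,a_{s-1})$ and split according to whether $c_d(a_1,\dots,a_{s-1})=0$. If $c_d(a_1,\dots,a_{s-1})\neq 0$, then $P(a_1,\dots,a_{s-1},x_s)$ is a non-zero univariate polynomial in $x_s$ of degree $d\le m$, so for random $a_s$ we get $\Pr[P\neq 0 \mid a_1,\dots,a_{s-1}]\ge 1-m/q$. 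By the inductive hypothesis applied to the non-zero polynomial $c_d$, the event $c_d(a_1,\dots,a_{s-1})\neq 0$ has probability at least $(1-m/q)^{s-1}$. Multiplying the two and using that these are nested/independent in the right way (the second probability is a conditional lower bound valid on the whole event $c_d\neq 0$), I get
\[
\Pr[P(a_1,\dots,a_s)\neq 0]\ \ge\ (1-m/q)^{s-1}\cdot(1-m/q)\ =\ (1-m/q)^{s},
\]
which is equivalent to the claimed bound $\Pr[P=0]\le 1-(1-m/q)^s$.

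The only subtlety — and the step I would be most careful about — is the bookkeeping of the conditional probabilities: one must phrase it as $\Pr[P\neq 0]\ge \Pr[c_d\neq 0 \text{ and } P\neq 0] = \mathbb{E}\big[\mathbf{1}_{c_d\neq 0}\cdot \Pr[P\neq 0\mid a_1,\dots,a_{s-1}]\big]\ge (1-m/q)\,\Pr[c_d\neq 0]$, and then invoke the inductive hypothesis on $c_d$. One should also note that $d\ge 1$ is not needed: if $d=0$ then $P=c_0$ does not involve $x_s$ and the inductive hypothesis on $s-1$ variables (together with $1-m/q\le 1$) already suffices. Everything else is routine, and no step presents a genuine obstacle; this is a standard argument recorded here only because the paper needs the inequality in exactly this normalized form.
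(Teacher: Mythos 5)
Your proof is correct: the base case is the standard root-counting bound for a univariate polynomial of degree at most $m$, and the inductive step via the leading coefficient $c_d$ in $x_s$, with the lower bound $\Pr[P\neq 0]\ge(1-m/q)\Pr[c_d\neq 0]$, is exactly the standard Zippel argument; you also correctly handle the degenerate case $d=0$ and note that each $c_j$ inherits the per-variable degree bound. There is nothing to compare against in the paper itself, since the authors state this result without proof, citing Zippel (and DeMillo--Lipton, Schwartz) and merely specializing it to finite fields; your argument supplies the omitted proof in precisely the normalized form the paper uses.
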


\begin{remark}\label{iir}
	Before proving the following theorem, we note that if $X_1,\dots, X_n$ are independent uniformly distributed random variables that take their values over a field $\fq$, then the random variable
	$$
	Z_\ell=\sum_{i=1}^\ell \ga_i X_i,
	$$
	for some $\ell \in [n]$, $\ga_i\in \fq^\times \INSe{= \fq \backslash \{0\}}$, has a uniform distribution. 
	
	This is easily shown by an inductive argument. Clearly $P(Z_1=\beta) = \frac{1}{q}$ for any $\beta \in \fq$ since $\alpha_1 \neq 0$.
	Moreover, for any $\ell \in [n],\beta \in \fq$, 
	\begin{eqnarray*}
		P(Z_{\ell} = \beta) & = & P(Z_{\ell-1} = \beta - \alpha_\ell X_\ell )\\
		& = & \sum_{\gamma \in \fq}P(X_{\ell} =\gamma)P(Z_{\ell-1} = \beta - \alpha_\ell \gamma) = \frac{1}{q}.
	\end{eqnarray*}
	
\end{remark}

\INSe{Let $m'$ be the number of distinct equivalence classes of $[m]$ under the relation $i \equiv j$ if $\cX^{(i)}=\cX^{(j)}$.
	Let $\tilde{m}$ be a set of $m'$ representatives for the distinct equivalence classes of $[m]$.}

\begin{theorem}\label{th:random1}
	Let $\cI$ be an instance of an ICCSI problem and let $N = \max\{n-d_i :i\in[m]\}$. Suppose that $\INSe{q>m'}$.
	If the entries of a matrix $L\in\fq^{N\times d_S}$ are chosen uniformly at random in $\fq$, then the probability that $L$ represents a linear code for $\cI$ is at least $(1-m'/q)^{Nd_S}$. 
\end{theorem}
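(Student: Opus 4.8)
The plan is to show that for a uniformly random matrix $L \in \fq^{N \times d_S}$, the conditions of Corollary~\ref{cordec} (equivalently Lemma~\ref{lemdecode}) hold with positive probability, and in fact with probability at least $(1-m'/q)^{Nd_S}$. First I would set up the right event to bound. By Lemma~\ref{lemdecode}, $L$ represents a linear $\cI$-IC if and only if $R_i \in \cL + \cX^{(i)}$ for every $i \in [m]$; since $\cL = \langle L V^{(S)}\rangle$ depends only on the row space, and the condition $R_i \in \cL + \cX^{(i)}$ depends only on $\cX^{(i)}$, it suffices to check it for the $m'$ representatives $i \in \tilde{m}$. For a fixed such $i$, the requirement $R_i \in \cL + \cX^{(i)}$ is equivalent to the statement that the system $R_i = B L V^{(S)} + A V^{(i)}$ is solvable in $B \in \fq^N$, $A \in \fq^{d_i}$. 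The natural sufficient event to force this is that the rows of $L V^{(S)}$, together with the rows of $V^{(i)}$, span $\cX^{(S)}$ — or more simply, that a suitable $N \times N$ submatrix coming from a complement of $\cX^{(i)}$ inside $\cX^{(S)}$ is invertible.

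Next I would make this concrete. Since $N \geq n - d_i \geq \dim \cX^{(S)} - \dim(\cX^{(i)} \cap \cX^{(S)})$, we can pick, for each representative $i$, a basis of a complement of $\cX^{(i)} \cap \cX^{(S)}$ inside $\cX^{(S)}$; writing the coordinates of $R_i$ and of the rows of $L V^{(S)}$ with respect to a basis of $\cX^{(S)}$ adapted to $\cX^{(i)}$, the condition $R_i \in \cL + \cX^{(i)}$ reduces to: the matrix formed by the relevant coordinates of enough rows of $L V^{(S)}$ has full rank equal to the codimension of $\cX^{(i)} \cap \cX^{(S)}$ in $\cX^{(S)}$. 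Because $V^{(S)}$ has full row rank $d_S$, the entries of $L V^{(S)}$, expressed in the chosen basis, are fixed nonzero-coefficient linear combinations of the independent uniform entries of $L$; by Remark~\ref{iir} each such entry is itself uniformly distributed on $\fq$, and the relevant $N \times (\text{codim})$ block is a submatrix of an $N \times N$ matrix whose determinant is a nonzero polynomial in the $N d_S$ entries of $L$ of degree at most $1$ in each entry.

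Then I would invoke the Zippel–DeMillo–Lipton–Schwartz bound, Theorem~\ref{lm:ko}. The event ``$L$ fails to represent a linear code for $\cI$'' is the union over $i \in \tilde{m}$ of the events ``the determinant of the relevant $N\times N$ matrix $M_i(L)$ vanishes''. Rather than a union bound, the cleaner route — and the one giving exactly $(1-m'/q)^{Nd_S}$ — is to consider the single polynomial $P(L) = \prod_{i \in \tilde{m}} \det M_i(L)$, which is nonzero (each factor is nonzero, as witnessed by a choice making all the $M_i$ invertible — here one must check the witnesses can be chosen simultaneously, which is where care is needed) and has degree at most $m'$ in each of the $N d_S$ variables. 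Applying Theorem~\ref{lm:ko} with $s = N d_S$ and $m \rightsquigarrow m'$ (using the hypothesis $q > m'$) gives $\Pr[P(L) = 0] \leq 1 - (1 - m'/q)^{N d_S}$, hence $\Pr[P(L) \neq 0] \geq (1 - m'/q)^{N d_S}$, and $P(L) \neq 0$ implies every $M_i$ is invertible, hence $L$ represents a linear $\cI$-IC.

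The main obstacle I anticipate is the bookkeeping in the second step: choosing, for each representative $i$, the correct $N \times N$ matrix $M_i(L)$ whose invertibility certifies $R_i \in \cL + \cX^{(i)}$, verifying that its determinant is genuinely a nonzero polynomial of per-variable degree $\le 1$ (so that the product over $i \in \tilde{m}$ has per-variable degree $\le m'$), and — more delicately — confirming that there is a single $L$ making all $m'$ determinants simultaneously nonzero so that the product polynomial $P$ is not identically zero. This simultaneous-nonvanishing point is exactly why the hypothesis $N = \max\{n - d_i\}$ (taking the maximum, not each $n-d_i$ separately) is used, and it is the step that needs the most attention; the appeal to Remark~\ref{iir} to pass from ``entries of $L$ uniform'' to ``entries of $LV^{(S)}$ uniform in the adapted basis'' is routine by comparison, as is the final application of Theorem~\ref{lm:ko}.
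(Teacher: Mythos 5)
Your proposal follows essentially the same route as the paper: for each of the $m'$ representatives one writes down a matrix built from $LV^{(S)}$ whose full rank certifies decodability (the paper uses $L^{(i)}=LV^{(S)}Z^{(i)}$ with the columns of $Z^{(i)}$ a basis of the null space of $V^{(i)}$), multiplies the corresponding determinants/minors into a single polynomial of per-variable degree at most $m'$ in the $Nd_S$ entries of $L$, and applies Theorem~\ref{lm:ko}. The one point you flag as delicate --- exhibiting an $L$ that makes all $m'$ determinants nonzero simultaneously --- is not actually needed: $\fq[x_1,\dots,x_{Nd_S}]$ is an integral domain, so the product is a nonzero polynomial as soon as each factor is, and for that a separate witness for each $i$ suffices.
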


\begin{proof}
	From Corollary \ref{cordec}, if $w\left(LV^{(S)}Z\right)\ge 1$ for each $Z \in \cY^{(i)}$ then $L$ represents a code for $\cI$. 
	\INSe{For each $i \in \tilde{m}$,} let $Z^{(i)} \in \fq^{n \times k_i}$ satisfy $V^{(i)}Z^{(i)} =0$
	and have rank $k_i=n-d_i$. Write $L^{(i)}=LV^{(S)}Z^{(i)}$. The matrix $L$ represents a code for $\cI$ if $L^{(i)}$ is a full-rank matrix for each $i \in \INSe{\tilde{m}}$, which holds if and only if there exists a non-zero $k_i\times k_i$ minor $M^{(i)}$ of $L^{(i)}$. Since the entries of $L$ are uniformly distributed, so are the entries of $L^{(i)}$, from Remark \ref{iir}. 
	Each such minor has the form
	$M^{(i)}=\sum_{\sigma \in S_{k_i}} sgn(\sigma) \prod_{j=1}^{k_i} L^{(i)\sigma(v_j)}_{v_j} .$
	Now $\prod_{i \in \INSe{\tilde{m}}} M^{(i)}$ may be viewed as a polynomial in $Nd_S$ variables of degree 
	$\displaystyle{\sum_{i \in \INSe{\tilde{m}}} k_i \leq \INSe{m'}N}$ with each variable appearing with multiplicity
	at most $\INSe{m'}$ in any term. 
	Then the probability that $L$ represents a code for $\cI$ is the probability that 
	$\prod_{i \in \INSe{\tilde{m}}} M^{(i)}$ is non-zero, which from Lemma \ref{lm:ko}
	is at least $(1-\INSe{m'}/q)^{Nd_S}$, for $q>\INSe{m'}$.
\end{proof}

\begin{corollary}\label{corzip}
   If $q>\INSe{m'}$ then $\gk(\cI) \leq \max \{n-d_i :i\in[m]\}$.
\end{corollary}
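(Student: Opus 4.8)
The plan is to deduce this immediately from Theorem \ref{th:random1} together with Lemma \ref{lemlength}. First I would set $N := \max\{n-d_i : i \in [m]\}$ and invoke the hypothesis $q > m'$. Theorem \ref{th:random1} then tells us that when the entries of a matrix $L \in \fq^{N \times d_S}$ are chosen uniformly at random, the probability that $L$ represents an $\fq$-linear $\cI$-IC of length $N$ is at least $(1 - m'/q)^{Nd_S}$. Since $q > m'$ we have $0 < 1 - m'/q \le 1$, so this lower bound is strictly positive; hence the event is nonempty and there exists at least one matrix $L \in \fq^{N \times d_S}$ realizing an $\fq$-linear index code for $\cI$ of length $N$.

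Next I would recall from Lemma \ref{lemlength} that the length of an optimal $\fq$-linear $\cI$-IC equals the min-rank $\gk(\cI)$. Having just exhibited, non-constructively, an $\fq$-linear $\cI$-IC of length $N$, optimality of $\gk(\cI)$ forces $\gk(\cI) \le N = \max\{n - d_i : i \in [m]\}$, which is exactly the assertion of the corollary.

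There is no real obstacle here: the corollary is a one-line consequence of the probabilistic existence statement in Theorem \ref{th:random1} once one remembers that $\gk(\cI)$ is precisely the optimal length. The only point that requires the stated hypothesis is the strict positivity of $(1 - m'/q)^{Nd_S}$, which is where $q > m'$ enters; no further estimates are needed.
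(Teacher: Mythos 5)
Your proof is correct and follows the same route as the paper: positivity of the bound $(1-m'/q)^{Nd_S}$ from Theorem \ref{th:random1} when $q>m'$ gives existence of an $\fq$-linear $\cI$-IC of length $N=\max\{n-d_i : i\in[m]\}$, and the optimality of $\gk(\cI)$ from Lemma \ref{lemlength} then yields $\gk(\cI)\le N$. Your write-up merely makes explicit the positivity step and the appeal to Lemma \ref{lemlength} that the paper leaves implicit.
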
	

\begin{proof}
	Theorem \ref{th:random1} guarantees the existence of some matrix $L \in \fq^{N\times d_S}$ that represents an $\fq$-linear
	$\cI$-IC of length $N=\max \{n-d_i :i\in[m]\}$. The result is now immediate since $N \geq \gk(\cI)$.
\end{proof}

\begin{remark}
	In fact Schwartz's result \INSe{\cite{schwartz}} gives the lower bound of $\INSe{1-\frac{m'(n-d)}{q}}$ on the probability of an $N \times d_S$ matrix $L$
	representing an $\cI$-IC, where $d$ is the average of the $\{d_i:i \in [m]\}$. While this may give a higher lower bound,
	it places the restriction $\INSe{m'(n-d) <q}$ and so in particular yields a weaker version of Corollary \ref{corzip}.  
\end{remark}	


\begin{remark}
	\INSe{Note that if for some $i$, $LV^{(S)}Z$ is non-zero for any $Z \in \cY^{(i)}$, then it satisfies the decoding criterion for any possible request vector $R_i \in \cX^{(S)}$, and hence delivers all possible requests to User $i$. Therefore, Theorem \ref{th:random1} and Corollary \ref{corzip} should be viewed in the context of similar results in \cite{BK, tehrani2012bipartite}, which lead to {\em partial clique-cover} and {\em partition multicast} schemes. There is also a close association with the so-called \em{Main Network Coding Theorem} \cite[Theorem 2.2]{FS} for multicast network coding.
	All of these results rely on the field size $q$ being sufficiently large to invoke Zippel's theorem and its variants. 
	} 
\end{remark}

\begin{remark}
\INSm{
The approach in \cite{BK} to construct a linear IC for a partial clique is based on \INSe{{\em maximum distance separable}} (MDS) codes (this can be used also in the more general case of a multicast group, as described in \cite{tehrani2012bipartite}). Any generator matrix of an MDS linear code of length $n$ and dimension $k$ is such that any $k$ columns are linear independent. \INSe{Suppose that $\cI$ is an ICSI instance and that $d_i$ is the number of uncoded packets $X_j$ known to the receiver $i$. Let} $G$ be a generator matrix of an MDS code of length $n$ and dimension $N=\max \{n-d_i :i\in[m]\}$. Then the sender can broadcast the following \INSe{linear} combination of the columns of $G$:
$$
X_1G^1+...+X_nG^n.
$$
Without loss of generality, suppose that some receiver $i$ has $X_{N+1},...,X_n$, \INSe{and can thus} recover 
$$
X_1G^1+...+X_NG^N.
$$
From the MDS property of $G$, \INSe{the first $N$ columns of $G$ form an invertible matrix so that the user can determine} $(X_1,...,X_N)$.
\INSe{In terms of Lemma \ref{lemdecode}, the side-information is encoded by a matrix $V^{(i)}=[0|I]$ whose rows are standard basis vectors. Appending
	the $N$ rows of $G$ then results in a matrix with row space $\fq^n$.
In the above we get a matrix
	$$ \left[\begin{array}{cc} 
	         G^{[N]} & G^{[n]\backslash [N]}\\
	         0       &  I
	         \end{array}\right], $$
 which has rank $n$, so any possible request vector is contained in $\langle G\rangle+\cX^{(i)}$.}\\
However, this approach with MDS codes is \INSe{not} possible \INSe{in} the more general case of coded side information.\\
Indeed, suppose we have $n=m=4$, $\cX^{(S)}=\FF_2^4$, the user side-information determined by	
	$$
	V^{(1)} = \left[\begin{array}{cccc} 
	1 & 0 & 0 & 1\end{array}\right],
	V^{(2)} = \left[\begin{array}{cccc} 
	0 & 0 & 0 & 1
	\end{array}\right],
	V^{(3)} = \left[\begin{array}{cccc} 
	0 & 1 & 0 & 0
	\end{array}\right],    
	V^{(4)} = \left[\begin{array}{cccc} 
	0 & 0 & 1 & 0
	\end{array}\right],
	$$
and requests 
	$$R_1=[1110],R_2=[0100],R_3=[0010],R_4=[0001].$$
If the approach using an MDS code could be applied, then the matrix 
$$
G=\left[\begin{array}{cccc}
1&0&0&1\\
0&1&0&1\\
0&0&1&1\end{array}\right]
$$
defined over $\mathbb{F}_2$, could be used to encode a vector $X=[X_1,...,X_4]^T$. It is easy to check that $R_1\notin\langle G\rangle+\cX^{(1)}$ and so from Lemma \ref{lemdecode} we have that Receiver $1$ cannot decode. On other hand consider the matrix
$$
L=\left[\begin{array}{cccc}
1&0&0&0\\
0&1&0&1\\
0&0&1&1\end{array}\right],
$$
defined over $\FF_2$. Following the proof of Theorem \ref{th:random1} we have the matrices 
$$
Z^{(1)}=\left[\begin{array}{ccc}
1&0&0\\
0&1&0\\
0&0&1\\
1&0&0\end{array}\right],
Z^{(2)}=\left[\begin{array}{ccc}
1&0&0\\
0&1&0\\
0&0&1\\
0&0&0\end{array}\right],
Z^{(3)}=\left[\begin{array}{ccc}
1&0&0\\
0&0&0\\
0&1&0\\
0&0&1\end{array}\right],
Z^{(4)}=\left[\begin{array}{ccc}
1&0&0\\
0&1&0\\
0&0&0\\
0&0&1\end{array}\right].
$$
It is easy to check that for each $i$ $LZ^{(i)}$ is a $3\times 3$ matrix of full rank. Then the matrix $L$ represents a linear index code for our instance but $L$ does not generate an MDS code.}
\INSe{In fact, direct inspection shows that $\cL +\cX^{(i)}=\FF_2^4$ for each $i$ so that $L$ realizes an $\FF_2$-linear $\cI$-IC for any choice of $R$. This is equivalent to multicast network coding.
}
\end{remark}

We can remove the explicit constraint that $q>m'$ to obtain an alternative lower bound on the probability of a realizable \INSe{linear} solution to the ICCSI problem.
A straightforward counting argument yields the following \INSe{`folklore' result, a proof of which we include for completeness. 
Recall that for positive integers $s \geq r$ the Gaussian coefficient 
$$\qbin{s}{r}{q} := \frac{\prod_{j=0}^{r-1} (q^{s}-q^j)}{\prod_{j=0}^{r-1} (q^{r}-q^j) }$$ 
denotes the number of $r$-dimensional subspaces contained in an $s$-dimensional space over $\fq$. If \INSm{$s < r$} this number is zero.} 

\begin{lemma}\label{lemssp}
	Let $W,V,S$ be subspaces of $\fq^n$ with $W<V\cap S$ and of dimensions $w,v$ and $s$ respectively.
	Suppose that $S \cap V$ has dimension $\ell$.
	The number of $N$-dimensional subspaces $U$ of $S$ satisfying $V\cap U \subset W$ is
	$$\sum_{r=0}^{w} q^{(\INSe{\ell}-r)(N-r)} \qbin{w}{r}{q} \qbin{s-\ell}{N-r}{q}.$$
\end{lemma}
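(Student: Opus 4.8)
The plan is to stratify the $N$-dimensional subspaces $U$ of $S$ with $V\cap U\subseteq W$ according to the integer $r:=\dim(V\cap U)$. First I would observe that since $U\subseteq S$ we have $V\cap U=(V\cap S)\cap U$, so $U_0:=V\cap U$ is an $r$-dimensional subspace of $W$ (using $V\cap U\subseteq W\subseteq V\cap S$), and in particular $0\le r\le w$. Conversely, a subspace $U$ satisfies $V\cap U\subseteq W$ precisely when $(V\cap S)\cap U=U_0$ for exactly one $r$-dimensional subspace $U_0\subseteq W$. Hence the desired count splits as
$$\sum_{r=0}^{w}\ \sum_{\substack{U_0\le W\\ \dim U_0=r}}\#\bigl\{\,U : U_0\subseteq U\subseteq S,\ \dim U=N,\ (V\cap S)\cap U=U_0\,\bigr\}.$$
Since $W<V\cap S$ forces $w\le\ell$, each $U_0$ arising here is automatically a subspace of $V\cap S$, so the inner condition makes sense, and there are $\qbin{w}{r}{q}$ choices of $U_0$.

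To evaluate the inner cardinality for a fixed $U_0$, I would pass to the quotient $\bar S:=S/U_0$, of dimension $s-r$. Subspaces $U$ with $U_0\subseteq U\subseteq S$ correspond bijectively to subspaces $\bar U\le\bar S$ of dimension $N-r$, and, because both $U$ and $V\cap S$ contain $U_0$, the condition $(V\cap S)\cap U=U_0$ translates to $\overline{V\cap S}\cap\bar U=\{0\}$, where $\overline{V\cap S}:=(V\cap S)/U_0$ has dimension $\ell-r$. Thus the inner count equals the number of $(N-r)$-dimensional subspaces of an $(s-r)$-dimensional space that meet a fixed $(\ell-r)$-dimensional subspace trivially. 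This in particular shows the inner count depends only on $r$ (and on $\ell,s,N$), not on the chosen $U_0$.

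I would then compute this standard quantity by projection onto the quotient $\bar S/\overline{V\cap S}$, which has dimension $(s-r)-(\ell-r)=s-\ell$. Any subspace $\bar U$ with $\overline{V\cap S}\cap\bar U=\{0\}$ maps isomorphically onto an $(N-r)$-dimensional subspace of $\bar S/\overline{V\cap S}$, and there are $\qbin{s-\ell}{N-r}{q}$ such images. For a fixed image, the subspaces $\bar U$ mapping onto it are exactly the complements of $\overline{V\cap S}$ inside its $\bigl((N-r)+(\ell-r)\bigr)$-dimensional preimage, and the number of complements of a $d$-dimensional subspace inside an $n$-dimensional space is $q^{d(n-d)}$; here this is $q^{(\ell-r)(N-r)}$. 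Hence the inner count is $q^{(\ell-r)(N-r)}\qbin{s-\ell}{N-r}{q}$.

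Multiplying by $\qbin{w}{r}{q}$ and summing over $r$ yields the claimed formula
$$\sum_{r=0}^{w}q^{(\ell-r)(N-r)}\qbin{w}{r}{q}\qbin{s-\ell}{N-r}{q}.$$
Terms with $N-r<0$ or $N-r>s-\ell$ contribute nothing since the relevant Gaussian coefficient vanishes, so extending the summation range to all of $0\le r\le w$ is harmless. The only slightly delicate point is the passage to the quotient and the translation of the intersection conditions there; once that is set up, everything is routine bookkeeping with dimensions and Gaussian coefficients, and the uniformity of the inner count in $U_0$ comes for free from the quotient computation rather than needing a separate transitivity argument.
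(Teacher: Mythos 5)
Your proof is correct, and its skeleton coincides with the paper's: both arguments stratify the admissible $U$ by the subspace $U_0=V\cap U$, which must be an $r$-dimensional subspace of $W$, contribute the factor $\qbin{w}{r}{q}$ for its choice, and then reduce everything to showing that exactly $q^{(\ell-r)(N-r)}\qbin{s-\ell}{N-r}{q}$ of the $N$-dimensional subspaces of $S$ meet $V$ in precisely a given $U_0$. The difference lies in how that inner count is computed. The paper counts ordered tuples $m_{r+1},\dots,m_N$ extending a basis of $U_0$ and avoiding $V$ at each step, obtaining $q^{\ell(N-r)}\prod_{j=0}^{N-r-1}(q^{s-\ell}-q^j)$ tuples, and then divides by the number $(q^N-q^r)\cdots(q^N-q^{N-1})$ of such ordered extensions lying inside a fixed resulting $U$. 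You instead pass to the quotient $S/U_0$, reduce to counting subspaces meeting $(V\cap S)/U_0$ trivially, and finish with a second projection together with the formula $q^{d(n-d)}$ for the number of complements of a $d$-dimensional subspace of an $n$-dimensional space. The two computations are equivalent and equally standard; your route has the small advantage that the independence of the inner count from the particular choice of $U_0$ is manifest from the quotient construction, whereas the paper's basis-extension count is more self-contained in that it never invokes the complement-counting formula as a separate ingredient. No gaps.
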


\begin{proof}
	Let $M$ be an $r$-dimensional subspace of $V \cap S$.
	A basis $\INSe{\{m_1,...,m_r\}}$ of $\INSe{M}$ can be completed to a linearly independent $N$-set 
	by appending some $\INSe{m_{r+1},...,m_N} \subset S \backslash V$ in
	$$\prod_{j=\ell}^{\ell+N-r-1}(q^{s}-q^j) =q^{\ell(N-r)}\prod_{j=0}^{N-r-1}(q^{s-\ell}-q^{j}) $$
	ways. There are $(q^N-q^r)\cdots (q^N-q^{N-1})$ choices of $\INSe{m_{r+1},...,m_N}$ in 
	$M + \langle \INSe{m_{r+1},...,m_N} \rangle \backslash M$. Therefore there are \\
	$q^{(\ell-r)(N-r)}\qbin{s-\ell}{N-r}{q}$ $N$-dimensional subspaces of $S$ that meet $V$ in a given
	$r$-dimensional subspace of $V$.
	The result now follows since there are $\qbin{w}{r}{q}$ $r$-dimensional subspaces of $W$. 
\end{proof}

\INSe{We'll now apply Lemma \ref{lemssp} for the case $W=\cX^{(i)}, V= \langle R_i, \cX^{(i)} \rangle$ to count the number of subspaces $\cL \in \cX^{(S)}$ contained in their intersection. This will tell us the number of subspaces $\cL$ such that $R_i \notin \cL + \cX^{(i)}$.}

\begin{theorem}\label{thbd2}
	Let $\cI$ be an instance of an ICCSI problem. For each $i \in [m]$, let $\dim (\cX^{(i)}\cap \cX^{(S)}) = w_i$.
	The probability that there exists an $N$-dimensional subspace $\cL$ of $\cX^{(S)}$ such that for each $i \in [m]$,
	$R_i \in \cL + \cX^{(i)}$, is at least
	$$1 - \qbin{d_S}{N}{q}^{-1} \sum_{i=1}^m \sum_{r=0}^{w_i} q^{(w_i+1-r)(N-r)} \qbin{w_i}{r}{q} \qbin{d_S-w_i-1}{N-r}{q}. $$
	In particular, there exists a linear $\cI$-IC of length $N$ if 
	$$ \sum_{i=1}^m \sum_{r=0}^{w_i} q^{(w_i+1-r)(N-r)} \qbin{w_i}{r}{q} \qbin{d_S-w_i-1}{N-r}{q} <  \qbin{d_S}{N}{q} .$$	
\end{theorem}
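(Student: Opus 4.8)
The plan is to count subspaces by complementary events: I want to bound below the probability that a uniformly random $N$-dimensional subspace $\cL$ of $\cX^{(S)}$ satisfies the decoding criterion of Lemma~\ref{lemdecode}, namely $R_i \in \cL + \cX^{(i)}$ for every $i \in [m]$. By the equivalent formulation noted just before the theorem, $R_i \notin \cL + \cX^{(i)}$ is equivalent to $\cL$ missing the coset $R_i + (\cX^{(i)}\cap\cX^{(S)})$; more usefully for counting, $R_i \notin \cL + \cX^{(i)}$ holds precisely when $\cL \cap \langle R_i, \cX^{(i)}\rangle \subseteq \cX^{(i)}$. (Indeed, if $\cL$ meets $\langle R_i,\cX^{(i)}\rangle$ only inside $\cX^{(i)}$, then adding $\cL$ to $\cX^{(i)}$ cannot produce $R_i$; conversely if $R_i \notin \cL+\cX^{(i)}$ then the subspace $\cL$ cannot contain any vector of the form $R_i + a$ with $a \in \cX^{(i)}$, and since $\langle R_i,\cX^{(i)}\rangle = \cX^{(i)} \cup \bigcup_{\lambda\neq 0}(\lambda R_i + \cX^{(i)})$, scaling shows $\cL\cap\langle R_i,\cX^{(i)}\rangle\subseteq\cX^{(i)}$.)

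First I would fix $i$ and apply Lemma~\ref{lemssp} with the identifications $W = \cX^{(i)}\cap\cX^{(S)}$, $V = \langle R_i,\cX^{(i)}\rangle$, and $S = \cX^{(S)}$. I need to check the hypotheses and compute the three dimensions fed into the lemma: $w = \dim W = w_i$; the ambient $S$ has dimension $d_S$; and $\ell = \dim(S\cap V)$. Since $R_i \in \cX^{(S)}$ and $\cX^{(i)}\cap\cX^{(S)}$ already sits in both, one gets $S \cap V = S \cap \langle R_i,\cX^{(i)}\rangle = \langle R_i, \cX^{(i)}\cap\cX^{(S)}\rangle$, which has dimension $w_i + 1$ because $R_i \notin \cX^{(i)}$ (this is part of the ICCSI setup, $R_i \in \cX^{(S)}\setminus\cX^{(i)}$, so $R_i\notin\cX^{(i)}\cap\cX^{(S)}$). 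Also $W < V\cap S$ holds since $W\subsetneq\langle R_i,W\rangle = V\cap S$. Then Lemma~\ref{lemssp} gives that the number of $N$-dimensional $\cL < \cX^{(S)}$ with $\cL\cap V \subseteq W$ — i.e. the number of \emph{bad} subspaces for user $i$ — is exactly
$$\sum_{r=0}^{w_i} q^{(w_i+1-r)(N-r)} \qbin{w_i}{r}{q} \qbin{d_S - w_i - 1}{N-r}{q},$$
substituting $\ell = w_i+1$ and $s = d_S$ into the formula.

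Next I would take a union bound over $i \in [m]$: the number of $\cL$ that fail the decoding criterion for \emph{some} user is at most $\sum_{i=1}^m$ of the above expression. Dividing by the total number of $N$-dimensional subspaces of $\cX^{(S)}$, namely $\qbin{d_S}{N}{q}$, and subtracting from $1$ yields the claimed lower bound on the probability that a random $\cL$ works for all users simultaneously. Finally, whenever that sum is strictly less than $\qbin{d_S}{N}{q}$ the probability is positive, so a good $\cL$ exists; spanning such an $\cL$ by the rows of an $N\times d_S$ matrix $L$ (after expressing a basis of $\cL$ in terms of the rows of $V^{(S)}$, exactly as in the proof of Lemma~\ref{lemlength}) produces a linear $\cI$-IC of length $N$, giving the "in particular" clause.

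The main obstacle is the bookkeeping in the second step — correctly matching the roles of $W,V,S$ to Lemma~\ref{lemssp} and, above all, pinning down $\ell = \dim(\cX^{(S)}\cap\langle R_i,\cX^{(i)}\rangle) = w_i + 1$; the whole shape of the bound hinges on that "$+1$", which is where the hypothesis $R_i\in\cX^{(S)}\setminus\cX^{(i)}$ gets used. Everything else is a union bound and a division, both routine.
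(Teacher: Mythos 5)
Your proposal is correct and follows essentially the same route as the paper: identify the bad event for user $i$ with $\cL$ meeting $\langle R_i,\cX^{(i)}\rangle$ only inside $\cX^{(i)}$, apply Lemma \ref{lemssp} with $W=\cX^{(i)}\cap\cX^{(S)}$, $V=\langle R_i,\cX^{(i)}\rangle$, $S=\cX^{(S)}$ and $\ell=w_i+1$, then finish with a union bound. Your explicit verification of the equivalence $R_i\notin\cL+\cX^{(i)} \Leftrightarrow \cL\cap\langle R_i,\cX^{(i)}\rangle\subseteq\cX^{(i)}$ and of $\dim(\cX^{(S)}\cap\langle R_i,\cX^{(i)}\rangle)=w_i+1$ is a welcome addition to what the paper leaves implicit.
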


\begin{proof}
		Let $i \in [m]$. An $N$-dimensional subspace $\cL$ of $\cX^{(S)}$ satisfies 
		$R_i \in \cL + \cX^{(i)}$ if and only if $(R_i + \cX^{(i)}) \cap \cL$ \INSe{is} non-empty.
		The number of $N$-dimensional subspaces of $\cX^{(S)}$, that miss $R_i + \cX^{(i)}$ is the number of $\cL$ 
		that meet $\langle R_i, \cX^{(i)} \rangle$
		in a subspace of $\cX^{(i)}$. Since $R_i \in \cX^{(S)} \backslash \cX^{(i)}$ by assumption, $\langle R_i, \cX^{(i)} \rangle \cap \cX^{(S)}$ has dimension $w_i+1$. 
		\INSe{Then from Lemma \ref{lemssp}, the number of $N$-dimensional subspaces $\cL$ in $\cX^{(S)}$ that miss $R_i + \cX^{(i)}$ (i.e. the number satisfying $R_i \notin \cL + \cX^{(i)}$), is 
			$$ \sum_{r=0}^{w_i} q^{(w_i+1-r)(N-r)} \qbin{w_i}{r}{q} \qbin{d_S-w_i-1}{N-r}{q}.$$
			The probability that an arbitrary $N$-dimensional subspace of $\cX^{(S)}$ misses $R_i + \cX^{(i)}$ is this number divided by 
			$ \qbin{d_S}{N}{q}$.
		    From the union bound, the probability that an arbitrary $N$-dimensional subspace of $\cX^{(S)}$ misses every $R_i + \cX^{(i)}$ is upper bounded
		    by 	$$\qbin{d_S}{N}{q}^{-1} \sum_{i=1}^m \sum_{r=0}^{w_i} q^{(w_i+1-r)(N-r)} \qbin{w_i}{r}{q} \qbin{d_S-w_i-1}{N-r}{q}. $$}	 
\end{proof}

\begin{remark}
	\INSe{In the above argument, by invoking the union bound, we have assumed the most extreme case. 
		For each $i$, let $S^{(i)}$ denote the set of $N$-dimensional subspaces $\cL$ of $\cX^{(S)}$ such that 
		$$\cX^{(i)} \cap \cL= \langle R_i, \cX^{(i)} \rangle \cap \cL.$$ 
		Then the probability of a decoding failure is maximized 
		when the size of the union of the $S^{(i)}$ is maximized.
		If the $S^{(i)}$ are pairwise disjoint we have 
			$$\left|\;\cup_{i \in [m]} S^{(i)}\;\right| = \sum_{i \in [m]} |S^{(i)}|,$$
	     and so the given bound is sharp. This occurs if no $N$-dimensional subspace is contained in the intersection of any pair of the $S^{(i)}$. Moreover, there is no length $N$ $\fq$-linear $\cI$-IC if and only if 
	     $\cup_{i \in [m]} S^{(i)}$ contains all $N$-dimensional subspaces of $\cX^{(S)}$.}  
\end{remark}

\begin{remark}
	Given an $N$-dimensional subspace $\cL < \cX^{(S)}$ 
	Theorem \ref{thbd2} often yields a better lower bound on the probability that $L \in \fq^{N \times d_S}$ 
	satisfying $\cL=\langle LV^{(S)} \rangle $ represents a linear $\cI$-IC.
	Furthermore, it establishes existence for the case $m \geq q.$ 	
\end{remark}

\begin{example}
	Let $m=6,n=4,q=2$ and let $\cX^{(S)}=\fq^4$.
	Suppose that $\cX^{(i)}$ has dimension $d_i=2$ for each $i \in \{1,...,6\}$.
	According to Theorem \ref{thbd2}, the probability of the existence of a $3 \times 4$ matrix 
	$L$ over $\FF_2$ that represents a linear $\cI$-IC with these parameters is at least 0.2,
	but is inconclusive for the existence of a $2 \times 4$ encoding matrix $L$.
	Indeed there are such $\cI$-IC satisfying $\kappa(\cI)=3\;\INSe{>2=\max\{n-d_i:i \in [6]\}}$, for example, $\cI$ with 
	user side-information determined by	
	$$
	V^{(1)} = \left[\begin{array}{cccc} 
	0 & 0 & 1 & 0 \\
	0 & 0 & 0 & 1
	\end{array}\right],
	V^{(2)} = \left[\begin{array}{cccc} 
	1 & 0 & 0 & 0 \\
	0 & 0 & 0 & 1
	\end{array}\right],
	V^{(3)} = \left[\begin{array}{cccc} 
	1 & 0 & 0 & 0 \\
	0 & 1 & 0 & 0
	\end{array}\right],
	$$
	$$         
	V^{(4)} = \left[\begin{array}{cccc} 
	0 & 1 & 0 & 0 \\
	0 & 0 & 1 & 0
	\end{array}\right],
	V^{(5)} = \left[\begin{array}{cccc} 
	1 & 0 & 0 & 0 \\
	0 & 0 & 1 & 0
	\end{array}\right],
	V^{(6)} = \left[\begin{array}{cccc} 
	0 & 1 & 0 & 0 \\
	0 & 0 & 0 & 1
	\end{array}\right], 
	$$
	and requests 
	$$R_1=[1000],R_2=[0100],R_3=[0010],R_4=[0001],R_5=[0100],R_6=[1000]$$
	has min-rank equal to $3$.
	
	On the other hand, there are several examples of $\cI$ for the same parameters that have min-rank equal to 2,
	such as that defined by:
	$$
	V^{(1)} = \left[\begin{array}{cccc} 
	1 & 0 & 1 & 0 \\
	0 & 0 & 0 & 1
	\end{array}\right],
	V^{(2)} = \left[\begin{array}{cccc} 
	1 & 0 & 0 & 0 \\
	0 & 0 & 1 & 1
	\end{array}\right],
	V^{(3)} = \left[\begin{array}{cccc} 
	1 & 0 & 0 & 0 \\
	0 & 1 & 0 & 0
	\end{array}\right],
	$$
	$$         
	V^{(4)} = \left[\begin{array}{cccc} 
	0 & 1 & 0 & 1 \\
	0 & 0 & 1 & 0
	\end{array}\right],
	V^{(5)} = \left[\begin{array}{cccc} 
	1 & 0 & 1 & 0 \\
	0 & 0 & 0 & 1
	\end{array}\right],
	V^{(6)} = \left[\begin{array}{cccc} 
	0 & 1 & 0 & 0 \\
	0 & 0 & 0 & 1
	\end{array}\right], 
	$$
	and requests 
	$$R_1=[1100],R_2=[0111],R_3=[1010],R_4=[1001],R_5=[0100],R_6=[1111].$$
\end{example}	

We tabulate (see Table II) evaluations of the lower bound on the probability of the existence of linear $\cI$-ICs with $\cX^{(S)}=\FF_4^{10}$
and length $N=10-d$ where $d_i=d$ for each $i$th user in $[m],m\;\INSe{=m'}>1$.
In each case, we find that the maximal value of $m'=m$ for which the bound of Theorem \ref{thbd2} can be applied is $m=q$. However,
existence for $m>q$ can be established by Theorem \ref{thbd2} for $N>n-d$. In Table III we record parameters $N$ and $d=10-N+1$ of $\cI$-IC known to exist using the bound of Theorem \ref{thbd2} for a maximal number of users $m$.

\begin{table}
\begin{center}
\begin{tabular}{|c|c|c|c|c|c|}
	\hline
	  $q$ & $N$ & $d$  & $m$ & Theorem \ref{th:random1} & Theorem \ref{thbd2}  \\
	\hline  
	  4 & 1 & 9 & 2 & \INSm{0.001}                   & \INSm{$0.500$} \\
	  4 & 1 & 9 & 3 & \INSm{$9.536 \times 10^{-7}$}  & \INSm{$0.250$} \\ 
	  4 & 1 & 9 & 4 & $- $                           & \INSm{$2.861 \times 10^{-6}$} \\
	  4 & 2 & 8 & 2 & \INSm{$9.536\times 10^{-7} $}  & \INSm{$0.523 $}\\
	  4 & 2 & 8 & 3 & \INSm{$9.094\times 10^{-13}$}  & \INSm{$0.285 $}\\
	  4 & 2 & 8 & 4 &  $- $                          & \INSm{$0.046 $}\\
	  4 & 3 & 7 & 2 & \INSm{$9.313\times 10^{-10}$}  & \INSm{$0.528 $}\\
	  4 & 3 & 7 & 3 & \INSm{$8.673\times 10^{-19}$}  & \INSm{$0.293 $}\\
	  4 & 3 & 7 & 4 & $- $                           & \INSm{$0.057 $}\\
	  4 & 3 & 6 & 2 &\INSm{$9.094\times 10^{-13}$}  & \INSm{$0.530 $}\\
	  4 & 4 & 6 & 3 &\INSm{$8.271\times 10^{-25}$}   & \INSm{$0.295 $}\\
	  4 & 4 & 6 & 4 &$- $                            & \INSm{$0.060 $}\\
	  4 & 5 & 5 & 2 &\INSm{$8.881\times 10^{-16} $}  & \INSm{$0.532 $}\\
	  4 & 5 & 5 & 3 &\INSm{$7.888\times 10^{-31} $}  & \INSm{$0.296 $}\\
	  4 & 5 & 5 & 4 &$- $                            & \INSm{$0.061$}\\
	  4 & 6 & 4 & 2 &\INSm{$8.673\times 10^{-19} $} & \INSm{$0.532$}\\
	  4 & 6 & 4 & 3 &\INSm{$7.523\times 10^{-37}$}   & \INSm{$0.298$}\\
	  4 & 6 & 4 & 4 &$- $                            & \INSm{$0.064$}\\
	  4 & 7 & 3 & 2 &\INSm{$8.470\times 10^{-22} $} & \INSm{$0.536$}\\
	  4 & 7 & 3 & 3 &\INSm{$7.174\times 10^{-43} $} & \INSm{$0.304$}\\
	  4 & 7 & 3 & 4 &$- $                            & \INSm{$0.072$}\\
	  4 & 8 & 2 & 2 &\INSm{$8.271\times 10^{-25} $} & \INSm{$0.553$}\\
	  4 & 8 & 2 & 3 &\INSm{$6.842\times 10^{-49} $} & \INSm{$0.329$}\\
	  4 & 8 & 2 & 4 &$- $                            & \INSm{$0.106$}\\
	  4 & 9 & 1 & 2 &\INSm{$8.077\times 10^{-28} $} & \INSm{$0.624$}\\
	  4 & 9 & 1 & 3 &\INSm{$6.525\times 10^{-55} $}  & \INSm{$0.437$}\\
	  4 & 9 & 1 & 4 &$- $                            & \INSm{$0.249$}\\
	  4 & 9 & 1 & 5 &$- $                            & \INSm{$0.062$}\\
	 \hline
	 
\end{tabular}
\caption{}
\end{center}

\end{table}

\begin{table}
	
	\begin{center}
		\begin{tabular}{|c|c|c|c|c|}
			\hline
			$q$ & $N$ & $d$  & $m$ & Theorem \ref{thbd2}  \\
			\hline
			4 & 2 & 9 & 16 & \INSm{$1.4305 \times 10^{-5} $} \\
			4 & 3 & 8 & 16 & \INSm{$0.0117$}\\
			4 & 4 & 7 & 16 & \INSm{$0.0148$}\\
			4 & 5 & 6 & 16 & \INSm{$0.0162$}\\
			4 & 6 & 5 & 16 & \INSm{$0.0191$}\\
			4 & 7 & 4 & 16 & \INSm{$0.0299$}\\
			4 & 8 & 3 & 17 & \INSm{$0.0155$}\\
			4 & 9 & 2 & 21 & \INSm{$0.0156$}\\
			8 & 2 & 9 & 64 & \INSm{$5.8673\times 10^{-8}$} \\
			8 & 3 & 8 & 64 & \INSm{$0.0017$}\\
			8 & 4 & 7 & 64 & \INSm{$0.0019$}\\
			8 & 5 & 6 & 64 & \INSm{$0.0019$}\\
			8 & 6 & 5 & 64 & \INSm{$0.0021$}\\
			8 & 7 & 4 & 64 & \INSm{$0.0038$}\\
			8 & 8 & 3 & 65 & \INSm{$0.0019$}\\
			8 & 9 & 2 & 73 & \INSm{$0.0019$}\\
			16& 2 & 9 & 256	&\INSm{$2.3192\times 10^{-10}$} \\
			16& 3 & 8 & 256 &\INSm{$0.0002$}\\
			16& 4 & 7 & 256 &\INSm{$0.0002$}\\
			16& 5 & 6 & 256 & \INSm{$0.0002$}\\
			16& 6 & 5 & 256 & \INSm{$0.0002$}\\
			16& 7 & 4 & 256	& \INSm{$0.0004$}\\
			16& 8 & 3 & 257 & \INSm{$0.0002$}\\
			16& 9 & 2 & 273 &\INSm{$0.0002$}\\
			\hline       
		\end{tabular}
		\caption{}
	\end{center}
\end{table}

\section{Error Correction in the ICCSI Problem}

We now discuss error-correction in the ICCSI problem, extending the ideas presented in \cite{Son1}, \INSe{in} two ways. The first direction is in the context of coded-side information, as presented in \cite{shum,dai}. The second allows for error correction for the rank metric, in the transmission of matrices in $\fq^{N \times t}$ when $t>1$.  
For the remainder, we let $\cM \subset \fq^{n\times t}$ denote the message space associated with the ICCSI problem. 

\INSe{\subsection{The ECIC Problem}}

\begin{definition}
	Let $\cI$ be an instance of an ICCSI problem and let $N$ be a positive integer. We say that the map	
	$$E:\fq^{n\times t}\to\fq^{N \times t},$$
	is a $\gd$-error correcting code for $\cI$ of length $N$, and say that $E$ is an $(\cI,\gd)$-ECIC, if for each $i$th receiver there exists a decoding map	
	$$	D_i:\fq^{N\times t}\times\cX^{(i)}\to\fq^t,$$
	satisfying
	$$ D_i(E(X)+W,A)=R_iX $$
	for all $X\in\cM$ and $W \in \fq^{N\times t},\, w(W)\le\gd$ for some vector $A\in \cX^{(i)}$. 
	$E$ is called a linear code for $\cI$, or an $\fq$-linear $\cI$-ECIC if $E(X)=LV^{(S)}X$ for some $L\in\fq^{N \times d_S}$, in which case we say that $L$ represents the linear $(\cI,\gd)$-ECIC $E$.
\end{definition}

The standard coding theory argument gives a criterion for the existence of a $(\cI,\gd)$-ECIC, extending \cite[Lemma 3.8]{Son1}. \INSe{The following result is the error correction analogue of Lemma \ref{lemdecode}. It basically says that the $i$th receiver can correct up to $\delta$ errors and uniquely decode its requested packet $R_iX$ if any pair of confusable data matrices have encodings $LV^{(S)}X$ and $LV^{(S)}X'$ that are at distance at least $2\gd+1$ apart.} 

\begin{theorem}\label{th:err}
	Let $\cI$ be an instance of an ICCSI problem and let $N$ be a positive integer.
	A matrix $L\in\fq^{N\times d_S}$ represents a linear $(\cI,\gd)$-ECIC if and only if for all $i\in[m]$ it holds that
	$$
	w\left(LV^{(S)}(X - X')\right)\ge 2\gd+1,
	$$
	for all $X,X' \in \cM$ such that $X - X' \in \cZ^{(i)}$.
\end{theorem}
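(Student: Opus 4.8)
The plan is to run the classical minimum-distance decoding argument, exactly as in the error-free case already settled in Corollary~\ref{cordec} (the special case $\gd=0$), of which this is the natural generalization, mirroring \cite[Lemma 3.8]{Son1}. Throughout I use the observation recorded earlier that $X,X'\in\cM$ with $X-X'\in\cZ^{(i)}$ means precisely that $V^{(i)}X=V^{(i)}X'$ (User $i$ sees the same side information) while $R_iX\ne R_iX'$; these are the confusable pairs for Receiver $i$.

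\emph{Sufficiency.} Assume $w\!\left(LV^{(S)}(X-X')\right)\ge 2\gd+1$ for every $i\in[m]$ and every $X,X'\in\cM$ with $X-X'\in\cZ^{(i)}$. I would define $D_i$ as follows: on a received matrix $Y\in\fq^{N\times t}$, together with the side information of User $i$ (which in particular determines $V^{(i)}X$ for the transmitted $X$), output $R_iX'$ for any $X'\in\cM$ with $V^{(i)}X'=V^{(i)}X$ and $d\!\left(LV^{(S)}X',Y\right)\le\gd$. Such an $X'$ exists since $X$ itself qualifies whenever $Y=LV^{(S)}X+W$ with $w(W)\le\gd$. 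For well-definedness, if $X'$ and $X''$ both qualify then $V^{(i)}(X'-X'')=0$ and, by the triangle inequality, $w\!\left(LV^{(S)}(X'-X'')\right)\le d\!\left(LV^{(S)}X',Y\right)+d\!\left(Y,LV^{(S)}X''\right)\le 2\gd$; were $R_iX'\ne R_iX''$ we would have $X'-X''\in\cZ^{(i)}$ with $w\!\left(LV^{(S)}(X'-X'')\right)\le 2\gd$, contradicting the hypothesis. Hence $R_iX'=R_iX''$, so $D_i$ is well defined and returns $R_iX$, and $L$ represents a linear $(\cI,\gd)$-ECIC.

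\emph{Necessity.} I would prove the contrapositive. Suppose that for some $i$ there exist $X,X'\in\cM$ with $X-X'\in\cZ^{(i)}$ and $w\!\left(LV^{(S)}(X-X')\right)\le 2\gd$. Put $M:=LV^{(S)}(X-X')$ and split it as $M=E_1-E_2$ with $w(E_1)\le\gd$ and $w(E_2)\le\gd$: in the Hamming metric let $E_1$ agree with $M$ on at most $\gd$ of its nonzero rows and vanish elsewhere and set $E_2:=E_1-M$; in the rank metric take a decomposition $M=\sum_{k=1}^{r}u_kv_k^{\top}$ with $r=\rk(M)\le 2\gd$ and set $E_1:=\sum_{k\le\gd}u_kv_k^{\top}$, $E_2:=-\sum_{k>\gd}u_kv_k^{\top}$. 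Then $Y:=LV^{(S)}X-E_1=LV^{(S)}X'-E_2$ is simultaneously a legitimate channel output when $X$ is transmitted (error matrix $-E_1$, weight $\le\gd$) and when $X'$ is transmitted (error matrix $-E_2$, weight $\le\gd$). Since $V^{(i)}X=V^{(i)}X'$, Receiver $i$ is presented with identical side information and the identical word $Y$ in both cases, so any $D_i$ must return a single value, which cannot equal both $R_iX$ and $R_iX'$; but $R_iX\ne R_iX'$ because $X-X'\in\cZ^{(i)}$. Hence $L$ does not represent a linear $(\cI,\gd)$-ECIC.

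\emph{Main obstacle.} The only step that is not completely routine is the error-splitting used in the necessity direction, and checking it uniformly for both metrics named in the statement: the Hamming case is immediate, while the rank case needs the rank-one decomposition observation above to guarantee $\rk(E_1),\rk(E_2)\le\gd$. A secondary point needing care is the bookkeeping around the second argument of $D_i$: one must make explicit that confusable $X,X'$ deliver to the receiver exactly the same data (this is the content of $V^{(i)}X=V^{(i)}X'$, which is built into $\cZ^{(i)}$), so that a decoder succeeding on one necessarily acts identically on the other. With these in hand the proof is a verbatim adaptation of the classical $2\gd+1$ argument.
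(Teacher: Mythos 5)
Your proof is correct and follows essentially the same route as the paper: both reduce the statement to the disjointness of the radius-$\gd$ balls around the encodings of confusable pairs, and both exhibit a common received word by splitting a weight-$\le 2\gd$ difference into two pieces of weight $\le\gd$ (your rank-one decomposition $\sum_k u_kv_k^{\top}$ plays the role of the paper's row-echelon splitting, and your explicit minimum-distance decoder fills in what the paper dismisses as ``an adaptation of the usual coding theory arguments''). No gaps.
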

\begin{proof}
	Let $L\in\fq^{N\times d_S}$ represent a linear $\cI$-IC.
	For each $X \in \fq^{n \times t}$, define
	$$B(X,\gd)=\{Y\,:\,Y=LV^{(S)}X+W, \, W\in\fq^{N\times t},\, w(W)\le\gd\}.$$
	\INSe{It is not hard to see by an adaptation of the usual coding theory arguments that} the $i$th receiver can correct $\gd$ errors if and only if
	$$B(X,\gd)\cap B(X',\gd)=\emptyset$$
	for each $X,X' \in \cM$ such that 
	$V^{(i)}X =V^{(i)}X' $ and $R_iX \neq R_iX'$.
	
	
	\INSm{
		For the Hamming metric, the argument is almost identical to that for a classical error-correcting code. 
		Suppose then that $t > 1$ and that $w$ measures the rank weight.\\
		Let $X,X' \in \cM$ such that $X - X' \in \cZ^{(i)}$.
		Let $LV^{(i)}X = Z$ and let $LV^{(i)}X' = Z'$. 
		Let $A=Z-Z'$ and suppose $w(A)=d\le 2\gd$. We may assume that $A$ is in row-echelon form. Thus we can write $A=W+W'$ with $w(W)=\gd$ and $w(W')=d-\gd\le \gd$, where the first $\gd$ rows of $W$ are the corresponding rows of $A$ and the others are zero, and the rows indexed by $[d]\backslash [\gd]$ in $W'$ are the corresponding rows of $A$ and the remaining $N-d+\gd$ rows are zero. That is
		$$A=\left[\begin{array}{c}
		A_{[\gd]}\\
		A_{[d]\backslash [\gd]}\\
		0\end{array}\right], \, 
		W=\left[\begin{array}{c}
		A_{[\gd]}\\
		0
		\end{array}\right]
		\mbox{ and }
		W'=\left[\begin{array}{c}
		0\\
		A_{[d] \backslash [\gd]}\\
		0\end{array}\right].$$
		Then $Z-W=Z'+W'\in B(X,\gd)\cap B(X',\gd)$, so if the requires spheres $ B(X,\gd)$ are disjoint, $L$ represents a linear $\cI$-ECIC.\\
		Conversely, if $B(X,\gd)\cap B(X',\gd)\ne \emptyset$ then $Z+W=Z'+W'$ for some $W$ and $W'$, each having rank at most $\gd$. 
		Thus $Z-Z'=W'-W$ and in particular, by the triangular inequality, $w(Z-Z')\le w(W')+w(W)\le 2\gd$.
	}

\end{proof}

Let $\cI$ be an instance of the ICCSI problem and let $i \in [m]$.
Let $X,X' \in \cM$ such that $X-X' \in \cZ^{(i)}$. 
For the case $t>1$, for any $L \in \fq^{N\times d_S}$ we have
$$w(LV^{(S)}(X-X')\INSe{)} = \rk(LV^{(S)}(X-X')\INSe{)} \leq \rk (V^{(S)}(X-X'))\leq w(X-X').$$
Then $L$ does not represent a linear $(\cI,\gd)$-ECIC if $\rk(X-X') \leq 2\gd$.  
We therefore assume, for $t>1$, that $\cM$ is a subset of $\fq^{n \times t}$ of minimum rank distance at least $2\gd+1$, 
and furthermore that $\{ V^{(S)}X : X \in \cM \} \subset \fq^{d_S \times t}$ has minimum rank distance $2\gd+1$.
Delsarte's result \cite[Theorem 5.4]{D78} yields that $|\cM| \leq q^{(d_S-2\gd)t}$. We define $\cM^\triangle:=\{X-X' : X,\INSm{X'} \in \cM\}$.
For the \INSm{Hamming metric} case, we assume $\cM=\fq^{n\times t}$.


We define the following sets for any non-negative integer $\gd$:
\begin{eqnarray*}
	\cY_\gd^{(i)} &:= & \{A \in \cY^{(i)}: \rk(A)\geq 2\gd+1\}\\
	\cZ_\gd^{(i)}&:=&\{A \in \cZ^{(i)}: \rk(A)\geq 2\gd+1\}
\end{eqnarray*}

Clearly a matrix $L\in \fq^{N \times d_S}$ represents a linear $(\cI,\gd)$-ECIC if and only if for all $i\in[m]$ it holds that
$$
w\left(LV^{(S)}Z\right)\ge 2\gd+1,
$$
for all $Z \in \cZ^{(i)}$ if \INSm{$w$ is the Hamming metric} and for all $Z \in\cZ_\delta^{(i)} \cap \cM^\triangle$ if \INSm{$w$ represents the rank metric}.

\begin{remark}\label{remzdel}
	Note that if $\rk(LV^{(S)}Z)\ge 2\gd+1$ whenever $Z \in \cZ^{(i)}$ has rank at least $2\gd+1$, 
	then $\rk(LV^{(S)}Z)\ge r$ whenever $Z \in \cZ^{(i)}$ has rank at least $r$ for $r \in [2\gd+1]$.
	This can be seen by the following inductive argument.
	Suppose $\rk(LV^{(S)}Z)\ge r$ whenever $\rk(Z)\geq r$ for some positive integer $r$ and suppose there exists some such $Z \in \cZ^{(i)}$.
	Let $X \in \cZ^{(i)}$ have rank $r-1$ and let $\cS$ be set of $r-1$ linearly independent columns of $X$. 
	If $\rk(LV^{(S)}X) < r-1$ then $S$ cannot be completed to a linearly independent set of size $r$ in $\cZ^{(i)}$ by hypothesis, 
	Then every column of $Z$  is contained in the span of $\cS$, so in particular the column space of $X$ contains an $r$-dimensional space, which is impossible. 
\end{remark}  

\INSe{\subsection{Bounds on the Optimal Length of an Error Correcting Index Code}}

We denote by $\cN(\cI,\gd)$ the optimal length $N$ of an \INSe{$\fq$-linear} $(\cI,\gd)$-ECIC. Clearly $\cN(\cI,0)=\gk(\cI)$.
This section is devoted to obtaining bounds on this number. In \cite{Son1} a number of bounds are discussed, namely the
$\ga$-bound, $\gk$-bound and Singleton bound. All of these bounds have extensions for Hamming metric $(\cI,\gd)$-ECICs. The $\ga$-bound holds for rank metric $(\cI,\gd)$-ECICs, but \INSe{the question of the rank distance analogue of the $\gk$-bound still open}.
We consider these two cases separately. 

\subsubsection{\INSe{Hamming metric $(\cI,\gd)$-ECICs}}\label{secham}

\INSe{We assume throughout this section that $w$ represents the Hamming weight and that $\cN(\cI,\gd)$ is the optimal length of an $\fq$-linear Hamming metric $(\cI,\gd)$-ECIC.}
We denote by \INSe{$N(k,d)$} the optimal length $\ell$ of an \INSm{$\FF_q$-$[\ell,k,d]$ code, i.e.} a $k$-dimensional $\fq$-linear code in $\fq^{\ell}$ of minimum Hamming distance $d$. 

\INSm{Thanks to the following result we can restrict our study to the case $t=1$.
\begin{lemma}\label{lem:thx}
Let $t\ge1$. Consider two instances $\cI=(t,m,n,\cX,\cX^{(S)},R)$ and $\cI'=(1,m,n,\cX,\cX^{(S)},R)$. Then a matrix $L\in\fq^{N\times d_S}$ represents a linear $(\cI,\gd)$-ECIC if and only if $L$ represents a linear $(\cI',\gd)$-ECIC.
\end{lemma}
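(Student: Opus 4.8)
The plan is to reduce the vector case ($t > 1$) to the scalar case ($t=1$) by showing that the error-correction condition in Theorem \ref{th:err}, applied to the Hamming metric, depends only on the matrix $LV^{(S)}$ and the combinatorial data $m,n,\cX,\cX^{(S)},R$, and not on the block length $t$. First I would fix $L\in\fq^{N\times d_S}$ and recall the characterization established just above: for the Hamming metric, $L$ represents a linear $(\cI,\gd)$-ECIC if and only if $w(LV^{(S)}Z)\ge 2\gd+1$ for every $i\in[m]$ and every $Z\in\cZ^{(i)}$ (here $\cZ^{(i)}\subset\fq^{n\times t}$ for the instance $\cI$, and the analogous set $\cZ^{(i)}_{\rm scalar}\subset\fq^{n}$ for $\cI'$). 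So the claim amounts to: this family of inequalities holds for all $Z\in\fq^{n\times t}$ in the relevant sets if and only if it holds for all $z\in\fq^n$ in the corresponding sets.

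The key observation is that the Hamming weight $w(LV^{(S)}Z)$ of a matrix is the number of nonzero rows, and that a row of $LV^{(S)}Z$ is nonzero precisely when it is not in the kernel behaviour column-by-column. Concretely, the plan is: if $L$ fails to represent a $(\cI',\gd)$-ECIC, there is some $i$ and some $z\in\fq^n$ with $V^{(i)}z=0$, $R_i z\ne 0$, and $w(LV^{(S)}z)\le 2\gd$; then take $Z\in\fq^{n\times t}$ to be the matrix whose first column is $z$ and whose remaining columns are zero. One checks directly that $V^{(i)}Z=0$, $R_iZ\ne 0$ (so $Z\in\cZ^{(i)}$), and $w(LV^{(S)}Z)=w(LV^{(S)}z)\le 2\gd$, since appending zero columns does not change which rows are nonzero. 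Hence $L$ fails for $\cI$ as well. Conversely, if $L$ fails for $\cI$, there is $Z\in\cZ^{(i)}$ with $w(LV^{(S)}Z)\le 2\gd$; since $R_iZ\ne0$, some column $Z^j$ of $Z$ satisfies $R_iZ^j\ne 0$, and that column also satisfies $V^{(i)}Z^j=0$ and $w(LV^{(S)}Z^j)\le w(LV^{(S)}Z)\le 2\gd$ (the nonzero rows of a column are a subset of the nonzero rows of the whole matrix). So $z=Z^j$ witnesses the failure of $L$ for $\cI'$.

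Putting the two directions together gives the equivalence. I would phrase the write-up as a short chain of implications using the characterization stated right before the lemma, so that no new machinery is needed. The only point requiring a little care — and the one I would highlight — is the direction from $\cI$ to $\cI'$, namely extracting a single column that simultaneously witnesses $R_iZ^j\ne 0$, $V^{(i)}Z^j=0$, and the row-support bound; this is immediate once one notes that the Hamming (row) support of $LV^{(S)}Z$ is the union of the supports of its columns $LV^{(S)}Z^j$, so each column has support size at most $2\gd$, and at least one column has $R_iZ^j\ne 0$. There is no real obstacle here; the content is entirely bookkeeping about which rows/columns vanish, and the lemma is essentially a formal consequence of the fact that Hamming weight over $\fq^{n\times t}$ counts nonzero rows together with the column-wise nature of all the maps $V^{(i)}$, $R_i$, $LV^{(S)}$.
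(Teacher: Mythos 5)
Your proof is correct and follows essentially the same route as the paper's: both directions rest on the same two observations, namely that a scalar witness $z$ can be embedded as a single column of a matrix in $\cZ^{(i)}$ by padding with zero columns, and that from a matrix witness one extracts a column $Z^j$ with $R_iZ^j\neq 0$ whose support is contained in the row support of $LV^{(S)}Z$. Your contrapositive phrasing, and your explicit care in noting that only \emph{some} column need satisfy $R_iZ^j\neq 0$ (the paper loosely asserts every column lies in $\cZ^{(i)}\cup\{0\}$), is if anything slightly more precise, but the argument is the same.
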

\begin{proof}
\INSe{The matrices $V^{(i)}$, $V^{(S)}$ and the request vectors $R_i$'s are the same} for the two instances $\cI$ and $\cI'$. 
Let
$$
\cZ^{(i)}_t=\{Z\in\FF_q^{n\times t}\mid V^{(i)}Z=0\mbox{ and } R_iZ\ne 0\}\mbox{ and }\cZ^{(i)}_1=\{Z\in\FF_q^{n \times 1}\mid V^{(i)}Z=0\mbox{ and } R_iZ\ne 0\}.
$$
If $L$ represents a linear $(\cI,\gd)$-ECIC, then for all $Z\in\cZ^{(i)}_t$ $w(LV^{(S)}Z)\ge 2\gd+1$, where $w$ counts the number of non-zero rows. 
On the other hand, if $L$ realizes a linear $(\cI',\gd)$-ECIC, then for all $Z\in\cZ^{(i)}_1$, $w(LV^{(S)}Z)\ge 2\gd+1$, where in this case the number of non-zero rows is the same of the non-zero entries of the $N \times 1$ vector $LV^{(S)}Z$.\\
Note that any $Z\in\cZ^{(i)}_t$ satisfies 
$$
Z=[Z^1, ... ,Z^t],
$$
with $Z^j\in \cZ^{(i)}_1\cup\{0\}$ for all $j\in [t]$, and at least one is different from zero. 
Without loss of generality, suppose $Z_1\ne 0$. Then if $L$ represents a linear $(\cI',\gd)$-ECIC we have 
$$
LV^{(S)}Z=[LV^{(S)}Z^1,...,LV^{(S)}Z^t]
$$
where the column $LV^{(S)}Z^1$ has at least $2\gd+1$ non-zero entries, which implies that at least $2\gd+1$ rows of $LV^{(S)}Z$ are non-zero.\\
Conversely, let $L\in\fq^{N\times d_S}$ represent a linear $(\cI,\gd)$-ECIC, and let $Z\in\cZ^{(i)}_t$ such that 
$$
Z=[\begin{array}{cccc}
Z^1,0,...,0\end{array}],
$$
with $Z^1\in \cZ^{(i)}_1$. Then 
$$LV^{(S)}Z=[ LV^{(S)}Z^1,0,..., 0]$$ 
has at least $2\gd+1$ non-zero rows, which means that $2\gd+1$ entries of the first column are non-zero. Therefore $L$ represents a linear $(\cI',\gd)$-ECIC.
\end{proof}
}
\INSm{For the remainder of this section, we fix $t=1$, knowing that all the results hold also for $t>1$ from Lemma \ref{lem:thx}.}

We define the set:
$$\cJ(\cI):=\{U < \fq^n: U \backslash \{0\} \subset \cup_{i \in [m]}\cZ^{(i)} \}. $$
We denote by $\ga(\cI)$ the maximum dimension of any element of $\cJ(\cI)$, that is, the maximum dimension of any subspace of $\fq^{n}$ in $ \cup_{i \in [m]}\cZ^{(i)} \cup \{0\}$. 

\INSe{We first give an extended $\ga$-bound, which gives a lower bound on the length of an optimal Hamming metric $(\cI,\gd)$-ECIC. It may be helpful for the reader to think of this as the algebraic analogue of the independence number of a side-information graph.}

\begin{proposition}{($\ga$-bound)}
	Let $\cI$ be an instance of the ICCSI problem. Then
	$$
	N(\ga(\cI),2\gd+1)\le\cN(\cI,\gd).
	$$
\end{proposition}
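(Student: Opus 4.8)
The plan is to show that an $\fq$-linear Hamming metric $(\cI,\gd)$-ECIC of length $N = \cN(\cI,\gd)$ induces an $\fq$-linear $[N, \ga(\cI), 2\gd+1]$ code, whence $N \geq N(\ga(\cI), 2\gd+1)$ by definition of $N(k,d)$ as the optimal (shortest) such length. First I would fix a subspace $U \in \cJ(\cI)$ of dimension $\ga(\cI)$, so that $U \backslash \{0\} \subset \cup_{i \in [m]} \cZ^{(i)}$, and let $L \in \fq^{N \times d_S}$ represent an optimal linear $(\cI,\gd)$-ECIC. The candidate code is then $\cC := \{ L V^{(S)} z : z \in U \} \subset \fq^N$ (recall we have fixed $t=1$, so these are column vectors); equivalently $\cC$ is the image of $U$ under the linear map $z \mapsto LV^{(S)}z$.

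The key steps are as follows. First, $\cC$ is $\fq$-linear since it is a linear image of the subspace $U$. Second, I would show the map $z \mapsto LV^{(S)}z$ restricted to $U$ is injective: if $z \in U \backslash \{0\}$ then $z \in \cZ^{(i)}$ for some $i$, and by Theorem \ref{th:err} (applied to the confusable pair giving difference $z$, using $\cM = \fq^{n \times 1}$) we have $w(LV^{(S)}z) \geq 2\gd+1 \geq 1$, so $LV^{(S)}z \neq 0$. Injectivity then gives $\dim \cC = \dim U = \ga(\cI)$. Third, the same inequality $w(LV^{(S)}z) \geq 2\gd+1$ for every nonzero $z \in U$ says precisely that every nonzero codeword of $\cC$ has Hamming weight at least $2\gd+1$, i.e. $\cC$ has minimum distance at least $2\gd+1$. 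Hence $\cC$ is an $\fq$-linear code in $\fq^N$ of dimension $\ga(\cI)$ and minimum distance at least $2\gd+1$, so $N \geq N(\ga(\cI), 2\gd+1)$, which is the claim.

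I do not expect any serious obstacle here — the argument is a direct translation of the classical $\ga$-bound (independence-number bound) for error-correcting index codes in \cite{Son1} into the ICCSI language, and the only real content is the observation, already packaged in Theorem \ref{th:err} together with Remark \ref{remzdel}, that membership of $z$ in some $\cZ^{(i)}$ forces $w(LV^{(S)}z) \geq 2\gd+1$. The one point to state carefully is that we are in the case $t=1$ (justified for all $t$ by Lemma \ref{lem:thx}) and that $\cM = \fq^n$ in the Hamming setting, so that every $z \in \cZ^{(i)}$ genuinely arises as a difference $X - X'$ of two message vectors; this is immediate since $\cM = \fq^n$ is the whole space. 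A minor bookkeeping remark is that if $\ga(\cI) = 0$ the bound is vacuous, and if $2\gd+1 > N$ no such code exists, but in that degenerate case $L$ could not represent a $(\cI,\gd)$-ECIC at all unless $\cup_i \cZ^{(i)} = \emptyset$, so the inequality holds trivially.
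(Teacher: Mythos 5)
Your proposal is correct and follows essentially the same route as the paper: the paper likewise takes a maximal-dimension $U \in \cJ(\cI)$ (written via a generator matrix $G$), forms the code $C_U = \{LV^{(S)}GX : X \in \fq^k\}$, and uses $w(LV^{(S)}Z) \geq 2\gd+1$ for all $Z \in U\backslash\{0\} \subset \cup_i \cZ^{(i)}$ to get both full rank and minimum distance $2\gd+1$. Your explicit remarks about the reduction to $t=1$ and $\cM = \fq^n$ are consistent with the paper's standing assumptions in that subsection.
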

\begin{proof}
Let $L \in \fq^{N\times d_s}$ represent a linear $(\cI,\gd)$-ECIC.	
Let $U \in \cJ(\cI)$ have dimension $k$ and let $G$ be a rank $k$ matrix in $\fq^{n \times k}$ such that $U = \{GX: X \in \fq^k\}.$ 
Let $$C_U =\{ LV^{(S)}G X: X \in \fq^{k} \} \subset \fq^{N }.$$ 
Then \INSe{every element of }$U \backslash \{0\}$ \INSe{is contained in} $\cZ^{(i)}$ for some $i \in [m]$, so $w(LV^{(S)}GX) \geq 2\delta + 1$ for all non-zero $X \in \fq^k$, \INSe{by assumption}.
This furthermore implies that $LV^{(S)}G$ has rank $k$ over $\fq$.
It follows that $C_U$ is an $\fq$-$[N,k,2\gd+1]$ code with $N \geq N(k,2\gd+1)$. 
Choosing $U$ of maximal dimension in $\cJ(\cI)$ for an $(\cI,\delta)-ECIC$ of optimal length we see that $$ N(\ga(\cI),2\gd+1)\le\cN(\cI,\gd).$$
\end{proof}

\INSe{We give sufficient conditions for tightness of the $\ga$-bound.}

\begin{corollary}
	Let $\cI$ be an instance of the ICCSI problem. If there exists a matrix $B \in \fq^{\ga(\cI)\times d_S}$ satisfying
	$B{V^{(S)}}^\perp \cap {V^{(i)}}^\perp \subset {R_i}^\perp $ for all $i \in [m]$ then
	$$
	N(\ga(\cI),2\gd+1) = \cN(\cI,\gd).
	$$ 
\end{corollary}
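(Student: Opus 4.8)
The plan is to show that under the given hypothesis on $B$, the $\alpha$-bound is not merely a lower bound but is achieved: we will construct an $\fq$-linear $(\cI,\gd)$-ECIC of length exactly $N(\ga(\cI),2\gd+1)$, and combine this with the $\alpha$-bound to get equality. First I would pick $U \in \cJ(\cI)$ of maximal dimension $\ga(\cI)$, so that $U \backslash \{0\} \subset \cup_{i\in[m]}\cZ^{(i)}$. Let $G \in \fq^{n\times \ga(\cI)}$ be a full-rank matrix with $U = \{GX : X \in \fq^{\ga(\cI)}\}$, equivalently $U = \langle G \rangle$ as a column space. The hypothesis gives $B$ with $B{V^{(S)}}^\perp \cap {V^{(i)}}^\perp \subset R_i^\perp$ for all $i$; I want to turn $B$ into an encoding matrix by composing with an optimal linear code.

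The key construction: let $C$ be an optimal $\fq$-$[N,\ga(\cI),2\gd+1]$ code, so $N = N(\ga(\cI),2\gd+1)$, and let $\Gamma \in \fq^{N \times \ga(\cI)}$ be a generator matrix of $C$ written so its \emph{columns} span $C$ (i.e. $C = \{\Gamma X : X \in \fq^{\ga(\cI)}\}$), with $\Gamma$ of rank $\ga(\cI)$. Then set $L := \Gamma B \in \fq^{N \times d_S}$ — here I need $B$ to have $\ga(\cI)$ rows, which it does by hypothesis. To verify $L$ represents a linear $(\cI,\gd)$-ECIC, by Theorem~\ref{th:err} (with $t=1$, invoking Lemma~\ref{lem:thx}) it suffices to show $w(LV^{(S)}Z) \ge 2\gd+1$ for every $i \in [m]$ and every $Z \in \cZ^{(i)}$. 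The main step is to argue that the hypothesis $B{V^{(S)}}^\perp \cap {V^{(i)}}^\perp \subset R_i^\perp$ forces $BV^{(S)}Z \neq 0$ for $Z \in \cZ^{(i)}$, and moreover that the image of the map $Z \mapsto BV^{(S)}Z$ restricted to $\cZ^{(i)} \cup \{0\}$ lands inside a fixed $\ga(\cI)$-dimensional space on which $\Gamma$ acts as an embedding into the code $C$; since $Z \in \cZ^{(i)}$ means $V^{(i)}Z = 0$ and $R_iZ \neq 0$, writing $Z$ in terms of a basis adapted to ${V^{(i)}}^\perp$ and using that $BV^{(S)}$ applied to ${V^{(i)}}^\perp$ lies in $B{V^{(S)}}^\perp$ (modulo matching up ambient spaces correctly), the condition $B{V^{(S)}}^\perp \cap {V^{(i)}}^\perp \subset R_i^\perp$ should say precisely that the only way $BV^{(S)}Z$ can vanish is if $R_iZ=0$, contradicting $Z\in\cZ^{(i)}$. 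Hence $BV^{(S)}Z$ is a nonzero vector, $\Gamma (BV^{(S)}Z) = LV^{(S)}Z$ is a nonzero codeword of $C$, so $w(LV^{(S)}Z) \ge 2\gd+1$. This gives $\cN(\cI,\gd) \le N = N(\ga(\cI),2\gd+1)$, and the $\alpha$-bound gives the reverse inequality, yielding equality.

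The hard part will be making precise the identification between the subspace $B{V^{(S)}}^\perp \cap {V^{(i)}}^\perp$ living in $\fq^n$ and the condition that $BV^{(S)}Z = 0$ for $Z$ with $V^{(i)}Z=0$: the containment in the hypothesis is stated between subspaces of $\fq^n$, so I need to check the dimensions and the direction of the maps $V^{(S)}$, $B$, $V^{(i)}$ line up so that ``$Z \in {V^{(i)}}^\perp$ (viewing $Z$ as a column), $V^{(S)}Z$ something, $BV^{(S)}Z=0$'' is genuinely equivalent to membership of some vector in $B{V^{(S)}}^\perp \cap {V^{(i)}}^\perp$. I also need to confirm that the hypothesis implicitly forces $B{V^{(S)}}^\perp$ to be large enough (dimension at least $n - d_i$ in the relevant sense) that this reasoning produces genuinely nonzero vectors rather than being vacuous — essentially that the $\ga(\cI)$ rows of $B$ are enough to ``cover'' all the $\cZ^{(i)}$ after passing through $V^{(S)}$. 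Once that bookkeeping is settled, the rest is the routine observation that full-rank $\Gamma$ sends nonzero vectors to nonzero codewords of weight $\ge 2\gd+1$.
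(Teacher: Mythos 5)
Your proposal is correct and follows essentially the same route as the paper: the paper also sets $L=GB$ with $G$ an $N\times\ga(\cI)$ generator matrix of an optimal $\fq$-$[N,\ga(\cI),2\gd+1]$ code and observes that the hypothesis forces $BV^{(S)}Z\neq 0$ for every $Z\in\cZ^{(i)}$, so that $GBV^{(S)}Z$ is a nonzero codeword of weight at least $2\gd+1$, giving $\cN(\cI,\gd)\le N(\ga(\cI),2\gd+1)$, with the $\ga$-bound supplying the reverse inequality. The ``bookkeeping'' you worry about is immediate once you use the paper's convention that $A^\perp$ denotes the null space of the matrix $A$ in $\fq^n$, so $B{V^{(S)}}^\perp\cap{V^{(i)}}^\perp\subset{R_i}^\perp$ reads literally as: $BV^{(S)}Z=0$ and $V^{(i)}Z=0$ imply $R_iZ=0$, i.e.\ $BV^{(S)}Z\neq 0$ on $\cZ^{(i)}$.
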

\begin{proof}
	Let $B \in \fq^{\ga(\cI)\times d_S}$ satisfy the hypothesis of the corollary. Let $G$ be a generator matrix
	for the $\fq$-linear $[N,\ga(\cI),2\gd+1]$ code $\INSe{ \{GX : X \in \fq^{d_S} \}}$, and let $L = GB$. Then 
	$$w(LV^{(S)}Z) = w(GBV^{(S)}Z) \geq 2\gd+1$$ whenever $BV^{(S)}Z \neq 0$. If $BV^{(S)}Z=0$, then $Z \notin \cZ^{(i)}$ 
	for any $i \in [m]$ by our choice of
	$B$, so it follows that $L$ represents an $\fq$-linear Hamming metric $(\cI,\gd)$-ECIC of length 
	$$N=N(\ga(\cI),2\gd+1) \geq  \cN(\cI,\gd).$$ 
\end{proof}

Setting $\delta = 0$ in the above gives the following \INSe{lower bound on the min-rank of an instance as an immediate consequence}.

\begin{corollary}
	Let $\cI$ be an instance of the ICCSI problem. Then
	$$ \ga(\cI) \leq \kappa(\cI),$$
	with equality occurring if there exists $L \in \fq^{\ga(\cI)\times d_S}$ satisfying
	$L{V^{(S)}}^\perp \cap {V^{(i)}}^\perp \subset {R_i}^\perp $ for all $i \in [m]$.
\end{corollary}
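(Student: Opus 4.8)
The plan is to derive the statement directly from the two preceding results on the $\ga$-bound by specializing the error radius to $\gd=0$. First I would record two elementary facts. Fact one: $\cN(\cI,0)=\gk(\cI)$, which is precisely Lemma~\ref{lemlength}, since a $(\cI,0)$-ECIC is nothing but an $\cI$-IC. Fact two: $N(k,1)=k$ for every positive integer $k$; indeed any $k$-dimensional $\fq$-linear code automatically has minimum Hamming distance at least $1$, so the constraint $d\ge 1$ is vacuous, the $k\times k$ identity matrix is (in the generator-matrix convention used in the proof of the tightness corollary) a generator matrix of such a code of length $k$, and no code of length less than $k$ can have dimension $k$.

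Granting these, the inequality $\ga(\cI)\le\gk(\cI)$ is immediate: the $\ga$-bound (the Proposition above) specialized to $\gd=0$ reads $N(\ga(\cI),1)\le\cN(\cI,0)$, which by the two facts becomes $\ga(\cI)\le\gk(\cI)$.

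For the equality clause I would invoke the preceding Corollary (sufficient conditions for tightness of the $\ga$-bound) at $\gd=0$. Assume $L\in\fq^{\ga(\cI)\times d_S}$ satisfies $L{V^{(S)}}^\perp\cap{V^{(i)}}^\perp\subset{R_i}^\perp$ for all $i\in[m]$. In the proof of that corollary, take $G$ to be the $\ga(\cI)\times\ga(\cI)$ identity matrix, a generator matrix of an $\fq$-linear $[\ga(\cI),\ga(\cI),1]$ code (such a code exists because $N(\ga(\cI),1)=\ga(\cI)$), so that the matrix denoted $L$ in that proof equals $GB=B$. The same computation there ($w(LV^{(S)}Z)\ge 1$ whenever $LV^{(S)}Z\ne 0$, and $Z\notin\cup_{i\in[m]}\cZ^{(i)}$ whenever $LV^{(S)}Z=0$ by the choice of $B$) shows that $L$ represents an $\fq$-linear Hamming-metric $(\cI,0)$-ECIC, equivalently an $\fq$-linear $\cI$-IC by Corollary~\ref{cordec}, of length $\ga(\cI)$. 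Hence $\gk(\cI)\le\ga(\cI)$ by Lemma~\ref{lemlength}, and combined with the inequality already established this gives $\ga(\cI)=\gk(\cI)$.

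I do not anticipate a genuine obstacle, since the result is flagged as an immediate consequence and the argument is a pure specialization. The only points deserving a line of care are the identity $N(k,1)=k$ and the observation that the $\gd=0$ instance of the tightness corollary really yields a realizing matrix of length exactly $\ga(\cI)$ (a minimum-distance-one code of dimension $\ga(\cI)$ needs no padding coordinates); both are routine.
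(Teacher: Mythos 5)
Your proposal is correct and follows exactly the route the paper intends: the paper presents this corollary as the $\gd=0$ specialization of the $\ga$-bound and its tightness corollary, using $\cN(\cI,0)=\gk(\cI)$ and $N(k,1)=k$, which is precisely what you spell out. The only difference is that you make explicit the two routine facts the paper leaves implicit, which is harmless.
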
	 

The reader will observe that in fact the condition $L{V^{(S)}}^\perp \cap {V^{(i)}}^\perp \subset {R_i}^\perp$ for each $i \in [m]$ 
is simply the equivalent statement to that of Lemma \ref{lemdecode}, found by dualization.

Both the $\gk$-bound and the Singleton bound hold in the context of coded-side information. The proofs are trivial extensions of those given in \cite{Son1} and are \INSe{retained here only for the convenience of the reader.}  

\begin{proposition}{($\gk$-bound)}\label{propgk}
	Let $\cI$ be an instance of the ICCSI problem. Then
	$$
	\cN(\cI,\gd)\le N(\gk(\cI),2\gd+1).
	$$
\end{proposition}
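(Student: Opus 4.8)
The plan is to exhibit an explicit $\fq$-linear $(\cI,\gd)$-ECIC of length $N(\gk(\cI),2\gd+1)$, built by composing an optimal index code for $\cI$ with a good classical error-correcting code. First I would invoke Lemma \ref{lemlength} to obtain a matrix $L_0 \in \fq^{\gk(\cI)\times d_S}$ of rank $\gk(\cI)$ representing an $\fq$-linear $\cI$-IC; by Corollary \ref{cordec} this means $L_0 V^{(S)} Z \neq 0$ for every $i \in [m]$ and every $Z \in \cZ^{(i)}$. Separately, let $G \in \fq^{N \times \gk(\cI)}$ be a generator matrix of an $\fq$-linear $[N,\gk(\cI),2\gd+1]$ code realizing the optimal length $N = N(\gk(\cI),2\gd+1)$; since $G$ has rank $\gk(\cI)$, the map $y \mapsto Gy$ is injective on $\fq^{\gk(\cI)}$ and sends every nonzero vector to a vector of Hamming weight at least $2\gd+1$.

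The candidate encoding matrix is $L := G L_0 \in \fq^{N \times d_S}$. To verify it works I would appeal to the Hamming-metric criterion established just before Remark \ref{remzdel} (equivalently Theorem \ref{th:err} with $\cM = \fq^{n\times t}$ and $t=1$ by Lemma \ref{lem:thx}): it suffices to show $w(L V^{(S)} Z) \geq 2\gd+1$ for every $i \in [m]$ and every $Z \in \cZ^{(i)}$. Fix such an $i$ and $Z$. Since $L_0$ represents an $\cI$-IC, the vector $v := L_0 V^{(S)} Z \in \fq^{\gk(\cI)}$ is nonzero. Then $L V^{(S)} Z = G v$, and because $G$ generates a code of minimum distance $2\gd+1$ and $v \neq 0$, we get $w(Gv) \geq 2\gd+1$, as required. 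Hence $L$ represents an $\fq$-linear Hamming-metric $(\cI,\gd)$-ECIC of length $N = N(\gk(\cI),2\gd+1)$, and therefore $\cN(\cI,\gd) \leq N(\gk(\cI),2\gd+1)$.

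I do not anticipate a serious obstacle here — the argument is the natural concatenation construction and every ingredient (Lemma \ref{lemlength}, Corollary \ref{cordec}, the error-correction criterion preceding Remark \ref{remzdel}, and Lemma \ref{lem:thx} to reduce to $t=1$) is already available in the excerpt. The one point that needs a line of care is the rank bookkeeping: one must note that $v = L_0 V^{(S)} Z$ is genuinely nonzero (not merely that $V^{(S)}Z \neq 0$), which is exactly the content of Corollary \ref{cordec} applied to the index code $L_0$, and that composing with the full-rank $G$ neither kills $v$ nor lowers its image below weight $2\gd+1$. Everything else is a routine substitution.
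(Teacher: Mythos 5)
Your proof is correct and is essentially identical to the paper's: both construct $L = G L_0$ (the paper's $L_2 L_1$) by composing an optimal length-$\gk(\cI)$ index code with a generator matrix of an $[N,\gk(\cI),2\gd+1]$ code, and both conclude via the criterion that $L_0V^{(S)}Z \neq 0$ on each $\cZ^{(i)}$ forces $w(GL_0V^{(S)}Z)\geq 2\gd+1$. Your write-up is if anything slightly more careful about the nonvanishing of $v = L_0V^{(S)}Z$ and the reduction to $t=1$.
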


\INSe{\begin{proof}
Let $L_1\in\fq^{N_1\times d_S}$ represent an optimal $\fq$-linear $\cI$-IC of length $N_1=\gk(\cI)$.
Let $L_2 \in \fq^{N\times N_1}$ have rank $N_1$, such that the code $$C=\{L_2 X : X \in \fq^{N_1}\} < \fq^N$$ is an
$[N,N_1,2\gd+1]$ linear code over $\fq$ with $N = N(N_1,2\gd+1)$ for some $\gd$.
Since $LV^{(S)}Z \in \fq^{N_1}$ is non-zero for all $Z \in \cS(\cI)$,  $$w(L_2L_1V^{(S)} Z)\geq 2 \delta + 1,$$ for all such $Z$. 
Then $L=L_2L_1$ represents a linear $(\cI,\gd)$-ECIC of length 
$$N=N(\gk(\cI),2\gd+1) \geq \cN(\cI,\gd).$$
	\end{proof}}
	
	\begin{proposition}{(Singleton bound)}
		Let $\cI$ be an instance of the ICCSI problem. Then
		$$ \gk(\cI)+2\gd \leq\cN(\cI,\gd).$$
	\end{proposition}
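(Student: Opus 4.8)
The plan is to run the classical puncturing argument for the Singleton bound in the ICCSI setting, using the coset characterization of linear index codes established above. By Lemma \ref{lem:thx}, and since the quantity $\gk(\cI)$ does not depend on $t$, it suffices to treat the case $t=1$, so that $w$ is the Hamming weight counting nonzero entries of a vector in $\fq^N$.

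First I would fix a matrix $L\in\fq^{N\times d_S}$ representing an optimal $\fq$-linear $(\cI,\gd)$-ECIC, so $N=\cN(\cI,\gd)$. By Theorem \ref{th:err} (in the form recorded just before Remark \ref{remzdel} for $t=1$), for every $i\in[m]$ and every $Z\in\cZ^{(i)}$ we have $w(LV^{(S)}Z)\ge 2\gd+1$; in particular each $\cZ^{(i)}$ is nonempty, since $R_i\in\cX^{(S)}\backslash\cX^{(i)}$, so $N\ge 2\gd+1$ and the step below is well defined. Now delete any $2\gd$ rows of $L$ to obtain $L'\in\fq^{(N-2\gd)\times d_S}$. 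For each $i\in[m]$ and $Z\in\cZ^{(i)}$, the vector $L'V^{(S)}Z$ is obtained from $LV^{(S)}Z$ by deleting $2\gd$ coordinates; since $LV^{(S)}Z$ has at least $2\gd+1$ nonzero coordinates, $L'V^{(S)}Z$ retains at least one nonzero coordinate, hence $L'V^{(S)}Z\ne 0$. By Corollary \ref{cordec}, $L'$ represents an $\fq$-linear $\cI$-IC of length $N-2\gd$. Finally, Lemma \ref{lemlength} gives that the optimal length of an $\fq$-linear $\cI$-IC equals $\gk(\cI)$, so $N-2\gd\ge\gk(\cI)$, that is $\cN(\cI,\gd)=N\ge\gk(\cI)+2\gd$, as required.

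There is essentially no obstacle here: all the work has been done in Theorem \ref{th:err}, Corollary \ref{cordec} and Lemma \ref{lemlength}, and the proof is the natural extension of the Singleton bound of \cite{Son1} to coded side information. The only points worth a brief remark are the reduction to $t=1$ via Lemma \ref{lem:thx} and the observation that $\cZ^{(i)}\ne\emptyset$, which together make the puncturing step legitimate; both are immediate.
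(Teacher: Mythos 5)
Your proof is correct and follows essentially the same route as the paper's: puncture an optimal $(\cI,\gd)$-ECIC matrix $L$ by deleting $2\gd$ rows, observe via Theorem \ref{th:err} that the resulting $L'$ still satisfies $w(L'V^{(S)}Z)\ge 1$ on each $\cZ^{(i)}$ and hence represents a linear $\cI$-IC of length $N-2\gd$, and conclude by Lemma \ref{lemlength}. Your added remarks on the reduction to $t=1$ and the nonemptiness of $\cZ^{(i)}$ (guaranteeing $N\ge 2\gd+1$) are harmless refinements of points the paper leaves implicit.
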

	
\INSe{\begin{proof}
	Let $L\in\fq^{N\times d_S}$ represent an optimal linear $(\cI,\gd)$-ECIC over $\fq$, so that $N=\cN(\cI,\gd)$. 
	Let $L'$ the matrix obtained by deleting any $2\gd$ rows of $L$. 
	By Theorem \ref{th:err}, for each $i\in[m]$, 
	$$w\left(LV^{(S)}Z\right)\ge 2\gd+1,\; \text{ for all }Z \in {\mathcal Z}^{(i)},$$
	so that 
	$$w\left(L'V^{(S)}Z\right)\ge 1,\; \text{ for all }Z \in {\mathcal Z}^{(i)}.$$
	So $L'$ is a linear index code of length $N-2\delta$ for the instance $\cI$. Now $L'$ has at least $\gk(\cI)$ rows so that
	$$\gk(\cI) \leq \cN(\cI,\gd)-2\gd.$$
		\end{proof}}
		
		
		\begin{example}
			Let $m=6,n=5,q=2$ and let $\cX^{(S)}=\fq^5$.
			Suppose that $\cX^{(i)}$ has dimension $d_i=2$ for each $i \in \{1,...,6\}$.
			Let $\cI$ be the instance defined by 
			user side-information
			$$
			V^{(1)} = \left[\begin{array}{ccccc} 
			0 & 1 & 1 & 1 & 0 \\
			0 & 0 & 1 & 1 & 1
			\end{array}\right],
			V^{(2)} = \left[\begin{array}{ccccc} 
			1 & 0 & 0 & 0 & 1\\
			0 & 0 & 1 & 1 & 0
			\end{array}\right],
			V^{(3)} = \left[\begin{array}{ccccc} 
			1 & 1 & 1 & 1 & 0 \\
			0 & 0 & 0 & 1 & 1
			\end{array}\right],
			$$
			$$         
			V^{(4)} = \left[\begin{array}{ccccc} 
			1 & 0 & 0 & 1 & 0 \\
			0 & 1 & 1 & 1 & 1
			\end{array}\right],
			V^{(5)} = \left[\begin{array}{ccccc} 
			0 & 0 & 1 & 1 & 0 \\
			0 & 0 & 0 & 1 & 1
			\end{array}\right],
			V^{(6)} = \left[\begin{array}{ccccc} 
			1 & 0 & 0 & 1 & 0 \\
			0 & 0 & 1 & 1 & 0
			\end{array}\right], 
			$$
			and requests 
			$$R_1=[10000],R_2=[10000],R_3=[00101],R_4=[10001],R_5=[11000],R_6=[00111].$$
			It can be checked that $\gk(\cI)=3$. 
			Moreover, $\cup_{i \in [6]} \cZ^{(i)} $ contains the non-zero elements in the span of
			$\{		(1 0 1 1 0),	(1 0 0 1 1),	(0 1 1 0 0)	\}$. Then $\alpha(\cI)= \gk(\cI)=3.$
			It follows from the $\ga$-bound that $6=N(3,3)=N(\ga(\cI),3) \le \cN(\cI,1)$.
			From the $\gk$-bound we have $6=N(3,3)= N(\gk(\cI),3) \ge \cN(\cI,1)$.
		\end{example}
		
		Reed-Solomon codes and their $q$-analogues in the form of Gabidulin codes \cite{G85} are examples of MDS and {\em maximum rank distance} (MRD) codes respectively. They were first introduced in \cite{D78}. In fact any extended generalized Reed-Solomon code over $\fq$ is an MDS code of length $q+1$ \cite[Theorem 5.3.4]{HP03} so the existence of such codes is assured for such lengths. It is conjectured that any $\fq$-$[N,k,d]$ MDS code satisfies $N \leq q + 1$ unless $q$ is even and $k=3$ or $k=q-1$ (in which case $N \leq q+2$) \cite{HP03}.

		\begin{corollary}
			Let $\cI$ be an instance of the ICCSI problem. If $q\ge \gk(\cI)+2\gd-1$ then
			$$
			\cN(\cI,\gd)= \gk(\cI)+2\gd.
			$$
		\end{corollary}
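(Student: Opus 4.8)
The plan is to sandwich $\cN(\cI,\gd)$ between the Singleton bound and the $\gk$-bound, and observe that when $q$ is large enough these two bounds coincide. First I would invoke the Singleton bound (the Proposition immediately preceding the Corollary), which gives
$$\gk(\cI)+2\gd \le \cN(\cI,\gd).$$
For the reverse inequality I would use the $\gk$-bound (Proposition \ref{propgk}), which states
$$\cN(\cI,\gd)\le N(\gk(\cI),2\gd+1),$$
so it suffices to show that $N(\gk(\cI),2\gd+1) = \gk(\cI)+2\gd$ whenever $q \ge \gk(\cI)+2\gd-1$. Recall $N(k,d)$ denotes the optimal (shortest) length of an $\fq$-linear $[\ell,k,d]$ code.

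Next I would produce an explicit MDS code of the right parameters. Set $k=\gk(\cI)$ and $d=2\gd+1$, and let $\ell = k + d - 1 = \gk(\cI)+2\gd$. A (generalized/extended) Reed–Solomon code over $\fq$ of length $\ell$ and dimension $k$ exists precisely when $\ell \le q+1$, i.e. when $q \ge \ell - 1 = \gk(\cI)+2\gd-1$, which is exactly our hypothesis; by the cited result \cite[Theorem 5.3.4]{HP03} such a code is MDS, hence has minimum distance $\ell - k + 1 = 2\gd+1 = d$. Therefore $N(\gk(\cI),2\gd+1) \le \gk(\cI)+2\gd$. Combined with the Singleton bound for codes (no $[\ell,k,d]$ code can have $\ell < k+d-1$), which gives $N(k,d) \ge k+d-1$ in general, we get $N(\gk(\cI),2\gd+1) = \gk(\cI)+2\gd$.

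Putting the pieces together:
$$\gk(\cI)+2\gd \le \cN(\cI,\gd) \le N(\gk(\cI),2\gd+1) = \gk(\cI)+2\gd,$$
so equality holds throughout, giving $\cN(\cI,\gd) = \gk(\cI)+2\gd$.

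I do not anticipate a serious obstacle here: both bounding propositions are already proved in the excerpt, and the only genuine input is the existence of an MDS code of length $\gk(\cI)+2\gd \le q+1$, which is standard (extended Reed–Solomon codes). The one point to be slightly careful about is the edge case where $\gk(\cI)+2\gd = q+2$ versus $q+1$ — but the hypothesis $q \ge \gk(\cI)+2\gd-1$ forces length $\le q+1$, so the generic Reed–Solomon construction suffices and we never need the exceptional $q$-even, $k\in\{3,q-1\}$ cases of the MDS conjecture. It is also worth noting explicitly that the Singleton bound for the index code already gives one inequality, so the corollary is really just the statement that this lower bound is achieved, with the achievability coming from composing an optimal min-rank matrix $L_1$ (of rank $\gk(\cI)$, which exists by Lemma \ref{lemlength}) with a generator matrix of the Reed–Solomon code, exactly as in the proof of Proposition \ref{propgk}.
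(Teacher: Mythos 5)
Your proof is correct and follows essentially the same route as the paper: the lower bound from the index-coding Singleton bound, the upper bound from the $\gk$-bound, and the identification $N(\gk(\cI),2\gd+1)=\gk(\cI)+2\gd$ via extended Reed--Solomon (MDS) codes under the hypothesis $q\ge\gk(\cI)+2\gd-1$. The only cosmetic difference is that you exhibit an MDS code of length exactly $\gk(\cI)+2\gd$, whereas the paper invokes the length-$(q+1)$ extended Reed--Solomon code directly; both yield the same conclusion.
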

		
		\begin{proof}
			If $q\ge \gk(\cI)+2\gd-1$ then there exists an $\fq$-linear $[q+1,\gk(\cI),2\delta+1]$ MDS code, namely an extended Reed-Solomon code. Then we obtain 
			$$ \gk(\cI)+2\gd \leq \cN(\cI,\gd) \leq N(k(\cI),2\delta + 1) = \gk(\cI)+2\gd.$$ 
		\end{proof}
		
		As usual, we let $V_q(n,r)$ be the size of a Hamming sphere of radius $r$ in $\fq^n$. We have the following generalization of \cite[Theorem 6.1]{Son1}. 
		
		\begin{theorem}\label{th:random2}
			Let $\cI$ be an instance of the ICCSI problem. Let $L\in\fq^{N\times d_S}$ be selected uniformly at random over $\fq$. 
			The probability that $L$ corresponds to a Hamming metric $\fq$-linear $(\cI,\gd)$-ECIC is at least 
			$$
			1-\sum_{i=1}^mq^{n-d_i-1}(q-1)\frac{V_q(N,2\gd)}{q^N}.
			$$
			In particular there exists an $\fq$-linear $(\cI,\gd)$-ECIC \INSe{of length $N$} if 
			$$N > n-d-1 + \log_q(m(q-1)V_q(N,2\gd)),$$
			where $d = \min\{d_i: i \in [m]\}$.
		\end{theorem}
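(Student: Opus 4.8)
The plan is a first-moment (union-bound) argument anchored on the characterization in Theorem~\ref{th:err}. Since $t=1$ is in force in this section, that theorem says $L\in\fq^{N\times d_S}$ represents a Hamming metric $(\cI,\gd)$-ECIC if and only if $w(LV^{(S)}Z)\ge 2\gd+1$ for every $i\in[m]$ and every $Z\in\cZ^{(i)}$. Consequently $L$ \emph{fails} to do so precisely when the bad event
$$B_{i,Z}:=\{\, w(LV^{(S)}Z)\le 2\gd \,\}$$
occurs for some $i\in[m]$ and some $Z\in\cZ^{(i)}$, so it suffices to bound $\Pr\big(\bigcup_{i,Z} B_{i,Z}\big)$ from above.

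First I would count $|\cZ^{(i)}|$. Since $\rk(V^{(i)})=d_i$, the solution set of $V^{(i)}Z=0$ is ${V^{(i)}}^\perp$, of dimension $n-d_i$. Because $R_i\in\cX^{(S)}\backslash\cX^{(i)}$ and $\cX^{(i)}=\langle V^{(i)}\rangle$, the functional $Z\mapsto R_iZ$ is not identically zero on ${V^{(i)}}^\perp$ (otherwise $R_i\in({V^{(i)}}^\perp)^\perp=\cX^{(i)}$), so $\{Z\in{V^{(i)}}^\perp : R_iZ=0\}$ is a hyperplane of ${V^{(i)}}^\perp$ of dimension $n-d_i-1$. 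Hence
$$|\cZ^{(i)}| = q^{\,n-d_i}-q^{\,n-d_i-1}=q^{\,n-d_i-1}(q-1),$$
which is exactly the coefficient appearing in the claimed bound.

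Next I would compute $\Pr(B_{i,Z})$ for a fixed $i$ and fixed $Z\in\cZ^{(i)}$. Setting $y:=V^{(S)}Z\in\fq^{d_S}$, we have $y\ne 0$: writing $R_i=B_iV^{(S)}$ (possible since $R_i\in\cX^{(S)}=\langle V^{(S)}\rangle$) gives $R_iZ=B_iy$, and $R_iZ\ne 0$ forces $y\ne 0$. Now $LV^{(S)}Z=Ly$, and since the rows of $L$ are independent and uniform over $\fq^{d_S}$ and $y\ne 0$, each coordinate of $Ly$ is uniform over $\fq$ and the coordinates are independent (cf. Remark~\ref{iir}); equivalently, $L\mapsto Ly$ is a surjective linear map, hence $Ly$ is uniform over $\fq^N$. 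Therefore $\Pr(B_{i,Z})=\Pr(w(Ly)\le 2\gd)=V_q(N,2\gd)/q^N$, independent of $i$ and $Z$.

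Finally the union bound yields
$$\Pr\big(L\text{ is not an }(\cI,\gd)\text{-ECIC}\big)\le\sum_{i=1}^m\sum_{Z\in\cZ^{(i)}}\Pr(B_{i,Z})=\sum_{i=1}^m q^{\,n-d_i-1}(q-1)\,\frac{V_q(N,2\gd)}{q^N},$$
and taking complements gives the stated lower bound on the success probability. For the ``in particular'' clause, bounding each $q^{\,n-d_i-1}$ by $q^{\,n-d-1}$ with $d=\min_i d_i$ shows the failure probability is at most $m\,q^{\,n-d-1}(q-1)V_q(N,2\gd)/q^N$, which is strictly below $1$ exactly when $q^N>m(q-1)V_q(N,2\gd)q^{\,n-d-1}$, i.e.\ when $N>n-d-1+\log_q\big(m(q-1)V_q(N,2\gd)\big)$; in that range the success probability is positive, so some $L$ works. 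There is no genuinely hard step here: the only points needing care are the exact count $|\cZ^{(i)}|$ and the fact that $V^{(S)}Z\ne 0$ for $Z\in\cZ^{(i)}$, both of which hinge on the standing hypothesis $R_i\in\cX^{(S)}\backslash\cX^{(i)}$.
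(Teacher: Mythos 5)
Your proposal is correct and follows essentially the same route as the paper: a union bound over the bad events $w(LV^{(S)}Z)\le 2\gd$ for $Z\in\cZ^{(i)}$, using $|\cZ^{(i)}|=q^{n-d_i-1}(q-1)$ and the per-event failure probability $V_q(N,2\gd)/q^N$. The only difference is that you spell out the details the paper leaves implicit (the count of $|\cZ^{(i)}|$, the fact that $V^{(S)}Z\ne 0$, and the uniformity of $LV^{(S)}Z$), all of which are argued correctly.
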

		\begin{proof}
Let $L$ be selected uniformly at random in $\fq^{N\times d_S}$.
	If $w(LV^{(S)}Z)\leq 2\gd$ for some $Z$ in $\cZ^{(i)}$ then $L$ is not $\gd$-delta error correcting at the $i$th decoder.
	The probability of this occurring at the $i$th receiver is upper bounded by
	$$\frac{|\cZ^{(i)}|V_q(N,2\gd)}{q^{N}} =q^{n-d_i-1-N}(q-1) V_q(N,2\gd)$$
	so from the union bound the probability of this occurring at some $i$th decoder is at most
	$$\sum_{i \in [m]} q^{n-d_i-1-N}(q-1) V_q(N,2\gd) \leq mq^{n-d-1}(q-1)\frac{ V_q(N,2\gd)}{q^N}.$$
		\end{proof}
		
		\begin{remark}
			\INSe{If we let $m''$ denote the number of equivalence classes of $[m]$ under the relation $\hat{m}$ that $i$ and $j$
			are equivalent if $\cZ^{(i)}=\cZ^{(j)}$, then in the above we obtain the following refinement:
			There exists an $\fq$-linear $(\cI,\gd)$-ECIC \INSe{of length $N$} if
			$$N > n-d-1 + \log_q(m''(q-1)V_q(N,2\gd)),$$
			where $d = \min\{d_i: i \in [m]\}$.}
		\end{remark}
		
		
		Let $H_q$ denote the $q$-ary entropy function:
		$$
		H_q:(0,1)\to \mathbb R:x\mapsto x\log_q(q-1)-x\log_q(x)-(1-x)\log_q(1-x).
		$$
		It is well known that the function $H_q(x)$ is continuous and increasing on $(0,1-(1/q))$. 
		A proof of the following can be found in \cite{L94}.
		\begin{lemma}\label{lm:hq} Let $\lambda \in (0,1-(1/q))$ be such that $n\lambda$ is an integer. Then
			$$
			V_q(n,\lambda n)\le q^{H_q(\lambda)n}.
			$$	
		\end{lemma}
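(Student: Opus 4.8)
The plan is to establish the classical volume bound for Hamming balls by the standard ``truncated binomial theorem'' argument, specialized to the hypothesis $0<\lambda<1-1/q$. First I would write the ball volume explicitly as
$$V_q(n,\lambda n)=\sum_{i=0}^{\lambda n}\binom{n}{i}(q-1)^i,$$
which is legitimate since $\lambda n$ is assumed to be an integer. The goal is to produce a constant $c>0$, depending only on $n,\lambda,q$, such that $c\cdot V_q(n,\lambda n)\le 1$, and then to recognize $c^{-1}$ as $q^{H_q(\lambda)n}$.

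The constant comes from the binomial theorem applied to $\lambda+(1-\lambda)=1$:
$$1=\bigl(\lambda+(1-\lambda)\bigr)^n=\sum_{i=0}^{n}\binom{n}{i}\lambda^i(1-\lambda)^{n-i}\ \ge\ \sum_{i=0}^{\lambda n}\binom{n}{i}\lambda^i(1-\lambda)^{n-i},$$
where the truncation is valid because all summands are nonnegative (here we use $0<\lambda<1$). The key step is to lower bound each truncated summand by a multiple of $\binom{n}{i}(q-1)^i$. Writing $\lambda^i(1-\lambda)^{n-i}=(1-\lambda)^n\,(q-1)^i\bigl(\tfrac{\lambda}{(q-1)(1-\lambda)}\bigr)^i$, one checks that the inequality $\lambda<1-1/q$ is \emph{equivalent} to $\lambda\le(q-1)(1-\lambda)$, i.e. to the base $b:=\tfrac{\lambda}{(q-1)(1-\lambda)}$ being at most $1$. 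Consequently $i\mapsto b^i$ is non-increasing, so $b^i\ge b^{\lambda n}$ for every $i\le\lambda n$, and substituting this bound yields
$$1\ \ge\ (1-\lambda)^n\, b^{\lambda n}\sum_{i=0}^{\lambda n}\binom{n}{i}(q-1)^i\ =\ (1-\lambda)^n\, b^{\lambda n}\,V_q(n,\lambda n).$$

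Finally I would solve for $V_q(n,\lambda n)$ and simplify: rearranging gives $V_q(n,\lambda n)\le(1-\lambda)^{-n}b^{-\lambda n}=(q-1)^{\lambda n}\lambda^{-\lambda n}(1-\lambda)^{-(1-\lambda)n}$, and taking $\log_q$ of the right-hand side produces exactly $n\bigl(\lambda\log_q(q-1)-\lambda\log_q\lambda-(1-\lambda)\log_q(1-\lambda)\bigr)=nH_q(\lambda)$, which is the claimed bound. I do not anticipate a real obstacle; the only point requiring care is the verification that the hypothesis $\lambda<1-1/q$ is precisely what forces $b\le 1$ (so that replacing the exponent $i$ by the larger value $\lambda n$ is a genuine lower bound), together with $\lambda>0$ ensuring that all the powers and logarithms appearing in $H_q(\lambda)$ are well defined.
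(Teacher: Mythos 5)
Your proof is correct. Each step checks out: the truncation of the binomial expansion of $(\lambda+(1-\lambda))^n$ is valid because the summands are nonnegative, the identity $\lambda^i(1-\lambda)^{n-i}=(1-\lambda)^n(q-1)^i b^i$ with $b=\tfrac{\lambda}{(q-1)(1-\lambda)}$ is right, the equivalence of $\lambda\le 1-1/q$ with $b\le 1$ is exactly the computation $q\lambda\le q-1$, and the final simplification of $(1-\lambda)^{-n}b^{-\lambda n}$ to $q^{nH_q(\lambda)}$ is correct. Note, however, that the paper does not prove this lemma at all --- it simply cites Loeliger \cite{L94} --- so there is no in-paper argument to compare against. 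What you have written is the standard, self-contained ``truncated binomial theorem'' proof found in most coding-theory texts, and it fully suffices; the only cosmetic point is that $\lambda<1-1/q$ actually gives the strict inequality $b<1$, though $b\le 1$ is all your monotonicity step requires.
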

		
		\begin{corollary}
			Let $\cI$ be an instance of the ICCSI problem with. Let $\gl\in {\mathbb Q}$ such that $0<\gl<1-1/q$ and let $N \in {\mathbb Z}$ satisfy 
			$\gl N \in {\mathbb Z}$ 
			Then, choosing the entries of $L\in\fq^{N\times d_S}$ uniformly at random over the field $\fq$, the probability that $L$ corresponds to a \INSe{Hamming metric $\fq$-linear} $(\cI,\gd)$-ECIC, with 
			$\gd=\left\lfloor\frac{\gl N}2\right\rfloor$, is at least 
			$$
			{1- (q-1)\sum_{\INSe{i \in \hat{m}}}\frac{q^{(n-d_i-1)}}{q^{N(1-H_q(\gl))}}}.
			$$
			In particular there exists an $\fq$-linear Hamming metric $(\cI,\gd)$-ECIC if 
			$${\INSe{m''} < \frac{q^{N(1-H_q(\gl))-(n-d-1)}}{q-1}},$$
			where $d = \min\{d_i: i \in [m]\}$.
		\end{corollary}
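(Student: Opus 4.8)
The plan is to obtain this as an immediate consequence of Theorem~\ref{th:random2} (in the refined form noted in the remark following it) together with the entropy bound of Lemma~\ref{lm:hq}. First I would record that, choosing the entries of $L \in \fq^{N \times d_S}$ uniformly at random over $\fq$, the proof of Theorem~\ref{th:random2} together with the observation that $\cZ^{(i)}$ depends only on the $\hat m$-equivalence class of $i$ shows that the probability that $L$ corresponds to a Hamming metric $\fq$-linear $(\cI,\gd)$-ECIC is at least
$$1 - (q-1)\sum_{i \in \hat m} q^{n-d_i-1}\,\frac{V_q(N,2\gd)}{q^N}.$$

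Next I would exploit the choice $\gd = \left\lfloor \frac{\gl N}{2} \right\rfloor$. This gives $2\gd \le \gl N$, and since $0 < \gl < 1 - 1/q$ and $\gl N \in \mathbb{Z}$ we also have $2\gd \le \gl N < N$, so that $V_q(N,2\gd) \le V_q(N,\gl N)$ by monotonicity of the volume of a Hamming ball in its radius. Applying Lemma~\ref{lm:hq} with the roles of $n$ and $\lambda$ played by $N$ and $\gl$ (whose hypotheses are exactly the standing assumptions here, using also that $H_q$ is increasing on $(0,1-1/q)$, so that $1 - H_q(\gl) > 0$ and the estimate is meaningful), we get $V_q(N,\gl N) \le q^{H_q(\gl)N}$, hence
$$\frac{V_q(N,2\gd)}{q^N} \le \frac{q^{H_q(\gl)N}}{q^N} = q^{-N(1 - H_q(\gl))}.$$
Substituting this into the probability estimate above yields the first displayed lower bound of the corollary.

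For the existence statement I would simply note that a single random draw produces an $\fq$-linear Hamming metric $(\cI,\gd)$-ECIC with positive probability whenever the above lower bound is strictly positive, i.e. whenever $(q-1)\sum_{i \in \hat m} q^{n-d_i-1}\,q^{-N(1-H_q(\gl))} < 1$. Estimating $\sum_{i \in \hat m} q^{n-d_i-1} \le m'' q^{n-d-1}$, where $m'' = |\hat m|$ and $d = \min\{d_i : i \in [m]\}$, it suffices to have $(q-1) m'' q^{n-d-1} q^{-N(1-H_q(\gl))} < 1$, which on rearranging is precisely $m'' < q^{N(1-H_q(\gl)) - (n-d-1)}/(q-1)$.

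There is essentially no substantive obstacle in this argument; it is a routine substitution. The only points requiring a little care are verifying $2\gd \le \gl N$ from the floor function, checking that the radius $\gl N$ is a legitimate input for both the monotonicity step and Lemma~\ref{lm:hq}, and passing from the index set $[m]$ to the reduced set $\hat m$ in the union bound — none of which presents a difficulty.
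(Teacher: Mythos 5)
Your proof is correct and follows exactly the route the paper intends: the corollary is stated without proof as an immediate consequence of Theorem~\ref{th:random2} (in its $\hat{m}$-refined form) and Lemma~\ref{lm:hq}, and your substitution $V_q(N,2\gd)\le V_q(N,\gl N)\le q^{H_q(\gl)N}$ together with the positivity argument for existence is precisely the intended chain of reasoning. The small points you flag (the floor giving $2\gd\le\gl N$, the hypotheses of Lemma~\ref{lm:hq}, and the passage to $\hat{m}$) are handled correctly.
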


		
\subsubsection{\INSe{Rank Metric $(\cI,\gd)$-ECICs}}\label{sec:random}

\INSe{We assume throughout this section that $t>1$, that $w$ represents the rank weight and that $\cN(\cI,\gd)$ is the optimal length of an $\fq$-linear rank metric $(\cI,\gd)$-ECIC. Again, we fix some further notation.}
We let $N(t,\log_q M,d)$ denote the least integer $s$ such that there exists a code in $\fq^{s \times t}$ of minimum rank distance $d$
and size $M$. \INSe{We say that an $\fq$-linear code of dimension $k$ and minimum rank distance $d$ in $\fq^{n \times t}$ is a rank metric $\fq$-$[n,k,d]$ code.}  
In analogy with the previous section, we define the set:
$$\cJ(\cI):=\{ U \subset \fq^{n \times t}: X-X' \in \cup_{i \in [m]} \cZ_\gd^{(i)},\text{for any } X, X' \in U\}.$$
and let $\ga(\cI):=\max\{ \log_q|U|: U \in \cJ(\cI) \}$.

\begin{theorem}
	Let $\cI$ be an instance of the ICCSI problem. Then
	$$ N(t,\ga(\cI),2\gd+1) \leq \cN(\cI,\gd).$$
\end{theorem}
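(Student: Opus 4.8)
The plan is to run the rank-metric analogue of the proof of the Hamming $\ga$-bound given above: from an optimal $\fq$-linear rank metric $(\cI,\gd)$-ECIC we will extract a rank-metric code sitting in $\fq^{N\times t}$ of cardinality $q^{\ga(\cI)}$ and minimum rank distance at least $2\gd+1$, so that its mere existence forces $N\ge N(t,\ga(\cI),2\gd+1)$.

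First I would fix $L\in\fq^{N\times d_S}$ representing an optimal $\fq$-linear rank metric $(\cI,\gd)$-ECIC, so that $N=\cN(\cI,\gd)$, and choose $U\in\cJ(\cI)$ with $\log_q|U|=\ga(\cI)$, where $U\subseteq\cM$ (as the definition of $\cJ(\cI)$ must be read, since otherwise the pairwise differences of $U$ need not lie in $\cM^\triangle$). Then I would consider the $\fq$-linear evaluation map $\phi\colon\fq^{n\times t}\to\fq^{N\times t}$, $\phi(X)=LV^{(S)}X$, and put $C_U:=\phi(U)\subseteq\fq^{N\times t}$.

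The key step, which simultaneously delivers injectivity of $\phi$ on $U$ and the distance bound for $C_U$, is the following. Let $X,X'\in U$ be distinct. By the defining property of $\cJ(\cI)$ we have $X-X'\in\cZ_\gd^{(i)}$ for some $i\in[m]$; since $U\subseteq\cM$ and $\cM$ has minimum rank distance at least $2\gd+1$, this places $X-X'$ in $\cZ^{(i)}\cap\cM^\triangle$, with $\rk(X-X')\ge 2\gd+1$. Because $L$ represents a rank metric $(\cI,\gd)$-ECIC, Theorem \ref{th:err} then yields $\rk(LV^{(S)}(X-X'))=w(LV^{(S)}(X-X'))\ge 2\gd+1>0$. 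Hence $\phi(X)\ne\phi(X')$, so $|C_U|=|U|=q^{\ga(\cI)}$, and $\rk(\phi(X)-\phi(X'))\ge 2\gd+1$, so $C_U$ has minimum rank distance at least $2\gd+1$. As $C_U$ is a code in $\fq^{N\times t}$ of size $q^{\ga(\cI)}$ and minimum rank distance at least $2\gd+1$, minimality in the definition of $N(t,\cdot,\cdot)$ gives $N(t,\ga(\cI),2\gd+1)\le N=\cN(\cI,\gd)$.

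The only genuine obstacle I anticipate is a bookkeeping one at the point where Theorem \ref{th:err} is invoked: that result is stated for confusable pairs taken from $\cM$, so one must make sure the extremal $U$ witnessing $\ga(\cI)$ can be chosen inside $\cM$; if $\cJ(\cI)$ is not understood that way, the remedy is to pass to a largest subset of $U$ lying in a single translate of $\cM$, or to use Remark \ref{remzdel} to propagate the rank-preservation property down from rank $2\gd+1$. A secondary, purely formal point is that $C_U$ need not be $\fq$-linear, but this is harmless since $N(t,\log_q M,d)$ is defined in terms of the size $M$ of the code rather than its dimension.
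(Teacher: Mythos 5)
Your proof is correct and follows essentially the same route as the paper's: both extract the code $C_U=\{LV^{(S)}X : X\in U\}$ from a set $U\in\cJ(\cI)$ attaining $\ga(\cI)$, use the error-correction criterion of Theorem \ref{th:err} to conclude that $C_U$ has size $q^{\ga(\cI)}$ and minimum rank distance at least $2\gd+1$ in $\fq^{N\times t}$, and deduce $N\ge N(t,\ga(\cI),2\gd+1)$. Your extra care in reading $\cJ(\cI)$ so that the differences $X-X'$ actually lie in $\cM^\triangle$ (which is needed to invoke Theorem \ref{th:err} as stated for the rank metric) addresses a point that the paper's own two-line proof silently glosses over.
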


\begin{proof}
	Let $L \in \fq^{N \times t}$ represent an optimal $(\cI,\gd)$-ECIC. 
	Let $U\in \cJ(\cI)$ and define $$C_U=\{LV^{(S)}X : X \in U \} \subset \fq^{N \times t}.$$
	Then $C_U$ has minimum rank distance $2\gd+1$ in $\fq^{N\times t}$ since $w(LV^{(S)}(X-X'))\geq 2\gd+1$
	for any pair $X,X' \in U$. The result follows on choosing $U \in \cJ(\cI)$ such that $\log_q|U| = \ga(\cI)$.
\end{proof}

The rank-distance Singleton bound \cite{D78} states that for any code $C$ in $\fq^{N\times t}$ of minimum distance $2\gd+1$ 
that 
$$ \log_q|C| \leq \left\{ \begin{array}{ll}
t(N-2\delta) & \text{ if } t \geq N\\ 
N(t-2\delta) & \text{ if } t \leq N \\
\end{array}        
\right.$$
Codes that meet this bound are called maximum rank distance (MRD) codes. 
Combining the $\ga$-bound and the Singleton bound for rank-metric codes immediately yields the following.

\begin{corollary}(Singleton bound)\label{coracor}
	Let $\cI$ be an instance of the ICCSI problem. Then
	$$\cN(\cI,\gd)\geq \left\{
	\begin{array}{ll}
	\displaystyle{\frac{\alpha(\cI)}{t} +2\delta}& \text{ if } t \geq N(t,\ga(\cI),2\gd+1), \\
	\displaystyle{\frac{\ga(\cI)}{t-2\delta}} & \text{ if }t \leq N(t,\ga(\cI),2\gd+1).
	\end{array}\right.    
	$$
\end{corollary}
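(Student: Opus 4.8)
The plan is to derive this exactly as the Hamming-metric Singleton bound was derived from the $\gk$-bound, only now combining the rank-metric $\ga$-bound established just above with the rank-distance Singleton bound of \cite{D78}. Set $N^{*}:=N(t,\ga(\cI),2\gd+1)$. The preceding theorem already gives $\cN(\cI,\gd)\ge N^{*}$, so it suffices to bound $N^{*}$ from below in each of the two regimes and then chain the inequalities.

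First I would unwind the definition of $N^{*}$: by definition of $N(t,\log_q M,d)$ there is a code $C\subset\fq^{N^{*}\times t}$ with $\log_q|C|=\ga(\cI)$ and minimum rank distance $2\gd+1$. Such a code exists because the proof of the $\ga$-bound produces $C_U=\{LV^{(S)}X:X\in U\}$ for a maximal $U\in\cJ(\cI)$, where $LV^{(S)}$ is injective on $U$ (distinct elements of $U$ map to matrices at rank distance at least $2\gd+1$), so that $|C_U|=|U|=q^{\ga(\cI)}$ and $C_U$ has minimum rank distance $2\gd+1$. Now apply the rank-distance Singleton bound to $C$: if $t\ge N^{*}$ then $\ga(\cI)=\log_q|C|\le t(N^{*}-2\gd)$, whence $N^{*}\ge\ga(\cI)/t+2\gd$; if $t\le N^{*}$ then $\ga(\cI)\le N^{*}(t-2\gd)$, whence $N^{*}\ge\ga(\cI)/(t-2\gd)$, using $t>2\gd$ as implied by the standing assumption that $\cM$ and $\{V^{(S)}X:X\in\cM\}$ have minimum rank distance $2\gd+1$. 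Combining each inequality with $\cN(\cI,\gd)\ge N^{*}$ yields the two branches of the claimed bound.

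I do not expect a genuine obstacle: the argument is a two-step chaining of results already in hand. The one point to check is that the case distinction in the hypothesis --- $t$ versus $N(t,\ga(\cI),2\gd+1)=N^{*}$ --- matches the case distinction $t$ versus $N^{*}$ in the rank Singleton bound, which it does once $N^{*}$ is identified with the row length of the Singleton-extremal code $C$; at the boundary $t=N^{*}$ the two branches of the Singleton bound coincide, so the answer is consistent. The fact that $\ga(\cI)/t$ and $\ga(\cI)/(t-2\gd)$ need not be integers is harmless, since the conclusion is an inequality.
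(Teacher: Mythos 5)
Your proposal is correct and follows exactly the route the paper takes: the paper states that the corollary follows "immediately" by combining the rank-metric $\ga$-bound with Delsarte's rank-distance Singleton bound, which is precisely your two-step chaining through $N^{*}=N(t,\ga(\cI),2\gd+1)$. Your additional remarks (existence of the extremal code $C$, the need for $t>2\gd$ in the second branch, and consistency at $t=N^{*}$) are sound elaborations of details the paper leaves implicit.
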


We now give a result on the existence of a linear encoding of length $N$ for $(\cI,\delta)$-ECIC, extending Theorem $6.1$ in \cite{Son1}. 
We let $V_q(N,t,s):= | \{ X \in \fq^{N\times t} : w(X) \leq s \} |$ denote the size of a sphere of rank distance radius $s$ in $\fq^{N \times t}$.

We will use the following result from \cite{LT}. 

\begin{theorem}\label{thlt}
	Let $W$ be an $\fq$-vector space and let $\cF_r$ be a family of $r$-dimensional subspaces of $W$. Let $\ho_{\cF_r} (\FF_q^t,W)$ denote the set of 
	homomorphisms of $\fq^t$ whose images lie in $\cF_r$. Then
	$$|\ho_{\cF_r}(\FF_q^t,W)|=|\cF_r|\prod_{j=0}^{r-1} (q^t-q^j). $$ 
\end{theorem}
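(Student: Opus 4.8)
The plan is to decompose $\ho_{\cF_r}(\fq^t,W)$ according to the image of each homomorphism. By definition this set consists of the $\fq$-linear maps $\phi:\fq^t\to W$ with $\Im\phi\in\cF_r$; since the image of a given map is a single, well-defined subspace, distinct members of $\cF_r$ contribute disjoint sets of maps, so we obtain a disjoint union
$$\ho_{\cF_r}(\fq^t,W)=\bigsqcup_{V\in\cF_r}\{\phi\in\ho_{\fq}(\fq^t,W):\Im\phi=V\}.$$
It therefore suffices to show that for each $r$-dimensional subspace $V$ the number of surjective $\fq$-linear maps $\fq^t\twoheadrightarrow V$ equals $\prod_{j=0}^{r-1}(q^t-q^j)$, independently of the choice of $V$.

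To count these I would first reduce to the case $V=\fq^r$: composing a surjection $\fq^t\to V$ with any fixed isomorphism $V\to\fq^r$ gives a bijection between the surjections onto $V$ and the surjections onto $\fq^r$, so the count depends only on $r$ and $t$. Identifying a linear map $\fq^t\to\fq^r$ with the matrix $M\in\fq^{r\times t}$ acting on column vectors, surjectivity is equivalent to $\rk M=r$, i.e. to the rows $M_1,\dots,M_r\in\fq^t$ being linearly independent. Counting ordered linearly independent $r$-tuples in $\fq^t$ in the usual way — $q^t-1$ choices for $M_1$, then $q^t-q$ for $M_2$ avoiding $\langle M_1\rangle$, and so on up to $q^t-q^{r-1}$ for $M_r$ avoiding the span of the first $r-1$ — gives exactly $\prod_{j=0}^{r-1}(q^t-q^j)$. (Alternatively one could invoke rank--nullity: a surjection $\fq^t\to V$ is determined by its kernel, a $(t-r)$-dimensional subspace of $\fq^t$, together with an isomorphism $\fq^t/\ker\to V$, giving $\qbin{t}{r}{q}\prod_{j=0}^{r-1}(q^r-q^j)$, which equals $\prod_{j=0}^{r-1}(q^t-q^j)$ by the definition of the Gaussian coefficient.) Summing over $V\in\cF_r$ then yields $|\cF_r|\prod_{j=0}^{r-1}(q^t-q^j)$, as claimed.

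The argument is essentially bookkeeping, so I do not anticipate a genuine obstacle; the two points deserving care are the interpretation of ``$\Im\phi$ lies in $\cF_r$'' as ``$\Im\phi$ is a member of $\cF_r$'', which is what makes the displayed decomposition a true disjoint union and avoids any inclusion--exclusion, and the degenerate range $t<r$, where the factor of $\prod_{j=0}^{r-1}(q^t-q^j)$ indexed by $j=t$ equals $q^t-q^t=0$, so the formula correctly returns $0$, consistent with the nonexistence of any surjection $\fq^t\twoheadrightarrow V$ when $\dim V>t$.
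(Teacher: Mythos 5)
Your proof is correct. Note that the paper does not prove this statement at all --- it is imported verbatim from Laksov--Thorup \cite{LT} --- so there is no in-paper argument to compare against; your decomposition by image, followed by the standard count $\prod_{j=0}^{r-1}(q^t-q^j)$ of surjections onto a fixed $r$-dimensional subspace, is exactly the expected bookkeeping, and your two cautionary remarks (reading ``lies in $\cF_r$'' as membership, and the vanishing of the product when $t<r$) are both apt.
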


If $W$ is a subspace of $\fq^s$ then $\ho_{\cF_r}({\fq}^t,W)$ corresponds to the set of all $s \times t$ matrices of rank $r$ whose column spaces lie in $W$. 

Let $m'''$ denote the number of equivalence classes of $[m]$ under the relation $\breve{m}$ that $i$ and $j$
are equivalent if $\cZ^{(i)}_\gd=\cZ^{(j)}_\gd$

\begin{theorem}\label{thrandom}
	Let $\cI$ be of an instance of an ICCSI problem and let $L \in \fq^{N \times d_S}$ for some positive integer $N$. 
	The probability that $L$ represents a linear $(\cI,\delta)$-ECIC of length $N$ is at least
	\begin{eqnarray*}
		1- q^{-N t} \sum_{i \in \INSe{\breve{m}}}  \sum_{r\geq 2\gd +1}\left(\prod_{j=0}^{r-1}( q^{n-d_i}-q^j)-\prod_{j=0}^{r-1} (q^{n-d_i-1}-q^j)\right)\qbin{t}{r}{q} 
		\sum_{r=0}^{2\gd} \prod_{j=0}^{r-1} (q^{N}-q^j) \qbin{t}{r}{q}.
	\end{eqnarray*}
	In particular, there exists such a matrix $L$ if
	$$\sum_{i \in \INSe{\breve{m}}} \sum_{r\geq 2\gd +1}\left(\prod_{j=0}^{r-1} (q^{n-d_i}-q^j)-\prod_{j=0}^{r-1}( q^{n-d_i-1}-q^j)\right)
	\qbin{t}{r}{q} <  \frac{q^{Nt}}{V(N,t,2\delta)} .$$
\end{theorem}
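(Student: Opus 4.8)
The plan is to mirror the proof of the Hamming-metric statement, Theorem~\ref{th:random2}, replacing Hamming spheres by rank-metric spheres and using Theorem~\ref{thlt} for the enumeration. First I would invoke the decoding criterion recorded just before Remark~\ref{remzdel}: a matrix $L\in\fq^{N\times d_S}$ represents a linear rank-metric $(\cI,\gd)$-ECIC of length $N$ if and only if $\rk(LV^{(S)}Z)\ge 2\gd+1$ for every $i\in[m]$ and every $Z\in\cZ_\gd^{(i)}\cap\cM^\triangle$. Since $\cZ_\gd^{(i)}$ depends on $i$ only through its class under the relation $\breve{m}$, it is enough to range over $i\in\breve{m}$. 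Thus the ``bad'' event is $\bigcup_{i\in\breve{m}}\bigcup_{Z}E_{i,Z}$, where $E_{i,Z}=\{\rk(LV^{(S)}Z)\le 2\gd\}$ and $Z$ runs over $\cZ_\gd^{(i)}\cap\cM^\triangle$; by the union bound the failure probability is at most $\sum_{i\in\breve{m}}\sum_{Z\in\cZ_\gd^{(i)}\cap\cM^\triangle}\Pr[E_{i,Z}]$.

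Second, I would over-count the index set, using $|\cZ_\gd^{(i)}\cap\cM^\triangle|\le|\cZ_\gd^{(i)}|$, and evaluate $|\cZ_\gd^{(i)}|$ by Theorem~\ref{thlt}. The set $\cY^{(i)}=({V^{(i)}}^\perp)^t$ consists exactly of the $n\times t$ matrices whose column space lies in the $(n-d_i)$-dimensional space ${V^{(i)}}^\perp$, so taking $W={V^{(i)}}^\perp$ and $\cF_r$ the family of all $r$-dimensional subspaces of $W$, Theorem~\ref{thlt} (in the matrix form noted right after it) counts the rank-$r$ matrices of $\cY^{(i)}$; rewriting this cardinality in the form $\qbin{t}{r}{q}\prod_{j=0}^{r-1}(q^{n-d_i}-q^j)$ appearing in the statement, and subtracting the rank-$r$ matrices whose column space lies in ${V^{(i)}}^\perp\cap{R_i}^\perp$ (which has dimension $n-d_i-1$ because $R_i\notin\langle V^{(i)}\rangle$), gives $\qbin{t}{r}{q}\bigl(\prod_{j=0}^{r-1}(q^{n-d_i}-q^j)-\prod_{j=0}^{r-1}(q^{n-d_i-1}-q^j)\bigr)$ rank-$r$ matrices in $\cZ^{(i)}$. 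Summing over $r\ge 2\gd+1$ yields $|\cZ_\gd^{(i)}|$, which is exactly the inner factor of the claimed bound.

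Third, and this is the step I expect to require the most care, I would bound $\Pr[E_{i,Z}]$ for a fixed $Z$. Set $M=V^{(S)}Z$. Over a uniformly random $L$, the rows of $LM$ are independent and each uniform over the row space of $M$, so $LM$ is uniformly distributed over the set of $N\times t$ matrices whose row space is contained in that of $M$. Since $Z\in\cM^\triangle$ and $\{V^{(S)}X:X\in\cM\}$ has minimum rank distance $2\gd+1$, we have $\rk M=\rk(V^{(S)}Z)\ge 2\gd+1>0$; arguing that $LV^{(S)}Z$ consequently behaves, for the purpose of hitting the rank-ball $\{W:\rk W\le 2\gd\}$ of size $V_q(N,t,2\gd)$, like a uniform element of $\fq^{N\times t}$, one obtains $\Pr[E_{i,Z}]\le V_q(N,t,2\gd)/q^{Nt}$. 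This is the analogue of the corresponding step in \cite{Son1} with $t=1$, where $LV^{(S)}Z$ is literally a uniform vector in $\fq^N$; the delicate point here is precisely controlling the distribution of the matrix $LV^{(S)}Z$ and hence the exact exponent $Nt$ in the denominator, which is where the hypotheses on $\cM$ are genuinely used.

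Finally, inserting the count of the second step and the per-$Z$ estimate of the third into the union bound of the first gives the stated lower bound $1-q^{-Nt}\sum_{i\in\breve{m}}|\cZ_\gd^{(i)}|\,V_q(N,t,2\gd)$ on the probability that $L$ represents a linear $(\cI,\gd)$-ECIC; the ``in particular'' clause is then immediate, since the displayed inequality is exactly the requirement that this failure bound be strictly smaller than $1$, forcing the existence of at least one such $L$. The first two steps are essentially bookkeeping, with Theorem~\ref{thlt} carrying the combinatorial load, so I would single out the third step as the real obstacle.
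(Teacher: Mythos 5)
Your proposal follows essentially the same route as the paper's proof: the criterion of Theorem~\ref{th:err} restricted to class representatives under $\breve{m}$, a union bound over $Z\in\cZ^{(i)}_\gd$, and two applications of Theorem~\ref{thlt} to evaluate $|\cZ^{(i)}_\gd|$ and $V(N,t,2\gd)$, assembled exactly as you describe. The one step you single out as delicate --- justifying $\Pr[\rk(LV^{(S)}Z)\le 2\gd]\le V(N,t,2\gd)/q^{Nt}$ when $V^{(S)}Z$ need not have full column rank, so that $LV^{(S)}Z$ is uniform only on the matrices with row space inside that of $V^{(S)}Z$ --- is passed over silently in the paper, which simply bounds the failure probability at receiver $i$ by $|\cZ^{(i)}_\gd|\,V(N,t,2\gd)/q^{Nt}$; your instinct about where the real work lies is sound, but the paper supplies no further argument there.
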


\begin{proof}
	From Theorem \ref{th:err}, the matrix $L \in \fq^{N \times d_S}$ represents a linear $(\cI,\delta)$ if and only if for each $i \in [m]$, 
	$w(LV^{(S)}Z) \geq 2 \delta + 1$ for any $Z \in {\cZ}^{(i)}_\gd$.
	Therefore, a decoding failure at the $i$th node occurs if and only if the sphere $B_{2\gd}(Z)=\{LV^{(S)}Z + W : w(W)\leq 2\gd\}\subset \fq^{N \times t}$
	contains the zero matrix for some $Z \in \cZ_\delta^{(i)}$.
	Then the probability of a decoding failure at the $i$th receiver is upper bounded by
	$$\frac{\left|\cup_{Z \in \cZ^{(i)}_\gd } B_{2\gd}(Z)\right|}{|\fq^{N \times t}|} \leq  \frac{|\cZ^{(i)}_\gd| V(N,t,2\gd) }{q^{Nt}}.$$ 
	We define the following sets for each non-negative integer $r$:
	$$S^{(i)}_r = \{ M < {V^{(i)}}^\perp:\dim\; M =r \} \text{ and } T^{(i)}_r = \{ M < {V^{(i)}}^\perp \cap {R_i}^\perp:\dim\; M =r \}. $$
	Then $|S^{(i)}_r| =  \qbin{n-d_i}{r}{q}$ and $|T^{(i)}_r| = \qbin{n-d_i-1}{r}{q}.$
	Now $S^{(i)}_r$ (resp. $T^{(i)}_r$) is the set of column spaces in $\fq^n$ of all $n \times t$ matrices in $\cY^{(i)}$ (resp. in $\cW^{(i)}$) of rank $r$. 
    Then from Theorem \ref{thlt}, it follows that
	\begin{eqnarray*}
		|\cZ^{(i)}_\gd| & = & \sum_{r\geq 2\gd +1}\left( |\ho_{S^{(i)}_r}(\fq^t,{V^{(i)}}^\perp)|-|\ho_{T^{(i)}_r}(\fq^t,{V^{(i)}}^\perp \cap {R_i}^\perp)| \right)\\
		& = & \sum_{r\geq 2\gd +1}\left( |S^{(i)}_r|-|T^{(i)}_r|  \right)\prod_{j=0}^{r-1} (q^{t}-q^j)\\
		& = & \sum_{r\geq 2\gd +1}\left(\prod_{j=0}^{r-1} (q^{n-d_i}-q^j)-\prod_{j=0}^{r-1}( q^{n-d_i-1}-q^j)\right)\qbin{t}{r}{q}. 
	\end{eqnarray*}
	Theorem \ref{thlt} can also be applied to obtain
	$$ V(N,t,2\gd) = \sum_{r=0}^{2\gd} |\ho_r(\fq^t,\fq^N)| =  \sum_{r=0}^{2\gd} \prod_{j=0}^{r-1} (q^{N}-q^j) \qbin{t}{r}{q}.$$
	Then the probability of a failure at the $i$th decoder is upper-bounded by
	$$ q^{-Nt}\sum_{r\geq 2\gd +1}\left(\prod_{j=0}^{r-1} (q^{n-d_i}-q^j)-\prod_{j=0}^{r-1}( q^{n-d_i-1}-q^j)\right)\qbin{t}{r}{q} \sum_{r=0}^{2\gd} \prod_{j=0}^{r-1} (q^{N}-q^j) \qbin{t}{r}{q}. $$
	
	The result now follows from the union bound.    
\end{proof}

In the error-free case, that is for $\gd=0$, Theorem \ref{thrandom} asserts that there exists an $N \times d_S$ matrix $L$ of rank $N$ representing an $\fq$-linear $\cI$-IC whenever
$$ 1 > \sum_{i \in \INSe{\breve{m}}}\frac{|\cZ^{(i)}|}{q^{Nt}}=\sum_{i \in \INSe{\breve{m}}} q^{(n-d_i-N-1)t}(q^t-1).$$
\INSm{Moreover $$m'''q^{(k-N-1)t}(q^t-1)\ge \sum_{i \in \INSe{\breve{m}}}^m q^{(n-d_i-N-1)t}(q^t-1)$$}
where $k = \INSe{\max}\{n-d_i:i \in [m]\}$. Then for $N=k+\ell$, there exists a linear $\cI$ of length $N$ as long as
$\INSe{m'''} \leq q^{(\ell+1) t}/(q^t-1).$ In particular, this shows that:

\begin{corollary}
	Let $\cI$ be an instance of the of the ICCSI problem and let  $k = \INSe{\max}\{n-d_i:i \in [m]\}$. If $\INSe{m'''} \leq q^{\ell t}/(q^t-1)$ 
	then $\gk(\cI) \leq k+\ell-1$. 
\end{corollary}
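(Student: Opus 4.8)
The plan is to specialize Theorem~\ref{thrandom} to the error-free case $\gd=0$ and then reindex. First I would observe that when $\gd=0$ the rank-metric sphere of radius $0$ in $\fq^{N\times t}$ is the singleton $\{0\}$, so $V(N,t,0)=1$, and $\cZ^{(i)}_0=\cZ^{(i)}$ because every matrix in $\cZ^{(i)}$ is nonzero and hence of rank at least $1$. With these substitutions the second assertion of Theorem~\ref{thrandom} guarantees the existence of an $N\times d_S$ matrix $L$ of rank $N$ representing an $\fq$-linear $\cI$-IC of length $N$ whenever $\sum_{i\in\breve m}|\cZ^{(i)}|<q^{Nt}$.

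Next I would make the count $|\cZ^{(i)}|$ explicit. Setting $\cW^{(i)}:=\{Z\in\fq^{n\times t}:V^{(i)}Z=0,\ R_iZ=0\}=\bigl({V^{(i)}}^\perp\cap{R_i}^\perp\bigr)^t$, we have $\cZ^{(i)}=\cY^{(i)}\setminus\cW^{(i)}$. Since $R_i\in\cX^{(S)}\setminus\cX^{(i)}$, the vector $R_i$ does not lie in $\langle V^{(i)}\rangle$, so $\dim\bigl({V^{(i)}}^\perp\cap{R_i}^\perp\bigr)=n-d_i-1$; together with $\dim{V^{(i)}}^\perp=n-d_i$ this gives $|\cZ^{(i)}|=q^{(n-d_i)t}-q^{(n-d_i-1)t}=q^{(n-d_i-1)t}(q^t-1)$, which is exactly the count already obtained inside the proof of Theorem~\ref{thrandom}.

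Then I would bound the sum: because $n-d_i\le k$ for every $i\in[m]$, each summand satisfies $|\cZ^{(i)}|\le q^{(k-1)t}(q^t-1)$, and the sum runs over $m'''$ representatives, so $\sum_{i\in\breve m}|\cZ^{(i)}|\le m'''\,q^{(k-1)t}(q^t-1)$. Hence a length-$N$ $\fq$-linear $\cI$-IC exists provided $m'''\,q^{(k-1)t}(q^t-1)<q^{Nt}$, equivalently $m'''<q^{(N-k+1)t}/(q^t-1)$. Choosing $N=k+\ell-1$ turns this into $m'''<q^{\ell t}/(q^t-1)$; since $t>1$ we have $\gcd(q^t-1,q)=1$ and $q^t-1\ge 3$, so $q^t-1\nmid q^{\ell t}$ and $q^{\ell t}/(q^t-1)\notin\Z$, whence for integer $m'''$ this strict inequality is equivalent to the stated hypothesis $m'''\le q^{\ell t}/(q^t-1)$. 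Under that hypothesis a linear $\cI$-IC of length $k+\ell-1$ exists, and since by Lemma~\ref{lemlength} $\gk(\cI)$ is the optimal length of an $\fq$-linear $\cI$-IC, we conclude $\gk(\cI)\le k+\ell-1$.

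There is no real obstacle here; the statement is essentially a reindexing of the sentence immediately preceding it. The only points deserving a line of care are (i) confirming that the $\gd=0$ specialization of Theorem~\ref{thrandom} collapses the nested sums to $\sum_{i\in\breve m}|\cZ^{(i)}|/q^{Nt}$, and (ii) justifying the passage from the strict inequality produced by the counting argument to the non-strict hypothesis of the corollary, which is legitimate precisely because $q^{\ell t}/(q^t-1)$ fails to be an integer when $t>1$.
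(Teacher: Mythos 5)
Your proof is correct and follows essentially the same route as the paper: the paper's preceding paragraph specializes Theorem~\ref{thrandom} to $\gd=0$, bounds $\sum_{i\in\breve{m}}|\cZ^{(i)}|$ by $m'''q^{(k-1)t}(q^t-1)$ using $n-d_i\le k$, and reads off the condition for length $N=k+\ell-1$, exactly as you do. Your only addition is the explicit justification that the strict inequality from the counting argument can be relaxed to the stated $m'''\le q^{\ell t}/(q^t-1)$ because $q^t-1\nmid q^{\ell t}$, a point the paper passes over silently.
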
	

In \INSe{Table IV} we give parameters $t,N,\delta$ for which the existence if a linear $(\cI,\gd)$-ECIC of length $N$ is established by Theorem \ref{thrandom}
for $n=20$, $d_i=d$ for each $i \in [m]$.
\begin{table}	
	\begin{center}
		\begin{tabular}{|c|c|c|c|c|}
			\hline
			$t \geq $  & $ d \geq $ &  $N \geq$ & $\gd \leq $  & $m \leq $\\
			\hline
			6          & 11         &  16       & 1            & 239 \\ 
			7          & 11         &  15       & 1            & 239\\
			8          & 12         &  13       & 1            & 239\\
			11         & 12         &  12       & 1            & 239  \\
			20         & 12         &  11       & 1            & 239 \\
			\hline
			10         & 11         & 20        & 2            & 239 \\
			11         & 11         & 19        & 2            & 239 \\
			12         & 11         & 18        & 2            & 239 \\		
			14         & 11         & 17        & 2            & 239 \\
			17         & 11         & 16        & 2            & 239 \\
			9          & 12         & 19        & 2            & 239 \\
			10         & 12         & 18        & 2            & 239 \\
			11         & 12         & 17        & 2            & 239 \\
			13         & 12         & 16        & 2            & 239 \\	
			16         & 12         & 15        & 2            & 239 \\	
			\hline
			16         & 12         & 19        & 3            & 239\\
			19         & 12         & 18        & 3            & 239\\
			13         & 13         & 20        & 3            & 239 \\
			14         & 13         & 19        & 3            & 239 \\
			15         & 13& 18& 3& 239 \\
			16         & 13& 18& 3& 239 \\
			17         & 13& 17& 3& 239 \\
			18         & 13& 18& 3& 239 \\
			19         & 13& 17& 3& 239 \\
			\hline   
			    
		\end{tabular}
		\caption{}
	\end{center}
\end{table}


\section{Decoding Index Codes}

\INSe{Error correction for index codes (as for non-multicast network codes) is non-trivial. We consider two approaches, one for rank-metric error correction and the other to correct Hamming metric errors, based on syndrome decoding.}

\subsection{Syndrome Decoding for Hamming Metric Errors}
In \cite{Son1} the authors give a syndrome decoding algorithm for \INSe{Hamming metric error correction in} the ICSI problem. In this section we extend the algorithm to the case of ICCSI problem.

For the remainder \INSe{of this section}, we let $L\in\fq^{N\times d_S}$ be a matrix corresponding to an $(\cI,\gd)$-ECIC. Suppose that for some $i\in[m]$ the $i$th user, receives the message
$$
{Y}_{(i)}=LV^{(S)}{X}+W_{(i)} \;\INSe{\in \fq^{N \times t}},
$$
where $LV^{(S)}{X}$ is the codeword transmitted by $S$ and $W_{(i)}$ is the error vector in $\fq^N$.
Since $R_i \notin \cX^{(i)}$, there exists an invertible matrix $M_{(i)}\in\FF_q^{n\times n}$ such that 
$$
V^{(i)}M_{(i)}=[I|0]\mbox{ where $I$ is the identity matrix in $\FF^{d_i\times d_i}$, and }{R}_iM_{(i)}=\be_{d_i+1}.
$$
The matrix $M_{(i)}$ may be constructed by the $i$th user as follows.
Choose a right inverse $A_{(i)} \in \fq^{n \times d_i+1}$ of the matrix $G \in \fq^{d_i+1 \times n}$ that has the rows of $V^{(i)}$ as its first $d_i$ rows and has $R_i$ in the final row.
Then
$G A_{(i)} $ is the identity matrix in $\fq^{(d_i+1)\times (d_i+1)}$ so that 
$$V^{(i)}A_{(i)}^{d_i+1} = 0, V^{(i)}A_{(i)}^{[d_i]} = I, R_i  A_{(i)} = [0,...,0,1].$$
Choose $B_{(i)}$ to be an $n \times (n-d_i-1)$ matrix whose columns form a basis of $G^\perp$. Then
$M_{(i)} = [A_{(i)} | B_{(i)}]$ is invertible and satisfies $\INSe{V^{(i)}M_{(i)}}  = [I |0]$ and $R_i M_{(i)}={\bf e}_{d_i+1}$. 
 
Now define $\INSe{X':=M_{(i)}^{-1}X \in \fq^{n \times t}}$. Then we have
$$
V^{(i)}_jX={\bf e}_jM_{(i)}^{-1}X=X_j'\mbox{ for $j \in[d_i]$}
$$
and 
$$
R_iX={\bf e}_{d_i+1}M_{(i)}^{-1}X=X_{d_{i+1}}'.
$$

\begin{lemma}\label{lm:syn}
	If $Z\in [I|0]^\perp$ and $Z_{d_i+1}\ne 0$ then
	$$
	w(LV^{(S)}M_{(i)}Z)\ge 2\gd +1.
	$$
\end{lemma}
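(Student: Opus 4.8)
The plan is to reduce the statement directly to the error-correction criterion of Theorem \ref{th:err} via the change of coordinates given by $M_{(i)}$. The first step is to rewrite the two hypotheses on $Z$ in terms of the original data of the instance. By construction of $M_{(i)}$ we have $V^{(i)}M_{(i)}=[I|0]$ and $R_iM_{(i)}=\be_{d_i+1}$; hence $Z\in[I|0]^\perp$ says exactly that $V^{(i)}M_{(i)}Z=0$, and $Z_{d_i+1}\neq 0$ says, using the convention $\be_{d_i+1}Z=Z_{d_i+1}$ recorded just above the lemma, that $R_iM_{(i)}Z\neq 0$.

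Next I would set $Z':=M_{(i)}Z$. The two identities above then read $V^{(i)}Z'=0$ and $R_iZ'\neq 0$, which is precisely the condition $Z'\in\cZ^{(i)}$. (Here we are in the Hamming-metric setting of this section, so $\cM=\fq^{n\times t}$ and the relevant set is $\cZ^{(i)}$ itself, not $\cZ^{(i)}_\gd$.)

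Finally I would invoke the characterization of a Hamming-metric $(\cI,\gd)$-ECIC (Theorem \ref{th:err}, equivalently the observation immediately preceding Remark \ref{remzdel}): since $L$ represents such a code, $w\!\left(LV^{(S)}Z'\right)\ge 2\gd+1$ for every $Z'\in\cZ^{(i)}$. Substituting back $Z'=M_{(i)}Z$ yields $w\!\left(LV^{(S)}M_{(i)}Z\right)\ge 2\gd+1$, as required.

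I do not expect a genuine obstacle here; the lemma is essentially a translation of the ECIC property through an invertible coordinate change. The only points requiring care are the bookkeeping that $[I|0]=V^{(i)}M_{(i)}$ (not $M_{(i)}V^{(i)}$) and that the subscript $d_i+1$ refers to a row, so that $\be_{d_i+1}Z=Z_{d_i+1}$, together with the remark that throughout this section $w$ denotes the Hamming weight so that the criterion applies to all of $\cZ^{(i)}$.
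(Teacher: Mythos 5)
Your proposal is correct and follows exactly the paper's own argument: translate the hypotheses through $M_{(i)}$ to see that $M_{(i)}Z\in\cZ^{(i)}$ (since $V^{(i)}M_{(i)}Z=[I|0]Z=0$ and $R_iM_{(i)}Z=\be_{d_i+1}Z=Z_{d_i+1}\ne 0$), then apply Theorem \ref{th:err}. No differences of substance.
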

\begin{proof}
	Let $Z\in [I|0]^\perp$ be such that $Z_{d_i+1}\ne 0$.
	Then $V^{(i)}M_{(i)} Z = [I|0] Z = 0$ and $R_iM_{(i)}Z = \be_{d_i+1}Z=Z_{d_i+1}\ne 0$ so $M_{(i)}Z \in \cZ^{(i)}$. The result now follows from Theorem \ref{th:err}.
\end{proof}

Let $L'=LV^{(S)}M_{(i)}$ and let $\overline{[s]}:=[n]\setminus [s]$. Consider the following two codes.
We define $\cC^{(i)}\INSe{\subset \fq^N}$ to be the column space of the matrix $[L'^{d_i+1} | L'^{\overline{[d_i+1]}}] \;\INSe{\in \fq^{N \times n}}$ and we define
$\cC_{(i)} \INSe{\subset \fq^N} $ to be the subspace of $\cC^{(i)}$ spanned by the columns of $L'^{\overline{[d_i+1]}}$.

For each $i\in[m]$, we have $\cC_{(i)}\subseteq\cC^{(i)} $ with $\dim (\cC^{(i)})=\dim (\cC_{(i)}) +1$. \INSe{As usual, for an $\fq$-linear code $C\in \fq^N$ we write 
	$C^\perp:=\{ y \in \fq^N : x \cdot y = 0 \}$ to denote its dual code. Then we have} ${\cC^{(i)}} ^ \perp \subseteq{\cC_{(i)}}^\perp$ with $\INSe{r_{i}=}\;\dim( {\cC_{(i)}}^\perp)=\dim ({\cC^{(i)}}^\perp) +1$ \INSe{for some $r_{i}$. Let } $H_{(i)}$ be a parity check matrix of $\cC_{(i)}$ of the form

\begin{gather}\label{eq:hi}
H_{(i)}=\left[ \begin{array}{r}
h_{(i)}\\
\hline
H^{(i)}
\end{array}\right] \in \fq^{r_i \times N},
\end{gather}

where $H^{(i)}$ is a parity check matrix of ${\cC^{(i)}}$ and $h_{(i)}\in {\cC_{(i)}} ^ \perp \setminus{\cC^{(i)}}^\perp$.

\INSe{Then}
$$
H_{(i)}L'^{d_i+1}=[s_{d_i+1},0,\dots,0]^T
$$
for some $s_{d_i+1}\in{\fq}\setminus\{0\}$.

\INSe{We now outline a procedure for decoding the demand $R_iX$ at the $i$th receiver, which is based on syndrome decoding. In the first step we compute syndrome, of $H_{(i)}$, in which is embedded a syndrome of $H^{(i)}$. In the second step a table of syndromes is computed for the code $\cC^{(i)}$. Finally, in the third step the output $R_iX$ is computed.}

\begin{itemize}
	
	\item[{\bf Step $I$:}] Compute
	
	\begin{equation}
	H_{(i)}(Y_{(i)}-L'^{[d_i]}X'_{[d_i]})=\INSe{\left[ \begin{array}{c} \alpha_i \\ \gb_i \end{array} \right]} \; \INSe{\in \fq^{r_{i}  \times t}}
	\end{equation}
	
	
	\item[{\bf Step $II$:}] Find $\gep\; \INSe{\in \fq^{N \times t}}$ with $w(\gep)\le\gd$ such that
	
	\begin{equation}\label{eq:eps}
	H^{(i)}\gep=\gb_i\;\INSe{\in \fq^{(r_{i}-1)  \times t}} .
	\end{equation}
	
	\item[{\bf Step $III$:}] Compute
	
	\begin{equation}
	\hat{X}_{d_i+1}=(\ga_i-\INSe{h_{(i)}\gep}) / s_{d_i+1}.
	\end{equation}
\end{itemize}

\begin{theorem}
	If $w(W_{(i)})\le \gd$ then the procedure above has output $\INSe{\hat{X}_{d_i+1}}=X'_{d_i+1}=R_iX$. 
\end{theorem}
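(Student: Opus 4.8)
The plan is to follow the three steps of the procedure, expressing each syndrome in terms of $X$ and the error $W_{(i)}$, and to verify that the error-dependent terms cancel in Step~$III$. Throughout I would write $L'=LV^{(S)}M_{(i)}$ and recall $X'=M_{(i)}^{-1}X$, so that $LV^{(S)}X=L'X'$ and, since the $i$th receiver knows $X'_{[d_i]}$ (its cache gives $V^{(i)}_jX=X'_j$ for $j\in[d_i]$), it can form
\[
Y_{(i)}-L'^{[d_i]}X'_{[d_i]}=L'^{d_i+1}X'_{d_i+1}+L'^{\overline{[d_i+1]}}X'_{\overline{[d_i+1]}}+W_{(i)} .
\]

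First I would compute the syndrome of Step~$I$. The columns of $L'^{\overline{[d_i+1]}}$ lie in $\cC_{(i)}$, hence are killed by $H_{(i)}$; combining this with $H_{(i)}L'^{d_i+1}=[s_{d_i+1},0,\dots,0]^T$ and the block form of $H_{(i)}$ in \eqref{eq:hi} yields
\[
\ga_i=s_{d_i+1}X'_{d_i+1}+h_{(i)}W_{(i)},\qquad \gb_i=H^{(i)}W_{(i)} .
\]
In particular $\gep=W_{(i)}$ satisfies $w(\gep)\le\gd$ and $H^{(i)}\gep=\gb_i$, so Step~$II$ always returns some vector; the subtlety is that this syndrome equation need not have a \emph{unique} bounded-weight solution, since $\cC^{(i)}$ may have minimum distance smaller than $2\gd+1$.

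The heart of the proof, and the step I expect to be the main obstacle, is to show that every admissible output $\gep$ of Step~$II$ satisfies $h_{(i)}\gep=h_{(i)}W_{(i)}$, so that the choice of $\gep$ is immaterial. Any such $\gep$ has $\gep-W_{(i)}\in\cC^{(i)}$ and, by the triangle inequality for $w$, $w(\gep-W_{(i)})\le 2\gd$. Now a codeword of $\cC^{(i)}$ not lying in $\cC_{(i)}$ can be written as $L'z$ with $z\in[I|0]^\perp$ and $z_{d_i+1}\ne0$; by Lemma~\ref{lm:syn}, $w(L'z)=w(LV^{(S)}M_{(i)}z)\ge 2\gd+1$. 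Hence every codeword of $\cC^{(i)}$ of weight at most $2\gd$ belongs to $\cC_{(i)}$, so $\gep-W_{(i)}\in\cC_{(i)}$, and since $h_{(i)}\in\cC_{(i)}^\perp$ we get $h_{(i)}(\gep-W_{(i)})=0$.

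Finally I would substitute into Step~$III$: using $s_{d_i+1}\ne0$,
\[
\hat X_{d_i+1}=\frac{\ga_i-h_{(i)}\gep}{s_{d_i+1}}=\frac{s_{d_i+1}X'_{d_i+1}+h_{(i)}W_{(i)}-h_{(i)}W_{(i)}}{s_{d_i+1}}=X'_{d_i+1},
\]
and then invoke the identity $X'_{d_i+1}={\bf e}_{d_i+1}M_{(i)}^{-1}X=R_iX$ recorded just before the lemma to conclude $\hat X_{d_i+1}=R_iX$. All of the content is in the penultimate paragraph: translating the $(\cI,\gd)$-ECIC property through the change of coordinates $M_{(i)}$ (Lemma~\ref{lm:syn}) to guarantee that the possibly many weight-$\le\gd$ solutions of the syndrome equation differ only by elements of $\cC_{(i)}$, which $h_{(i)}$ annihilates.
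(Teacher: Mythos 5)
Your proof is correct and follows essentially the same route as the paper's: compute $\alpha_i=s_{d_i+1}X'_{d_i+1}+h_{(i)}W_{(i)}$ and $\beta_i=H^{(i)}W_{(i)}$, observe that any Step~$II$ output $\gep$ satisfies $\gep-W_{(i)}\in\cC^{(i)}$ with $w(\gep-W_{(i)})\le 2\gd$, invoke Lemma~\ref{lm:syn} to force $\gep-W_{(i)}\in\cC_{(i)}$ so that $h_{(i)}$ annihilates it, and cancel in Step~$III$. Your write-up is in fact slightly more explicit than the paper's in spelling out why weight-$\le 2\gd$ codewords of $\cC^{(i)}$ must lie in $\cC_{(i)}$.
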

\begin{proof}
	We have
	
	\begin{eqnarray*}
	Y_{\INSe{(i)}}&=&LV^{(S)}X+W_{(i)}\\ &= & LV^{(S)}M_{(i)}M_{(i)}^{-1}X+W_{(i)}\\
	&=&L'X'+W_{(i)}\\ & =& L'^{[d_i]}X'_{[d_i]}+L'^{{\overline{[d_i+1]}}}X'_{{\overline{[d_i+1]}}}+L'^{d_i+1}X'_{d_i+1}+W_{(i)},
	\end{eqnarray*}
	
	and $$Y_{(i)}-L'^{[d_i]}X'_{[d_i]}=L'^{{\overline{[d_i+1]}}}X'_{{\overline{[d_i+1]}}}+L'^{d_i+1}X'_{d_i+1}+W_{(i)}.$$
	Then
	$$
	H_{(i)}(Y_{\INSe{(i)}}-L'^{[d_i]}X'_{[d_i]})=\INSe{\left[ \begin{array}{c} \alpha_i \\ \gb_i \end{array} \right]}
	$$
	where $\ga_i=s_{d_i+1} X_{d_i+1}+h_{(i)} W_{(i)}\; \INSe{\in \fq^{1 \times t}}$ and $\gb_i=H^{(i)} W_{(i)}\;\INSe{\in \fq^{(r_i-1) \times t}}$.
	
	Let $\gep \;\INSe{\in \fq^{N \times t}}$ have Hamming weight at most $\gd$, as in Step II. Then $W_{(i)}-\gep\in\cC^{(i)}$ and $w(W_{(i)}-\gep)\le 2\gd$, \INSe{which means} $W_{(i)}-\gep\in\cC_{(i)}$ \INSe{from} Lemma \ref{lm:syn}.
	\INSe{Therefore},
	$$
	(\ga_i-h_{(i)}\gep)/s_{d_i+1}=(s_{d_i+1} X'_{d_i+1}+ h_{(i)}(W_{(i)}-\gep))/ s_{d_i+1}=X'_{d_i+1}.
	$$
\end{proof}

\begin{example}
{ 
Let $q=2$, $m=n=4$, $t=1$, $\gd=1$ and $R_i={\bf e}_i$ for all $i \in [4]$ and $\cX^{(S)}=\FF_2^4$. 

Assume

$$
V^{(1)}=\left[\begin{array}{cccc}
	0&1&1&0\\
	0&0&1&0\end{array}\right], \, V^{(2)}=\left[\begin{array}{cccc}
	1&0&0&0\\
	0&0&1&1\end{array}\right],
$$

$$
V^{(3)}=\left[\begin{array}{cccc}
	1&0&0&0\\
	0&0&0&1\end{array}\right],\,
V^{(4)}=\left[\begin{array}{cccc}
	1&1&0&1\\
	0&1&1&0\end{array}\right].
$$
Then $$\cup_{i \in [4]} \cZ^{(i)} =\{[1 0 0 1],[1 0 0 0],[0 1 1 1],[0 1 0 0],[0 0 1 0],[0 1 1 0]\},$$
so that $\ga(\cI)=2$. In fact  $\langle [0 1 0 0],[0 0 1 0]\rangle\setminus\{0\}\subset \cup_{i \in [4]} \cZ^{(i)}$, and from the $\ga$-bound we have $\INSe{5}=N(2,3)\le\cN(\cI,1)$.
It can be checked that $\gk(\cI)=2$, and so from the $\kappa$-bound we have
$5=N(2,3)\geq \cN(\cI,1)$. 
Then
$$
L=\left[\begin{array}{cccc}
1& 0& 1& 0\\
0& 1& 1& 1\\
1& 1& 0& 0\\
1& 1& 1& 0\\
0& 0& 1& 0\end{array}\right]
$$
represents an optimal linear $(\cI,1)$-ECIC.

Let $X=[1 1 1 1]^T$. The sender broadcasts $LX$. Suppose one Hamming error occurs and $U_4$ receives the vector $Y_4=LX+W_{(4)}$, where
$$
W_{(4)}=[0 0 0 1 0].
$$

Then 
$$
Y_4=[0 1 0 0 1].
$$

Let 
$$
M_{(4)}=\left[\begin{array}{cccc}
1& 0& 1& 1\\
0& 0& 0& 1\\
0& 1& 0& 1\\
0& 0& 1& 0\end{array}\right] \quad\mbox{ and } \quad
L'=LM_{(4)}=\left[\begin{array}{cccc}
1& 1& 1& 0\\
0& 1& 1& 0\\
1& 0& 1& 0\\
1& 1& 1& 1\\
0& 1& 0& 1\end{array}\right]
$$

We obtain a parity check matrix of \INSe{$\cC_{(4)}$}, as in \eqref{eq:hi}

$$
H_{(4)}=\left[\begin{array}{ccccc}
0& 0& 0& 1& 1\\
1& 0& 0& 1& 1\\
0& 1& 0& 1& 1\\
0& 0& 1& 1& 1\end{array}\right].
$$

 \INSe{Applying Step I of our decoding algorithm} we obtain
$$
H_{(4)}(Y-L'^{[2]}X'_{[2]})=\left[\begin{array}{ccccc}
0& 0& 0& 1& 1\\
1& 0& 0& 1& 1\\
0& 1& 0& 1& 1\\
0& 0& 1& 1& 1\end{array}\right] 
\left[\begin{array}{c}
						1\\
						1\\
						1\\
						1\\
						1\end{array}\right]=
						\INSe{\left[\begin{array}{c}
						0\\
						1\\
						1\\
						1\end{array}\right]}.
$$

Therefore $\ga_4=0$ and $\gb_4=\INSe{[111]^T}$. Now from Step II, we obtain that the vector $\varepsilon=[0 0 0 0 1]^T$ is a solution of \eqref{eq:eps}  and in Step III  we obtain 
$$
\hat{X}_3=(0-[0 0 0 1 1] \cdot [0 0 0 0 1])/1=1=X_4.
$$

}

\end{example}

\begin{remark}
	\INSe{The above outlined decoding procedure extends that of \cite[Section VII]{Son1} to the ICCSI case, firstly via the use of the matrices $M_{(i)}$, which transform it to an ICSI problem \INSm{for the user $i$}. However, }
	Step III of our algorithm diverges from that one in a different sense: in \cite{Son1} it is required to solve the system $Y_i = L\bar X - \gep$ given $\bar X$ whose coordinates agree with those of $X$ in the side-information of the $i$th user, for any decoding. In our case a pre-computation is performed to determine $h_{(i)}$ as in (\ref{eq:hi}) and $s_{d_i+1}$. We do this by solving the system
	$$
	 h_{(i)}
	 \left[\begin{array}{ccc}
	L'^{d_i+1}& \vline&	L'^{{\overline{[d_i]}}}\end{array}\right]=[1,0,...,0].
	$$ 
\INSe{The computational complexities of both algorithms are similar, being dominated by Step II, where a low weight element of a coset of $\cC^{(i)}$ must be found.}
\end{remark}

\INSe{\subsection{Decoding for Rank-Metric Errors}}
In the model presented here, 
we assume that a matrix $Y$ is transmitted and that at any given receiver, a matrix of the form $Y + W$ is received. Therefore the decoding algorithm of the additive matrix channel as described in \cite{sil} may be considered. \INSe{We do not in fact necessarily assume that $L$ represents an $(\cI,\gd)$-ECIC. Instead we add redundancy by embedding the broadcast $LV^{(S)}X$ into a larger matrix with zeroes off the entries assigned to $LV^{(S)}X$. In this scheme, it is assumed that up to $r$ packets are randomly injected into the network, in the form of a matrix of rank $r$. It can be assumed that with high probability, the first $r$ rows of the error matrix are linearly independent.}

Given a $N\times d_S$ matrix $L$ over $\fq$ for an $\cI$-IC each $i$th receiver requires $LV^{(S)}$ and $LV^{(S)}X$ in order to retrieve its requested data
$R_iX$. 
Employing the method of \cite{sil}, we let 
$$
\INSe{P}=\left(\begin{array}{cc}
0_{v\times v} & 0_{v\times \ell}\\
0_{N\times v} & Q\end{array}\right),
$$
where $Q=LV^{(S)}X \in \fq^{N \times t}$ and $\ell=t$ if $L$ is known to each receiver and $Q=[LV^{(S)}|LV^{(S)}X] \in \fq^{N \times (d_S+t)}$ and $\INSe{\ell= d_S+t}$ if $LV^{(S)}$ is not known to all receivers.

\INSe{}

Given an error matrix $W$ of rank $r\leq v$, we write
$$
W=\left(\begin{array}{cc}
W_{11} & W_{12}\\
W_{21} & W_{22}
\end{array}\right),
$$ 
with $W_{11} \in \fq^{v \times v}$, $W_{21} \in \fq^{N \times v}$, $W_{12} \in \fq^{v \times t}$, $W_{22} \in \fq^{N \times t}$. If $W_{11}$ has rank $r$ then 
$$r= \rk(W_{11}) \leq \rk\left(\begin{array}{c}
W_{11} \\
W_{21} 
\end{array}\right) \leq \rk(W) = r ,$$
so the rows of $W_{21}$ are contained in the row space of $W_{11}$. Therefore, $TW_{11} = W_{21}$ for some $T \in \fq^{N \times v}$. Then  
$$r=\rk (W) = \rk (W_{11})+\rk (TW_{12}-W_{22})= r + \rk (TW_{12}-W_{22}),$$
so we must have $TW_{12}=W_{22}.$
The matrix $T$ can be easily computed, since the submatrices $W_{11},W_{21}$ are known to each receiver. Moreover, since $W_{12}$ is known, the decoder retrieves 
$Q = -TW_{12}+ W_{22}+Q$.

From Lemma \ref{lemdecode}, the matrix $L$ represents an $\fq$-linear $\cI$-IC if and only if for each $i \in [m]$ there exist vectors $\INSe{U} \in \fq^n ,A \in \fq^{d_i}$ and $B\in \fq^N$ such that 
$$R_i = AV^{(i)} - BLV^{(S)} \text{ and } \INSe{U} = AV^{(i)}.$$
Once $LV^{(S)}$ and $LV^{(S)}X$ is known at the $i$th receiver, its requested data $R_iX$ can be computed as follows.
\begin{enumerate}
	\item
	Choose $U \in \cX^{(i)}$. Equivalently, choose $A\in \fq^{d_i}$ and write $\INSe{U}= AV^{(i)}$.
	\item
	Solve $R_i + AV^{(i)} = BLV^{(S)}$ for some $B \in \fq^N$.
	\item
	Compute $R_iX = BY - A \Lambda^{(i)}$.
\end{enumerate} 
In practice, the decoder computes $[S | T]$, the reduced-row echelon form of the matrix
$$
\left[\begin{array}{ll}
V^{(i)} & \Lambda^{(i)}\\
LV^{(S)}       & Y 
\end{array}\right]
$$
and solves for $Z$ in $ZS = R_i$ to retrieve $R_i X = ZT$.
In particular, if $R_i= {\bf e}_j$ for some $j \in [N]$, then $R_i$ already appears as a row of
$P$, and the corresponding row of $Q$ gives the required vector sought.

\INSe{Note that the method of \cite{sil} assumes that the error matrix $W$ has its first $r$ rows linearly independent. This assumption is referred to by the authors
	as {\em error-trapping}.} 
In the event that 
$\rk (W_{11})<\rk \left(\begin{array}{c}
W_{11} \\
W_{21}
\end{array}\right)$, the decoder detects that error-trapping has failed to occur.
If     
$\rk (W_{11})=\rk \left(\begin{array}{c}
W_{11} \\
W_{21}
\end{array}\right) <\rk(W),$
the decoder does not detect that error-trapping has failed, so a decoding failure will occur.
As noted in \cite{sil} this probability is less than
$\frac{2r}{q^{1+v-r}}.$

\begin{remark}
	For the case $t>1$, if $L$ does represent a rank-metric $(\cI,\gd)$-ECIC, and $LV^{(S)}$ is known to each receiver in advance of the transmission, then the sender may broadcast
	$$
	\INSe{P}=\left[\begin{array}{cc}
	0_{v\times v} & 0_{v\times t}\\
	0_{N\times v} & LV^{(S)}X\end{array}\right] $$
	\INSe{While the $i$th client receives the noisy transmission:
	$$P+W=\left[\begin{array}{cc}
	W_{11} & W_{12}\\
	W_{21} & W_{22}+LV^{(S)}X\end{array}\right].
	$$
	If error-trapping has failed and this is detected at the $i$th decoder 
	then from Theorem \ref{th:err}, if $\rk(W_{22}) \leq \gd$ then $R_iX$ is uniquely retrievable from the submatrix $[W_{22}+LV^{(S)}X]$. 
		However, the existence of an {\em efficient} algorithm to compute $R_iX$ given the received matrix $[W_{22}+LV^{(S)}X]$ for an 
		{\em arbitrary} matrix $L$ representing an $(\cI,\gd)$-ECIC is unlikely.} 
\end{remark}

\begin{example}
	\INSm{
		Let $m=4,n=4,q=2$ and let $\cX^{(S)}=\fq^4$.
		Suppose that $\cX^{(i)}$ has dimension $d_i=1$ for each $i \in \{1,...,4\}$.
		Let $\cI$ be the instance defined by 
		user side-information
		$$
		V^{(1)} = \left[\begin{array}{ccccc} 
		0 & 1 & 0 & 0 
		\end{array}\right],
		V^{(2)} = \left[\begin{array}{ccccc} 
		0 & 0 & 1 & 0
		\end{array}\right],
		V^{(3)} = \left[\begin{array}{ccccc} 
		0 & 0 & 0 & 1  
		\end{array}\right],        
		V^{(4)} = \left[\begin{array}{ccccc} 
		1 & 0 & 0 & 0 
		\end{array}\right],
		$$
		and requests 
		$$R_1=[1000],R_2=[0100],R_3=[0010],R_4=[0001].$$
		It can be checked that $\gk(\cI)=3$ and that $$L= \left[\begin{array}{ccccc} 
		1 & 1 & 0 & 0 \\
		0 & 1 & 1 & 0 \\
		0 & 0 & 1 & 1
		\end{array}\right]$$ is a matrix corresponding to an $\cI$-IC.
		Let $v=2$, $t=1$, $X=[1010]$ and suppose that $L$ is known to the receivers. Then the matrix $P$ is given by
		$$
		P=\left(\begin{array}{cc}
		0_{2\times 2} & 0_{2\times 1}\\
		0_{3\times 2} & LX\end{array}\right).
		$$
		Now, suppose a user receives the matrix
		$$
		P+W=\left[\begin{array}{ccc}
		1 & 0 & 1\\
		1 & 1 & 0\\
		0 & 0 & 1\\
		0 & 1 & 0\\
		1 & 0 & 0
		\end{array}\right]; \mbox{ which implies } W=\left[\begin{array}{ccc}
		1 & 0 & 1\\
		1 & 1 & 0\\
		0 & 0 & 0\\
		0 & 1 & 1\\
		1 & 0 & 1
		\end{array}\right]
		$$
		Clearly, the error-trapping has succeeded, since the $2 \times 2$ upper left-hand submatrix of $W$ has rank 2. 
		Then we have 
		$$
		W_{11}=\left[\begin{array}{cc}
		1 & 0 \\
		1 & 1
		\end{array}\right]\mbox{, }W_{21}=\left[\begin{array}{cc}
		0 & 0 \\
		0 & 1\\
		1 & 0
		\end{array}\right],
		$$
		from which we determine 
		$$
		T=\left[\begin{array}{cc}
		0 & 0 \\
		1 & 1 \\
		1 & 0
		\end{array}\right].
		$$
	From the bottom right-hand part of $P+W$ we compute
		$$
		\left[\begin{array}{c}
		1  \\
		0 \\
		0 
		\end{array}\right]- TW_{12}=\left[\begin{array}{c}
		1  \\
		0 \\
		0 
		\end{array}\right]-\left[\begin{array}{cc}
		0 & 0 \\
		1 & 1 \\
		1 & 0
		\end{array}\right]\left[\begin{array}{c}
		1 \\
		0
		\end{array}\right]=\left[\begin{array}{c}
		1  \\
		1 \\
		1 
		\end{array}\right]=LX.
		$$
		At this point the client can decode its demanded message.
	}
\end{example}

\section{Conclusions}

Permitting coded-side information in the index coding problem offers more potential for applications. While the connections to graphs and hypergraphs are no longer apparent as in the classical case, many of the associated combinatorial characterisations have algebraic interpretations. The ICCSI problem in some sense may be viewed as a $q$-analogue of the ICSI problem: the index set of size $d_i$ of the side information of the $i$th user now being replaced by a vector space of dimension $d_i$. This viewpoint means that most bounds on the optimal length of an index code, both for noiseless and noisy channels have analogues in the more general setting of the ICCSI problem. Although it is likely that error-correcting decoding schemes for multicast network coding can be adapted for their index coding equivalents, the design of efficient error-correcting decoding algorithms for index codes remains a challenging problem.

\section{Acknowledgement } 
The authors thank John Sheekey for helpful discussions. \INSe{The authors are grateful to the anonymous referees whose comments led to a great improvement in the presentation of this paper. This work was started following a research visit partially funded by the ESF COST Action IC1104 {\em Random Network Coding and Designs Over $GF(q)$}.}

\end{document}